\newcommand{\cnot}{\text{\fontfamily{cmtt}\selectfont CNOT}}
\newcommand{\swap}{\text{\fontfamily{cmtt}\selectfont SWAP}}
\newcommand{\SuchThat}{~\vert~}
\newcommand{\Distance}{\delta}
\newcommand{\Sphere}{R}
\newcommand{\Path}{\rightsquigarrow}
\newcommand{\Diameter}{\mathsf{diam}}
\newcommand{\BreakingPoint}{n_0}
\newcommand{\NumPermutationCycles}{c}
\newcommand{\Quotient}{q}
\newcommand{\Remainder}{r}
\newcommand{\Essential}{\varepsilon}
\newcommand{\Embedding}{\phi}
\newcommand{\OrbitsOfSymmetriesOnSphere}{\mathcal{D}}
\newcommand{\OrbitsOfSymmetriesWithEssential}{\mathcal{E}}
\newcommand{\OrbitsOfSymmetriesOnSphereWithEssential}{\mathcal{C}}
\newcommand{\InverseTransposeOrbitCount}{\kappa}
\newcommand{\Bijection}{\Phi}
\newcommand{\TVShort}[1]{\llbracket #1\rrbracket}
\newcommand{\Group}{\mathcal{G}}
\newcommand{\Nats}{\mathbb{N}}
\newcommand{\NatsPlus}{\Nats^+}
\newcommand{\GL}{\mathsf{GL}}
\newcommand{\SymmetricGroup}{\mathcal{S}}
\newcommand{\F}{\mathbb{F}}
\newcommand{\TransvectionGenerators}{\Sigma}
\newcommand{\Stabilizer}{\mathsf{stab}}
\newcommand{\Automorphisms}{\mathsf{aut}}
\newcommand{\Isometries}{\mathsf{isom}}
\newcommand{\CyclicGroup}{\mathcal{C}}
\newcommand{\J}{\mathcal{J}}
\newcommand{\PermutationMatrix}{P}
\newcommand{\TransvectionMatrix}{T}
\newcommand{\IdentityMatrix}{I}
\newcommand{\Queue}{\mathcal{Q}}
\newcommand{\Push}{\mathsf{push}}
\newcommand{\Pop}{\mathsf{pop}}
\newcommand{\DistanceMap}{\mathsf{dist}}
\newcommand{\SphereSize}{\mathsf{SphSize}}
\newcommand{\Representative}{\mathsf{Rep}}
\newcommand{\Paragraph}[1]{\smallskip\noindent{\bf #1}}
\newcommand{\SubParagraph}[1]{\smallskip\noindent{\em #1}}
\newcommand\numberthis{\addtocounter{equation}{1}\tag{\theequation}}
\newlist{compactenum}{enumerate}{3} 
\setlist[compactenum]{label=(\arabic*), nosep,leftmargin=*}
\crefname{compactenumi}{Item}{Items}
\newlist{compactitem}{itemize}{3} 
\setlist[compactitem]{label=$\bullet$, nosep,leftmargin=*}
\crefname{conjecture}{Conjecture}{Conjectures}
\begin{document}
\title{On Exact Sizes of Minimal CNOT Circuits}
%
%
\author{
Jens Emil Christensen\inst{1} \and
S\o ren Fuglede J\o rgensen\inst{2}\and \\
Andreas Pavlogiannis\inst{1}
\and
Jaco van de Pol\inst{1}
}
\authorrunning{J. E. Christensen  et al.}
%
\institute{Aarhus University, Aarhus, Denmark \and 
Kvantify ApS, Copenhagen, Denmark
\\
\email{202005655@post.au.dk}\\
\email{sfj@kvantify.dk}\\
\email{\{pavlogiannis,jaco\}@cs.au.dk}
}
\maketitle              

\begin{abstract}
Computing a minimum-size circuit that implements a certain function is a standard optimization task.
We consider circuits of \cnot\ gates, which are fundamental binary gates in reversible and quantum computing.
Algebraically, \cnot\ circuits on $n$ qubits correspond to $\GL(n,2)$, the general linear group over the field of two elements,
and circuit minimization reduces to computing distances in the Cayley graph $G_n$ of $\GL(n,2)$ generated by transvections.
However, the super-exponential size of $\GL(n,2)$ has made its exploration computationally challenging.
\\[1em]
In this paper, we develop a new approach for computing distances in $G_n$, allowing us to synthesize minimum circuits that were previously beyond reach (e.g., we can synthesize optimally all circuits over $n=7$ qubits).
Towards this, we establish two theoretical results that may be of independent interest.
First, we give a complete characterization of all isometries in $G_n$ in terms of
(i)~permuting qubits and
(ii)~swapping the arguments of all \cnot\ gates.
Second, for any fixed $d$, we establish polynomials in $n$ of degree $2d$ that characterize the size of spheres in $G_n$ at distance $d$ from the identity, as long as $n\geq 2d$.
With these tools, we revisit an open question of [Bataille, 2020] regarding the smallest number $n_0$ for which the diameter of $G_{n_0}$ exceeds $3(n_0-1)$.
It was previously shown that $6\leq n_0 \leq 30$, a gap that we tighten considerably to $8\leq n_0 \leq 20$.
We also confirm a conjecture that long cycle permutations lie at distance $3(n-1)$, for all $n\leq 8$, extending the previous bound of $n\leq 5$.

\keywords{Linear reversible circuits \and Cayley graphs \and  Circuit optimization \and Quantum computing.}
\end{abstract}
\section{Introduction}\label{sec:intro}

\cnot\ circuits, also known as linear reversible circuits, are fundamental in reversible and quantum computing.
A \cnot\ gate operates on two inputs, a control bit $c$ and a data bit $d$, having the effect $\cnot(c,d)=(c,c\oplus d)$, i.e. $d$ is negated if $c$ is on.
In quantum computing, the effect of a \cnot\ gate is extended to linear combinations of qubits, and is crucial to create entanglement, as in common gate sets of theoretical and practical interest, \cnot s are the only non-unary gates~\cite{Dawson2006}.
In physical realizations, executing binary \cnot\ gates is a major cause of noise~\cite{Linke2017}.
As such, the problem of reducing the \cnot\ count of a circuit, or finding an equivalent circuit with minimal \cnot-count, is an active research topic~\cite{Schneider2022ASE,DBLP:conf/rc/MeuliSM18,9914638,Amy2018}.

The end-to-end function of a \cnot-circuit $C$ on $n$ qubits is captured by a $n\times n$ parity matrix $M$. 
Starting from the identity matrix $\IdentityMatrix_n$, each \cnot($c,d$) adds the $c$-th row to the $d$-th row. 
Then, an optimal circuit $C'$ for $M$ corresponds to the minimal number of row additions required to obtain $M$ from $\IdentityMatrix_n$ (see \cref{fig:example}).
Due to their theoretical elegance and practical importance, the optimization specifically of \cnot\ circuits has received special attention, e.g., via SAT solvers~\cite{DBLP:conf/ecai/ShaikP24},
and heuristics~\cite{Patel2008,10.1145/3474226,de2021reducing,DBLP:conf/rc/BrugiereBVMA20,Kazuo2002} (lacking optimality guarantees in general).

\begin{figure}[t]
\begin{subfigure}[b]{0.45\textwidth}
\scalebox{0.9}{
\hspace{0.8cm}
\Qcircuit @C=1em @R=.7em {
             \lstick{q_0} & \targ    & \ctrl{2} & \targ     & \qw         & \ctrl{1} & \qw      & \ctrl{1} & \qw \\
             \lstick{q_1 }     & \qw      & \qw      &\ctrl{-1} & \ctrl{1}     & \targ    & \ctrl{1} & \targ    & \qw \\
             \lstick{q_2} & \ctrl{-2} & \targ &\qw       & \targ     & \qw      & \targ    & \qw      & \qw \\
        }}
    \caption*{Original circuit}
\end{subfigure}
\hfill
\begin{subfigure}[b]{0.2\textwidth}
\centering
$
\begin{bmatrix}
    1 & 1 & 1 \\
    0 & 1 & 0 \\
    0 & 1 & 1
\end{bmatrix}
$    \caption*{Parity matrix}
    \end{subfigure}
\hfill
\begin{subfigure}[b]{0.21\textwidth}
\scalebox{0.9}{
\hspace{0.6cm}
\centering
\Qcircuit @C=1em @R=.7em {
     \lstick{q_0}  & \qw     & \targ  & \qw \\
     \lstick{q_1 } & \ctrl{1}   & \qw  & \qw \\
     \lstick{q_2}  & \targ     & \ctrl{-2}  & \qw \\
}}
\caption*{Optimal circuit}
    \end{subfigure}
\hfill
\medskip
    \caption{CNOT circuit optimization using the parity matrix.}
    \label{fig:example}
\end{figure}

Algebraically, the \cnot\ operators can be viewed as transvections, a set of generators of the linear group of $n\times n$-matrices over ${\mathbb F}_2$, i.e., $\GL(n,2)$. 
The size of an optimal circuit for a matrix $M$ corresponds to the distance of $M$ from $\IdentityMatrix_n$ in the corresponding Cayley graph $G_n$.
The diameter of $G_n$ corresponds to the size of the largest optimal circuit on $n$ qubits and grows as
$\Theta(n^2/\log{n})$~\cite{Patel2008}.
The relation of distances to optimal circuits has spurred interest in exploring Cayley graphs for a small number of qubits, and computing their diameter~\cite{Bataille2020,Bataille2022}.

\Paragraph{Contributions.}
Our main contributions are as follows.

\begin{compactenum}
\item We develop Isometry BFS as a general, breadth-first exploration of the Cayley graph $G$ of an arbitrary group $\Group$ generated by a set of generators, based on general isometries $\J$.
This allows us to store a single representative from each orbit of $\J$, reducing the memory footprint of the exploration to roughly $O(\lvert\Group/\J\rvert)$.
The lower memory also allows one to store $G$ as a database for looking up the shortest products generating an element of $\Group$\footnote{Techniques resembling ours were developed recently specifically for the Clifford group in quantum computing, covering the Cayley graph over $6$ qubits~\cite{Bravy-6qubit}.}.
For our \cnot\ case, Isometry BFS enables us to synthesize optimally all \cnot\ circuits over $n=7$ qubits, extending the previous bound of $n=5$~\cite{Bataille2022}.
\item We revisit an open question of~\cite{Bataille2022} regarding the smallest number $n_0$ for which the diameter of $G_{n_0}$ exceeds $3(n_0-1)$.
It was previously shown that  $6\leq n_0 \leq 30$.
We tighten this gap considerably, to $8\leq n_0 \leq 20$ effectively halving the previous one.
\item We also revisit a conjecture that permutation matrices of $\GL(n,2)$ whose cycle types consist of $p$ cycles lie at distance $3(n-p)$ in $G_n$~\cite{Bataille2020}.
We confirm the conjecture for all $n\leq 8$, extending the previous bound of $n\leq 5$.
Since a \swap\ gate can be implemented by $3$ \cnot\ gates, we rule out the possibility that \swap\ circuits  can be optimized by passing to \cnot\ circuits, for all $n\leq 8$.
\end{compactenum}

\Paragraph{Technical contributions.}
Towards our main contributions above, we establish a few technical results that might be of independent interest.
\begin{compactenum}
\item We establish a lower bound on the diameter of any Cayley graph as a function of its order and the sizes of its spheres at distances $1,\dots, k$, for any arbitrary $k$.
This generalizes (and strengthens) an argument made earlier for the special case of $k=1$~\cite{Patel2008}.
\item We reveal a special structure of the isometry group of $G_n$.
Intuitively, interpreting each element of $G_n$ as a \cnot\ circuit, we show that any isometry can be obtained by 
(i)~the application of a transpose-inverse map, which swaps the control and target qubits of each \cnot\ gate, followed by
(ii)~a permutation of all the qubits of the circuit.
\item For any fixed $d$, we establish polynomials in $n$ of degree $2d$, and prove that for $n \geq 2d$, they coincide with the size of the spheres of $G_n$ at distance $d$ from the identity.
\item We prove that the $3(n-p)$ conjecture for permutation matrices (see Contribution (3) above) collapses to its special case of $p=1$:~the conjecture holds for all permutations if and only if it holds for the long cycles (i.e., permutations consisting of a single cycle).
\end{compactenum}

Due to space constraints, formal proofs are delegated to the Appendix.
\section{Preliminaries}\label{sec:preliminaries}

In this section we establish general notation, and recall the well-known group structure of \cnot\ circuits~\cite{Bataille2022}.
Throughout the paper, we consider finite groups.

\subsection{A Group Structure of \cnot\ Circuits}\label{subsec:group_structure_of_cnot}

We start with a common, group-theoretic description of \cnot\ circuits.

\Paragraph{General notation.}
Given a natural number $n\in \Nats$, we let $[n]=\{1,\dots, n \}$.
A partition of $n$ is a sequence of positive natural numbers $(n_i)_i$ such that $\sum_i n_i=n$.
We primarily consider $n\times n$ matrices over the field of two elements
 $\F_2=\{0,1\}$, where addition and multiplication happen modulo $2$.
We index the rows and columns of an $n\times n$ matrix from $1$ to $n$.
We let $\IdentityMatrix_n$ be the identity $n\times n$ matrix, and let $e_i$ be the $i$-th standard basis column vector, i.e., the $i$-th column of $\IdentityMatrix_n$.

\Paragraph{Permutations.}
Let $\SymmetricGroup_n$ be the symmetric group of bijections on $[n]$.
Given a permutation $\sigma\in \SymmetricGroup_n$, we denote by $\NumPermutationCycles(\sigma)$ the number of disjoint cycles composing $\sigma$, including cycles of length 1, i.e., $\NumPermutationCycles(\sigma)$ is the length of the cycle type of $\sigma$.
We call $\sigma$ a \emph{long cycle} if $\NumPermutationCycles(\sigma)=1$.
A permutation $\sigma\in \SymmetricGroup_n$ can be represented as a permutation matrix $\PermutationMatrix_{\sigma}$ whose columns are $\PermutationMatrix_{\sigma}=[e_{\sigma(1)},...,e_{\sigma(n)}]$.
For a matrix $M$, the product $\PermutationMatrix_{\sigma}M$ is the result of permuting the rows of $M$ by $\sigma$,
while the product $M\PermutationMatrix_{\sigma}$ is the result of permuting the columns of $M$ by $\sigma^{-1}$.
The set of permutation matrices is closed under multiplication, and forms a group isomorphic to $\SymmetricGroup_n$.
The inverse of $\PermutationMatrix_{\sigma}$ is its transpose, $\PermutationMatrix^{\top}_{\sigma}=\PermutationMatrix_{\sigma}^{-1}=\PermutationMatrix_{\sigma^{-1}}$.
On individual matrix entries,
\[
(\PermutationMatrix_{\sigma}M)[i,j]=M[\sigma^{-1}(i),j] \quad \text{and } \quad (M\PermutationMatrix_{\sigma})[i,j]=M[i,\sigma(j)]
\]
which implies the following equality, that becomes useful later:
\[
(\PermutationMatrix_{\sigma}M\PermutationMatrix_{\sigma}^{-1})[i,j]=M[\sigma^{-1}(i),\sigma^{-1}(j)].
\numberthis\label{eq:permutation_conjugation}
\]

\Paragraph{Transvections.}
Consider two distinct $i,j \in [n]$, and let $\Delta_{i,j}$ be the matrix containing a single $1$ in position $(i,j)$ and which is $0$ elsewhere.
A \emph{transvection} is a matrix $\TransvectionMatrix_{i,j} = \IdentityMatrix_n + \Delta_{i,j}$.
Given a matrix $M$, the product $\TransvectionMatrix_{i,j} M$ results in adding the $j$-th row of $M$ to the $i$-th row of $M$.
In particular, if $u\in \mathbb{F}_2^n$ is a column vector representing the state of a (classical) bit-register, 
then $\TransvectionMatrix_{i,j} u$ performs the \cnot \ operation with $j$ as control and $i$ as target on the register.
Transvections enjoy the following straightforward properties (see e.g.,~\cite[Proposition~1]{Bataille2022}.)

\begin{restatable}{lemma}{lemtransvectionidentities}\label{lem:transvection_identities}
The following relations on transvections hold:
\begin{multicols}{2}
\begin{compactenum}
\item\label{item:transvection_identities1} $\TransvectionMatrix_{i,j}^2=\IdentityMatrix$.
\item\label{item:transvection_identities2} $(\TransvectionMatrix_{i,j}\TransvectionMatrix_{j,k})^2=\TransvectionMatrix_{i,k}$ for $i\neq k$.
\item\label{item:transvection_identities2.5} $(\TransvectionMatrix_{i,j}\TransvectionMatrix_{k,i})^2=\TransvectionMatrix_{k,j}$ for $j\neq k$.
\item\label{item:transvection_identities3} $(T_{i,j}T_{j,i})^2=T_{j,i}T_{i,j}$.
\item\label{item:transvection_identities4} $(\TransvectionMatrix_{i,j}\TransvectionMatrix_{k,l})^2=I$ for $i\neq l$ and $j\neq k$.
\item\label{item:transvection_identities5} $\TransvectionMatrix_{i,j}\TransvectionMatrix_{j,i}\TransvectionMatrix_{i,j}=\TransvectionMatrix_{j,i}\TransvectionMatrix_{i,j}\TransvectionMatrix_{j,i}=\PermutationMatrix_{(i,j)}$.
\end{compactenum}
\end{multicols}
\end{restatable}

\Paragraph{Transvections are generators of $\GL(n,2)$.}
We study the general linear group $\GL(n,2)$, consisting of $n\times n$ invertible matrices over $\mathbb{F}_2$.
It is known that any matrix can be brought into reduced row echelon form via elementary row operations, namely, row switching, row multiplication and row addition (e.g., by using the Gauss--Jordan algorithm). 
Since our only non-zero scalar is $1$, row multiplication is redundant, while row addition corresponds to multiplying on the left by the corresponding transvection.
Finally,  \cref{item:transvection_identities5} of \cref{lem:transvection_identities} implies that row swaps can be performed via three row additions (i.e., applying three transvections).
It thus follows that $\TransvectionGenerators_n:=\{\TransvectionMatrix_{i,j} \mid i,j\in[n], i\neq j\}$ generates $\GL(n,2)$.

\begin{wrapfigure}[10]{r}{0.5\textwidth}%
\vspace{-2.5ex}
    \centering
        \scalebox{0.8}{$\begin{tikzcd}[ampersand replacement=\&]
            \&{\begin{bmatrix}1 & 1 \\ 0 & 1 \end{bmatrix} }\arrow[r, shift right] \arrow[ld, shift right,  "T_{1,2}", swap] \&\begin{bmatrix}1 & 1 \\ 1 & 0 \end{bmatrix}\arrow[l, shift right, "T_{2,1}", swap] \arrow[rd, shift right]\\
            \begin{bmatrix}1 & 0 \\ 0 & 1 \end{bmatrix} \arrow[ru, shift right,] \arrow[rd, shift right, "T_{2,1}", swap] \& \& \& \begin{bmatrix}0 & 1 \\ 1 & 0 \end{bmatrix} \arrow[lu, shift right, "T_{1,2}", swap] \arrow[ld, shift right]\\
            \&\begin{bmatrix}1 & 0 \\ 1 & 1 \end{bmatrix} \arrow[lu, shift right] \arrow[r, shift right, "T_{1,2}", swap]\& \begin{bmatrix}0 & 1 \\ 1 & 1 \end{bmatrix} \arrow[l, shift right,] \arrow[ru, shift right, "T_{2,1}", swap]
        \end{tikzcd}$}
\caption{\label{fig:cayleygraph}
Cayley graph for $\GL(2,2)=\langle \TransvectionGenerators_2 \rangle$.
}
\end{wrapfigure}%

\Paragraph{Cayley graphs.}
Let $\Group=\langle S \rangle$ be a finite group generated by $S$.
The \emph{Cayley graph} of $\Group$ with respect to $\langle S \rangle$ is a (generally, directed) graph $G=(V, E)$, where $V=\Group$ and $E=\{(g,sg)\SuchThat g\in \Group, s\in S\}$.
We will assume throughout that the generating sets are \emph{symmetric}, i.e., that if $s \in S$, then $s^{-1} \in S$, meaning that the graph $G$ can be treated as an undirected graph.

It is useful to make a distinction between elements of $\Group$ and formal products over the generators in $S$, which are words over $S$.
We say that a word $w\in S^*$ \emph{evaluates} to $g\in \Group$ if $w$, interpreted as a product of generators, equals $g$. 
The \emph{length} of a word $w=s_1,...,s_d\in S^*$, is $d$.

Given two elements $g,h\in V$, the \emph{distance} $\Distance(g,h)$ from $g$ to $h$ is the length of a shortest path from $g$ to $h$ in $G$.
Using our notation on words, $\Distance(g,h)=d$ is the length of a shortest word $w=s_1, \dots, s_d$ such that $h=s_d\cdots s_1g$. The distance defines a metric in $G$.
With a small abuse of notation, we write $\Distance(g)$ for $\Distance(e, g)$, where $e$ is the identity of $\Group$, and refer to $\Distance(g)$ as the \emph{distance of $g$}.
The \emph{diameter} of $G$, denoted $\Diameter$, is the maximum distance between its vertices.
Given some $g\in V$ and $d\in \Nats$, the \emph{sphere} of radius $d$ centered at $g$ is the set of vertices $\Sphere(d,g)=\{ h\in V\mid \Distance(g,h)=d \}$.

\subsection{\cnot\ circuit optimization.}\label{subsec:circuit_optimization}

\Paragraph{Cayley graphs of $\GL(n,2)$.}
In this paper, we write $G_n$ for the Cayley graph of $\GL(n,2)$ with respect to the set of transvections as its generating set. See \cref{fig:cayleygraph} for a visualization for $n=2$.
Notice that, due to \cref{item:transvection_identities1} of \cref{lem:transvection_identities}, the generating set is symmetric.
As $G_n$ is vertex-transitive, its diameter can be defined as the maximum distance of a matrix $M$ from the identity $\IdentityMatrix_n$.
We write $\Diameter_n$ for the diameter of $G_n$, and $\Sphere_n(d)$ as a shorthand for the sphere $\Sphere(d, \IdentityMatrix_n)$ in $G_n$.

\Paragraph{\cnot\ circuit optimization.}
In the context of \cnot\ circuit synthesis, the following optimization question arises naturally: \emph{given some $M\in \GL(n,2)$, what is the smallest circuit (i.e., one containing the smallest number of \cnot\ gates) that implements $M$?}
It is not hard to see that the answer is the distance $\Distance(M)$, while a shortest path $\IdentityMatrix_n\Path M$ encodes such a minimal circuit for $M$.
Thus, the optimization question can be approached computationally via a BFS on $G_n$.
Note, however, that the size of $G_n$ grows super-exponentially in $n$, in particular
\[
|\GL(n,2)|=\prod_{i=0}^{n-1}(2^n-2^i)=2^{\Omega(n^2)}
\numberthis\label{eq:size_of_GL}
\]
making this approach only work for small $n$.
E.g.,~\cite{Bataille2020,Bataille2022} reports to only handle cases of $n\leq 5$.
We elevate this computational approach to handling all $n\leq 7$.

\Paragraph{The diameter of $G_n$.}
One interesting question that is also relevant to \cnot\ circuit synthesis concerns the diameter $\Diameter_n$ of $G_n$.
This captures the length of a largest optimal circuit, i.e., one that cannot be implemented with fewer \cnot\ gates.
Lower bounds on $\Diameter_n$ reveal how hard the synthesis problem can become, while upper bounds on $\Diameter_n$ confine the search space for the optimal circuit.
The computational experiments in~\cite{Bataille2022} reveal that $\Diameter_n=3(n-1)$ for all $n\leq 5$, making it tempting to assume that this pattern holds for all $n$.
This, however, is not true, as the diameter grows super-linearly in $n$~\cite{Patel2008}, in particular
\[
\Diameter_n \geq \frac{n^2-n}{\log_2(n^2-n+1)}.
\numberthis\label{eq:diameter_quadratic_lower_bound}
\]
It can be readily verified that the smallest $n$ for which the right hand side of \cref{eq:diameter_quadratic_lower_bound} becomes larger than $3(n-1)$ is $n=30$.
Since \cref{eq:diameter_quadratic_lower_bound} only states a lower bound, in \cite{Bataille2022} the following question is stated as open:
\emph{what is the value $\BreakingPoint$ of the smallest $n$ for which $\Diameter_{n}> 3(n-1)$?}
The current state of affairs places $6\leq \BreakingPoint\leq 30$.
We narrow this gap to $8\leq \BreakingPoint\leq 20$, which has half the size of the previous one.

\Paragraph{The distances of permutations.}
One notable and useful class of \cnot\ circuits is those that implement permutations $\PermutationMatrix_{\sigma}$.
\emph{Given some permutation $\sigma\in\SymmetricGroup_n$, what is the smallest circuit that implements $\PermutationMatrix_{\sigma}$?}
The following lemma gives an upper bound in terms of the number of disjoint cycles $\NumPermutationCycles(\sigma)$.

\begin{restatable}[\cite{Bataille2022}, Proposition~2]{lemma}{lempermutationdistanceupperbound}\label{lem:permutation_distance_upper_bound}
For any permutation $\sigma\in\SymmetricGroup_n$, we have that $\Distance(\PermutationMatrix_{\sigma})\leq 3(n-\NumPermutationCycles(\sigma))$.
\end{restatable}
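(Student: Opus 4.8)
The plan is to exhibit an explicit word over $\TransvectionGenerators_n$ of length exactly $3(n-\NumPermutationCycles(\sigma))$ that evaluates to $\PermutationMatrix_\sigma$; since $\Distance(\PermutationMatrix_\sigma)$ is by definition the length of a \emph{shortest} such word, this yields the claimed bound.

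First I would decompose $\sigma$ into its disjoint cycles $\sigma=\gamma_1\cdots\gamma_c$ with $c=\NumPermutationCycles(\sigma)$, where $\gamma_\ell$ acts on a support of size $n_\ell$ and $\sum_\ell n_\ell = n$. Since permutation matrices of disjoint cycles commute and $\sigma\mapsto\PermutationMatrix_\sigma$ is a group homomorphism, $\PermutationMatrix_\sigma=\prod_\ell \PermutationMatrix_{\gamma_\ell}$. Hence it suffices to construct, for each block, a word of length $3(n_\ell-1)$ evaluating to $\PermutationMatrix_{\gamma_\ell}$, and then concatenate: the concatenation is a word evaluating to $\PermutationMatrix_\sigma$, of total length $\sum_\ell 3(n_\ell-1)$.

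Next I would treat a single cycle $\gamma=(a_1\ a_2\ \cdots\ a_k)$ of length $k$. Writing it as a product of $k-1$ transpositions, e.g.\ $\gamma=(a_1\ a_2)(a_2\ a_3)\cdots(a_{k-1}\ a_k)$ (or the analogous factorization matching the left/right convention), gives $\PermutationMatrix_\gamma$ as a product of $k-1$ transposition matrices $\PermutationMatrix_{(a_m,a_{m+1})}$. By \cref{item:transvection_identities5} of \cref{lem:transvection_identities}, each transposition matrix $\PermutationMatrix_{(i,j)}$ equals $\TransvectionMatrix_{i,j}\TransvectionMatrix_{j,i}\TransvectionMatrix_{i,j}$, a word of length $3$ over $\TransvectionGenerators_n$ (this is exactly the familiar ``a \swap\ costs three \cnot s''). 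So $\PermutationMatrix_\gamma$ is reached by a word of length $3(k-1)$, and a fixed point ($k=1$) costs $0=3(k-1)$. Summing over the $c$ cycles gives $\sum_\ell 3(n_\ell-1)=3\bigl(\sum_\ell n_\ell - c\bigr)=3(n-\NumPermutationCycles(\sigma))$, as desired.

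I do not expect a genuine obstacle here; the construction is routine. The only point needing a moment's care is bookkeeping with the left-versus-right multiplication convention used for distances (recall $\Distance(g,h)=d$ means $h=s_d\cdots s_1 g$), so that both the transposition factorization inside each cyclic block and the concatenation across blocks are composed in the correct order. Because the blocks act on pairwise disjoint coordinates, the cross-block part is automatic, and within a block one simply selects the transposition factorization of $\gamma_\ell$ that matches the chosen convention.
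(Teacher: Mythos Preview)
Your proof is correct and is the standard construction. Note, however, that the paper does not actually supply its own proof of this lemma: it is quoted as \cite[Proposition~2]{Bataille2022} and used as a black box. Your argument---decompose $\sigma$ into its $\NumPermutationCycles(\sigma)$ disjoint cycles, write each $k$-cycle as a product of $k-1$ transpositions, and invoke \cref{item:transvection_identities5} of \cref{lem:transvection_identities} to realize each transposition by three transvections---is precisely the intended one (and is the same mechanism the paper later exploits in the proof of \cref{thm:permutation_distance}, only run in the opposite direction to merge cycles rather than split them). Your remark about the left/right convention is prudent but, as you note, immaterial here since the disjoint blocks commute and within a block either transposition factorization works.
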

Similarly to the computational experiments for $\Diameter_n$, in \cite{Bataille2020} it is observed that
$
\Distance(\PermutationMatrix_{\sigma})= 3(n-\NumPermutationCycles(\sigma))
$
for all $n\leq 5$ and $\sigma\in \SymmetricGroup_n$, leading to the following conjecture.

\begin{conjecture}[\cite{Bataille2020}, Conjecture~13]\label{conj:permutations}
For every $n\geq 2$ and $p\in[n]$, for every permutation $\sigma\in \SymmetricGroup_n$ with $\NumPermutationCycles(\sigma)=p$, the permutation matrix $\PermutationMatrix_{\sigma}$ lies at distance $\Distance(\PermutationMatrix_{\sigma})=3(n-p)$ in $G_n$.
\end{conjecture}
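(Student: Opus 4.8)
The upper bound $\Distance(\PermutationMatrix_\sigma)\le 3(n-\NumPermutationCycles(\sigma))$ is already supplied by \cref{lem:permutation_distance_upper_bound}, so everything reduces to the matching lower bound $\Distance(\PermutationMatrix_\sigma)\ge 3(n-\NumPermutationCycles(\sigma))$. The plan is to first reduce this inequality, for an arbitrary permutation, to the single case of long cycles, and only then to confront the long-cycle case.

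\textbf{Step 1: reduction to long cycles.}
Fix $\sigma\in\SymmetricGroup_n$ with cycles $C_1,\dots,C_p$, where $p=\NumPermutationCycles(\sigma)$ (fixed points counted as length-$1$ cycles), and pick a representative $a_i\in C_i$ from each. Let $\tau=(a_1\,a_2\,\cdots\,a_p)\in\SymmetricGroup_n$ be the $p$-cycle on these representatives; it has $n-p$ further fixed points, so $\NumPermutationCycles(\tau)=n-p+1$. Writing $\tau=(a_1\,a_2)(a_2\,a_3)\cdots(a_{p-1}\,a_p)$ and recalling that multiplying a permutation by a transposition whose two points lie in distinct cycles splices those cycles together, one checks that $\gamma:=\sigma\tau$ is a single $n$-cycle. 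Now $\PermutationMatrix_\gamma=\PermutationMatrix_\sigma\PermutationMatrix_\tau$, and since a word evaluating to a product can be formed by concatenating words evaluating to the two factors, $\Distance$ is subadditive under products; hence, applying \cref{lem:permutation_distance_upper_bound} to $\tau$,
\[
\Distance(\PermutationMatrix_\gamma) \le \Distance(\PermutationMatrix_\sigma)+\Distance(\PermutationMatrix_\tau) \le \Distance(\PermutationMatrix_\sigma)+3(p-1).
\]
Consequently, \emph{if} $\Distance(\PermutationMatrix_\gamma)=3(n-1)$ for the long cycle $\gamma$, then $\Distance(\PermutationMatrix_\sigma)\ge 3(n-1)-3(p-1)=3(n-p)$, which together with the upper bound settles $\sigma$. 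Thus \cref{conj:permutations} in dimension $n$ collapses to its $p=1$ instance: it holds for all $\sigma\in\SymmetricGroup_n$ as soon as it holds for all $n$-cycles.

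\textbf{Step 2: the long-cycle case, and the main obstacle.}
It remains to prove $\Distance(\PermutationMatrix_\gamma)\ge 3(n-1)$ for every $n$-cycle $\gamma$ (the upper bound again being \cref{lem:permutation_distance_upper_bound}). The route I would try is a \emph{potential argument}: exhibit $\Phi\colon\GL(n,2)\to\Nats$ with $\Phi(\IdentityMatrix_n)=0$, with $\lvert\Phi(\TransvectionMatrix M)-\Phi(M)\rvert\le 1$ for every transvection $\TransvectionMatrix$, and with $\Phi(\PermutationMatrix_\gamma)\ge 3(n-1)$ for long cycles; any such $\Phi$ yields $\Distance(M)\ge\Phi(M)$ and closes the gap. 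Plausible ingredients for $\Phi$ are quantities read off from the column supports, from the pivot pattern of a Gauss--Jordan reduction, or from the cyclic dependency structure forced by $\gamma$ (the factor $3$ morally being one ``fetch'' plus one ``add'' plus one ``restore'' per broken cyclic link). I expect the genuine difficulty to lie precisely here — enforcing the Lipschitz-$1$ property \emph{uniformly over all of} $\GL(n,2)$, not merely near the permutation matrices — and this is why the conjecture remains open. Absent such an invariant, the statement is still verifiable for concrete $n$: compute $\Distance(\PermutationMatrix_\gamma)$ exactly by breadth-first search on $G_n$ — feasible up to $n=8$ with the Isometry BFS developed in this paper — confirm $\Distance(\PermutationMatrix_\gamma)=3(n-1)$ for every $n$-cycle, and invoke Step 1 to conclude \cref{conj:permutations} for all $\sigma\in\SymmetricGroup_n$ with $n\le 8$.
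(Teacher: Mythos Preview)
Your proposal is correct and matches the paper's treatment: the statement is an open conjecture, and the paper likewise does not prove it in general but establishes exactly your Step~1 reduction to long cycles (their \cref{thm:permutation_distance}, argued contrapositively by gluing the $p$ cycles with $p-1$ transpositions at cost $3$ each) and then verifies the $p=1$ case computationally for $n\le 8$ via Isometry BFS, as you also propose. Your Step~2 potential-function idea is extra speculation not present in the paper, and you correctly flag it as the genuine obstruction.
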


We prove (\cref{thm:permutation_distance}) that $\cref{conj:permutations}$ collapses to the case of long cycle permutations, i.e., it holds generally iff it holds for the special case of $p=1$.
Using this and computational experiments, we verify that it holds for all $n\leq 8$.

\section{BFS and the Isometries of $\GL(n,2)$}\label{sec:isometries}

In this section we present space-efficient approaches to computing distances in Cayley graphs via breadth-first traversals.
We first recall the definition of group isometries, and then equip them for space-efficient BFS traversals of Cayley graphs in a generic way.
Finally, we focus on $\GL(n,2)=\langle \TransvectionGenerators_n \rangle$, and give a precise characterization of its isometries.

\subsection{Isometries}\label{subsec:isometries}

We start by describing isometries as group automorphisms that preserve distances in the underlying Cayley graph.

\Paragraph{Group actions, orbits, and stabilizers.}
Consider a group $\Group$ and a set $X$.
Recall that a \emph{group action} ($\Group$ \emph{acting on} $X$) is a map $\cdot\colon \Group\times X\to X$ satisfying the following axioms.
\begin{compactenum}
\item \emph{(identity):~}for all $ x\in X$, we have $e\cdot x = x$, where $e $ is the identity of $\Group$.
\item \emph{(compatibility):~}for all $x\in X$ and all $g,h \in \Group$, we have $(gh)\cdot x = g\cdot (h\cdot x)$.
\end{compactenum}
Given some $x\in X$, the set $\Group\cdot x := \{g\cdot x \SuchThat g\in \Group\}$ obtained from acting with all group elements on $x$ is called the \emph{orbit of} $x$. 
The collection of all orbits $X/\Group:=\{\Group\cdot x\SuchThat x\in X\}$ partitions $X$~\cite[Theorem~2.10.5]{Lauritzen2003}. i.e., $X=\bigsqcup_{O\in X/\Group}O$.
Given some $x\in X$, the \emph{stabilizer} of $x$ is the set $\Stabilizer(x)=\{g\in \Group \SuchThat g\cdot x=x\}$ consisting of all group elements whose action on $x$ equals $x$.
This set forms a subgroup of $\Group$. 
The orbit-stabilizer theorem \cite[Theorem~2.10.5]{Lauritzen2003} together with Lagrange's theorem~\cite[Theorem~2.2.8]{Lauritzen2003} give the following relationship
\[
\lvert \Group \rvert = \lvert \Group\cdot x\rvert \lvert\Stabilizer(x)\rvert.
\numberthis\label{eq:orbit_stabilizer_size}
\]
Therefore, computing the size of the orbit of some element $x$ reduces to computing the size of the stabilizer of $x$ and the size of $\Group$.

\Paragraph{Group automorphisms.}
We will be interested in the case where the set being acted upon is itself a group.
Recall that an \emph{automorphism} of a group $\Group$ is a map $\varphi:\Group\to \Group$ that is an isomorphism from $\Group$ to itself, i.e., 
a bijection such that for all $g,h\in \Group$, we have $\varphi(gh)=\varphi(g)\varphi(h)$.
We let $\Automorphisms(\Group)$ be the set of automorphisms of $\Group$, which is itself a group under composition of maps.
The automorphism group $\Automorphisms(\Group)$ acts on $\Group$ by simple function application, i.e., for $\varphi\in \Automorphisms(\Group)$ and $g\in \Group$,
$\varphi\cdot g=\varphi(g)$, which can readily be seen to satisfy the identity and compatibility properties.

\Paragraph{Isometries and sphere partitioning.}
Consider a finite group $\Group=\langle S\rangle$ generated by a symmetric subset $S$, and let $\Distance$ be the distance map of the corresponding Cayley graph.
An automorphism $\varphi\in \Automorphisms(\Group)$ is called an \emph{isometry} (with respect to $S$) if it satisfies $\Distance(g)=\Distance(\varphi(g))$ for all $g\in \Group$.
We denote by $\Isometries(\Group)$ the set of isometries of $\Group$.
Observe that $\Isometries(\Group)$ is closed under composition, hence it is a subgroup of $\Automorphisms(\Group)$, and thus has a well-defined group action.

Consider any $d\in \Nats$, and the sphere $\Sphere(d)$ around the neutral element $e$ in the Cayley graph of $\Group$.
Since, for any isometry $\varphi\in\Isometries(\Group)$ and $g\in \Sphere(d)$, we have $\varphi(g)\in \Sphere(d)$, we may restrict our action to acting only on $\Sphere(d)$.
In particular, for any $\J\subseteq \Isometries(\Group)$, we have $\Sphere(d)=\bigsqcup_{O\in \Sphere(d)/\J}O$.
In turn, this implies that the size of the sphere $\Sphere(d)$ can be computed as the sum of the sizes of the orbits.

The following lemma captures when an automorphism $\varphi$ is an isometry.
We will use it later in \cref{subsec:isometries_in_GL} for establishing the isometries of $\GL(n,2)$.
\begin{restatable}{lemma}{lemisometriesshufflegenerators}\label{lem:isometries_shuffle_generators}
Let $\Group=\langle S\rangle$ be a finite group generated by a symmetric subset $S$. 
For any $\varphi\in\Automorphisms(\Group)$, we have $\varphi\in\Isometries(\Group)$ if and only if $\varphi(S)=S$.
\end{restatable}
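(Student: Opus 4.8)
\textbf{Proof plan for \cref{lem:isometries_shuffle_generators}.}
The plan is to prove both implications, using the fact that the sphere $\Sphere(1)$ around the identity $e$ is exactly $S$. First observe this: every element of $S$ has distance $1$ from $e$ (a single generator is a word of length $1$), and conversely every element at distance $1$ is, by definition of the Cayley graph edges, of the form $s\cdot e = s$ for some $s\in S$; note $e\notin S$ since $e$ has distance $0$. Hence $\Sphere(1) = S$.

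For the easy direction, suppose $\varphi\in\Isometries(\Group)$. Then $\varphi$ preserves distance from $e$, so $\varphi(\Sphere(1))\subseteq \Sphere(1)$, i.e.\ $\varphi(S)\subseteq S$. Since $\varphi$ is a bijection on a finite group and $S$ is finite, $\varphi(S)\subseteq S$ forces $\varphi(S)=S$. (Alternatively, apply the same argument to $\varphi^{-1}$, which is also an automorphism preserving $S$-distance once we know $\varphi$ does, but the finiteness/cardinality argument is cleanest.)

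For the converse, suppose $\varphi\in\Automorphisms(\Group)$ with $\varphi(S)=S$. I want to show $\Distance(\varphi(g))=\Distance(g)$ for all $g$. The key is that $\varphi$ transports words over $S$ to words over $S$ of the same length: if $g = s_d\cdots s_1$ with each $s_i\in S$, then $\varphi(g) = \varphi(s_d)\cdots\varphi(s_1)$ by the homomorphism property, and each $\varphi(s_i)\in S$ by hypothesis, so $\varphi(g)$ is expressible as a word of length $d$ over $S$. Taking $d=\Distance(g)$ gives $\Distance(\varphi(g))\le \Distance(g)$. For the reverse inequality, apply the same reasoning to $\varphi^{-1}$: since $\varphi$ is an automorphism and $\varphi(S)=S$, also $\varphi^{-1}(S)=S$ (as $\varphi$ restricted to $S$ is a bijection of $S$), so $\Distance(\varphi^{-1}(h))\le\Distance(h)$ for all $h$; setting $h=\varphi(g)$ yields $\Distance(g)\le\Distance(\varphi(g))$. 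Combining, $\Distance(\varphi(g))=\Distance(g)$, so $\varphi$ is an isometry.

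I do not expect any serious obstacle here; the only point requiring a moment's care is the identification $\Sphere(1)=S$ (making sure $e\notin S$, which holds since $S$ generating and the Cayley graph being simple/loopless is implicit, or simply because $\Distance(e)=0\ne 1$), and the observation that $\varphi(S)=S$ automatically gives $\varphi^{-1}(S)=S$ because a bijection of the whole group that maps the finite set $S$ onto itself also maps its complement onto itself. Everything else is a direct application of the homomorphism property to words over the generators.
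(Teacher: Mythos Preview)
Your proposal is correct and follows essentially the same approach as the paper: both identify $\Sphere(1)$ with $S$ to handle the forward direction, and both use the word-transport argument applied to $\varphi$ and $\varphi^{-1}$ for the converse. The only cosmetic difference is that you conclude $\varphi(S)=S$ from $\varphi(S)\subseteq S$ via finiteness, whereas the paper argues directly that $\varphi^{-1}(s)\in S$.
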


\subsection{Isometry BFS}\label{subsec:isometry_BFS}
We now turn our attention to the task of traversing the Cayley graph $G$ of some finite group $\Group=\langle S \rangle$ in a breadth-first manner, given a group of isometries $\J$ of $\Group$.
Our goal is to discover the distance $\Distance(h)$ of each group element $h$, as well as the size $|\Sphere(d)|$ of each sphere of $G$.
Our technique generalizes ideas found in the literature for specific instances~\cite{Bravy-6qubit,DBLP:journals/tcad/AmyMMR13} to arbitrary groups and isometries.

Regular BFS suffers in memory the size of $\Group$, which is a bottleneck for our task of handling $\GL(n,2)$, as its size grows super-exponentially in $n$ (\cref{eq:size_of_GL}).
We address this issue by equipping BFS with isometries $\J$, which effectively allows the algorithm to store only a single representative from each orbit $\Group/\J$, thereby reducing the memory requirements.
For simplicity of presentation, for any element $h\in \Group$, we assume oracle access to
(i)~a fixed representative $\Representative(\J\cdot h)$ of the orbit of $h$, and 
(ii)~the size of the orbit $|\J\cdot h|$.
In \cref{sec:experiments}, we provide details on how we obtain this information for $\GL(n,2)$ in our experiments.

\begin{algorithm}
\DontPrintSemicolon
\KwIn{A group $\Group=\langle S\rangle$ with identity $e$ and Cayley graph $G$. A set of isometries $\J\subseteq \Isometries(\Group)$.}
\KwOut{
$\SphereSize[d]=\lvert\Sphere(d)\rvert$ and $\DistanceMap[x]=\Distance(x)$.
}
$\DistanceMap[e]\gets 0$;
$\SphereSize[0]\gets 1$\\
$\Queue.\Push(e)$\\
\While{$\Queue$ is not empty}{%
$g\gets \Queue.\Pop()$\tcp*[f]{Current vertex in the search}\\
\ForEach(\tcp*[f]{Iterate over all generators}){$s\in S$}{%
$h\gets sg$ \tcp*[f]{The $s$-successor of $g$ in $G$} \label{line:algo_isometry_bfs_get_successor}\\
$x\gets\Representative(\J \cdot h)$ \tcp*[f]{Pick the representative of the orbit of $h$} \label{line:algo_isometry_bfs_pick_representative}\\
\uIf(\tcp*[f]{The orbit of $h$ is not yet explored}){$\DistanceMap$ doesn't contain $x$}{%
$\DistanceMap[x]\gets \DistanceMap[g]+1$ \tcp*[f]{$x$ is one hop further than $g$} \label{line:algo_isometry_bfs_store_distance}\\
$\SphereSize[\DistanceMap[x]] \mathrel{+}= \lvert \J\cdot h \rvert$ \tcp*[f]{Count the size of the new orbit} \label{line:algo_isometry_bfs_update_sphere_size} \\
$\Queue.\Push(x)$ \tcp*[f]{Continue the exploration from $x$} \label{line:algo_isometry_bfs_push_x_to_queue}
}%
}%
}%
\Return{$\SphereSize[\cdot]$ and $\DistanceMap[\cdot]$}
\caption{Isometry BFS}
\label{algo:isometry_bfs}
\end{algorithm}
\Paragraph{The algorithm.}
The general description of Isometry BFS is shown in \cref{algo:isometry_bfs}.
The algorithm has the same flavor as regular BFS, using a queue $\Queue$.
However, when expanding the successors $h$ of an element $g$ (\cref{line:algo_isometry_bfs_get_successor}), it 
(i)~obtains the representative $x$ of the orbit of $h$ (\cref{line:algo_isometry_bfs_pick_representative}),
(ii)~only stores the distance of $x$ (\cref{line:algo_isometry_bfs_store_distance}),
(iii)~it increases the size of the sphere in which $x$ lies by the size of the orbit of $x$ (which is the same as the orbit of $h$, \cref{line:algo_isometry_bfs_update_sphere_size}), and
(iv)~only continues the search from $x$ (and not $h$, \cref{line:algo_isometry_bfs_push_x_to_queue}).

\Paragraph{Correctness.}
As in regular BFS, it is straightforward to see that the distances computed in $\DistanceMap[\cdot]$ are correct.
One potential threat to the correctness of the algorithm is that by only expanding the neighbours of the representatives (\cref{line:algo_isometry_bfs_push_x_to_queue}), it might not visit some vertices of $G$.
This possibility is ruled out by the following lemma.
It states that if two elements of $\Group$ are in the same orbit, then the two collections of orbits of their successors are equal.

\begin{restatable}{lemma}{lemisometrybfscorrectness}\label{lem:isometry_bfs_correctness}
Let $\Group=\langle S \rangle $ be a finite group generated by a symmetric subset $S$, and let $\J\subseteq \Isometries(\Group)$ a group of isometries of $\Group$.
For any two elements $g_1, g_2\in \Group$,  if $\J\cdot  g_1=\J\cdot g_2$ then $\{\J\cdot(sg_1)\mid s\in S\} = \{\J\cdot(sg_2)\mid s\in S\}$.
\end{restatable}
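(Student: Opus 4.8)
The plan is to exploit the fact that $\J$ acts on $\Group$ by \emph{automorphisms} that fix the generating set setwise. First I would unpack the hypothesis $\J\cdot g_1=\J\cdot g_2$: since $g_2$ lies in the orbit of $g_1$ under $\J$, there is some $\varphi\in\J$ with $\varphi(g_1)=g_2$. Because $\J\subseteq\Isometries(\Group)$, \cref{lem:isometries_shuffle_generators} gives $\varphi(S)=S$; in particular, $\varphi$ restricts to a bijection of $S$ onto itself.

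Next I would push successors through $\varphi$. For any $s\in S$, the homomorphism property of $\varphi$ yields $\varphi(sg_1)=\varphi(s)\varphi(g_1)=\varphi(s)g_2$, and $\varphi(s)\in S$. Since $\varphi\in\J$, the element $\varphi(sg_1)$ lies in the same $\J$-orbit as $sg_1$, so $\J\cdot(sg_1)=\J\cdot(\varphi(s)g_2)$. Letting $s$ range over all of $S$ and using that $s\mapsto\varphi(s)$ is a bijection $S\to S$, we obtain
\[
\{\J\cdot(sg_1)\mid s\in S\}=\{\J\cdot(\varphi(s)g_2)\mid s\in S\}=\{\J\cdot(tg_2)\mid t\in S\},
\]
which is exactly the claimed equality. (Alternatively, one may prove just the inclusion $\subseteq$ and then invoke symmetry: the reverse inclusion follows identically from $\varphi^{-1}\in\J$, which maps $g_2$ to $g_1$ and also satisfies $\varphi^{-1}(S)=S$.)

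There is essentially no real obstacle here beyond bookkeeping; the one point to handle carefully is turning the set-level hypothesis "$g_1$ and $g_2$ lie in a common orbit" into an \emph{explicit} isometry $\varphi$ carrying $g_1$ to $g_2$. It is precisely the automorphism property of $\varphi$ — not merely distance preservation — together with $\varphi(S)=S$ from \cref{lem:isometries_shuffle_generators}, that converts left multiplication by $s$ into left multiplication by $\varphi(s)\in S$, which is what makes the argument go through.
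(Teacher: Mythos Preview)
Your proposal is correct and follows essentially the same approach as the paper: extract an explicit $\varphi\in\J$ from the orbit hypothesis, use the automorphism property together with \cref{lem:isometries_shuffle_generators} to carry $sg_1$ to $\varphi(s)g_2$ with $\varphi(s)\in S$, and conclude. The paper phrases it as one inclusion plus symmetry (choosing $t=\varphi^{-1}(s)$), while you obtain both inclusions at once via the bijection $s\mapsto\varphi(s)$ on $S$; these are the same argument.
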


This implies that, for any $h\in \Group$, we have $\Distance(h)=\DistanceMap[\Representative(\J \cdot h)]$, allowing us to recover the distance of all elements of $\Group$ from $\DistanceMap[\cdot]$.

\begin{restatable}{theorem}{thmisometrybfsspace}\label{thm:isometry_bfs_space}
Consider an execution of \cref{algo:isometry_bfs} on a group $\Group=\langle S\rangle$ and a set of isometries $\J\subseteq \Isometries(\Group)$.
Let $G$ be the Cayley graph of $\Group$ (with respect to $S$) and $\Diameter$ the diameter of $G$.
On termination, the following hold:
\begin{compactenum}
\item For every $h\in \Group$, we have $\Distance(h)=\DistanceMap[\Representative(\J \cdot h)]$.
\item For every $d\in[\Diameter]$, we have $\lvert\Sphere(d)\rvert=\SphereSize[d]$.
\end{compactenum}
Moreover, the memory used by the algorithm is $O(\lvert\Group/\J\rvert)$.
\end{restatable}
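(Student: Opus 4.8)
The plan is to verify the three claims of \cref{thm:isometry_bfs_space} in turn, treating the algorithm as an instrumented variant of ordinary BFS on a quotient graph.

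\emph{Correctness of distances (Claim 1).} First I would argue that the algorithm is precisely a BFS on the ``orbit graph'' $G/\J$ whose vertices are the orbits $\J\cdot h$ and where $\J\cdot h$ is adjacent to $\J\cdot(sh)$ for every $s\in S$. The key point is that this adjacency is well-defined, i.e.\ independent of the chosen representative; this is exactly the content of \cref{lem:isometry_bfs_correctness}. Hence whenever the algorithm pops $g$ (which, by inspection of lines \ref{line:algo_isometry_bfs_pick_representative} and \ref{line:algo_isometry_bfs_push_x_to_queue}, is always a stored representative $\Representative(\J\cdot g)$) and expands its $S$-successors, it is enumerating exactly the neighbours of $\J\cdot g$ in $G/\J$. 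Standard BFS reasoning then gives $\DistanceMap[\Representative(\J\cdot h)] = \Distance_{G/\J}(\J\cdot e,\ \J\cdot h)$ for every $h$ reachable from $e$; since $S$ generates $\Group$, every orbit is reachable. It remains to show $\Distance_{G/\J}(\J\cdot e,\J\cdot h) = \Distance_G(e,h) = \Distance(h)$. The inequality ``$\leq$'' follows because any path $e\to h$ in $G$ projects to a walk of the same length in $G/\J$. For ``$\geq$'', take a shortest walk $\J\cdot e = O_0, O_1,\dots,O_k = \J\cdot h$ in $G/\J$ and lift it back to $G$: starting from $e\in O_0$, at each step if $O_{t}$ is adjacent to $O_{t+1}$ via generator $s$ then some element of $O_{t+1}$ is an $s$-successor of the current element of $O_t$ — again this is \cref{lem:isometry_bfs_correctness} used in the form that the set of successor-orbits is the same for all elements of $O_t$ — yielding a length-$k$ path in $G$ from $e$ to some $h'\in\J\cdot h$. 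Finally, since every element of $\J$ is an isometry, $\Distance(h') = \Distance(h)$, so $k\geq\Distance(h)$. (One can equivalently invoke \cref{lem:isometries_shuffle_generators} to see that $\J$ acts on each sphere $\Sphere(d)$, which is morally the same fact.)

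\emph{Correctness of sphere sizes (Claim 2).} Here I would use the orbit-stabilizer partition: since $\J\subseteq\Isometries(\Group)$, each $\varphi\in\J$ maps $\Sphere(d)$ to $\Sphere(d)$, so $\J$ acts on $\Sphere(d)$ and $\Sphere(d) = \bigsqcup_{O\in\Sphere(d)/\J} O$, whence $\lvert\Sphere(d)\rvert = \sum_{O\in\Sphere(d)/\J}\lvert O\rvert$. By Claim 1, the orbits at distance $d$ are in bijection with the representatives $x$ for which $\DistanceMap[x]=d$, and each such representative is processed exactly once (the guard ``$\DistanceMap$ doesn't contain $x$'' ensures no orbit is counted twice, and reachability ensures none is missed). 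At line \ref{line:algo_isometry_bfs_update_sphere_size} the algorithm adds $\lvert\J\cdot h\rvert = \lvert\J\cdot x\rvert = \lvert O\rvert$ to $\SphereSize[d]$. Summing over all orbits at distance $d$ gives $\SphereSize[d] = \lvert\Sphere(d)\rvert$. The base case $\SphereSize[0]=1$ is the initialization and matches $\lvert\{e\}\rvert$.

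\emph{Memory bound.} The only data structures whose size depends on $\Group$ are $\DistanceMap[\cdot]$ and the queue $\Queue$. Every key inserted into $\DistanceMap$ is of the form $\Representative(\J\cdot h)$, so there are at most $\lvert\Group/\J\rvert$ of them; likewise every element pushed to $\Queue$ is such a representative, and the guard ensures each is pushed at most once, so $\lvert\Queue\rvert\leq\lvert\Group/\J\rvert$ throughout. The output array $\SphereSize[\cdot]$ has $\Diameter+1 \le \lvert\Group/\J\rvert$ entries. Hence total memory is $O(\lvert\Group/\J\rvert)$.

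The main obstacle is the lifting argument in Claim 1: showing that a shortest walk in the quotient graph $G/\J$ genuinely witnesses a path of the same length in $G$ (up to replacing the endpoint by an $\J$-equivalent element). This is where \cref{lem:isometry_bfs_correctness} does the real work, and care is needed to phrase it as ``the multiset of successor-orbits depends only on the orbit of the source,'' so that one can inductively choose, at each step of the lift, a concrete $s\in S$ and a concrete element of the next orbit consistent with the walk. Everything else is bookkeeping over the structure of BFS and the orbit partition.
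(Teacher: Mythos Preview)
Your proof is correct and follows the same approach as the paper: the paper's own argument for \cref{thm:isometry_bfs_space} is the brief informal discussion preceding the theorem statement, which identifies \cref{lem:isometry_bfs_correctness} as the one nontrivial ingredient (ensuring no orbit is missed when only representatives are expanded) and otherwise defers to standard BFS reasoning. Your proposal is a faithful, more explicit elaboration of exactly that sketch, with the quotient-graph formulation and the lifting argument making precise what the paper leaves implicit.
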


\Paragraph{Early termination.}
We remark that \cref{algo:isometry_bfs} returns correct partial results even if it terminates early (e.g., in the case that it runs out of memory).
In particular, all distances computed in $\DistanceMap[x]$ are correct, while the size of all spheres $\SphereSize[d]$ except for the last layer are also correct.

\subsection{Isometries in $\GL(n,2)$}\label{subsec:isometries_in_GL}

\Paragraph{Symmetric group.}
Recall that $\SymmetricGroup_n$ is the symmetric group on $n$ elements.
We let $\SymmetricGroup_n$ act on $\GL(n,2)$ by conjugating with the corresponding permutation matrix, i.e.,
for $\sigma\in \SymmetricGroup_n$ and $M\in \GL(n,2)$, we have $\sigma \cdot M=\PermutationMatrix_{\sigma}M\PermutationMatrix_{\sigma}^{-1}$.
Note that conjugating by a group element is always a group automorphism.
We can show the elements of $\SymmetricGroup_n$ are isometries of $\Group$, using \cref{lem:isometries_shuffle_generators}.

\Paragraph{The transpose-inverse map.}
Let $\CyclicGroup_2=\{1,-1\}$ be the cyclic group of two elements. 
We let $\CyclicGroup_2$ act on $\GL(n,2)$ by
$
 1\cdot M=M \text{ and }(-1)\cdot M=(M^\top)^{-1}
$.
The action of $-1$ is the transpose-inverse map, which is an automorphism since
\[
-1\cdot(MN)=((MN)^{\top})^{-1}=(N^{\top} M^{\top})^{-1}=(M^{\top})^{-1}(N^{\top})^{-1}=(-1\cdot M)(-1\cdot N).
\]
Using basic properties of transvections, we can show that the elements of $\CyclicGroup_2$ are also isometries.

Observe that for $n=2$, the automorphism defined by $(1\, 2)\in \SymmetricGroup_2$ and $-1\in \CyclicGroup_2$ are equal. 
Indeed, in this case only two isometries exists, the identity map and the map defined by $\{T_{1,2}\mapsto T_{2,1}, T_{2,1} \mapsto T_{1,2}\}$. 

Automorphisms that stem from conjugation by a group element, like the group action of $\SymmetricGroup_n$, are called \emph{inner} automorphisms.
For $n\geq 3$, the transpose-inverse map is known to not be an inner automorphism of $\GL(n,2)$~\cite{TransposeInverseMapGroupProps2019}. The group actions also commute, so for $n\geq 3$, the group generated by $\SymmetricGroup_n$ and $\CyclicGroup_2$ is $\SymmetricGroup_n\times \CyclicGroup_2$.

\Paragraph{A complete characterization of $\Isometries(\GL(n,2))$.}
Finally, given our development so far, it is natural to ask~\emph{is there a succinct, syntactic characterization of all isometries in $\GL(n,2)$}?
Besides its theoretical appeal, this question also has practical implications, as working with $\J=\Isometries(\GL(n,2))$ in \cref{thm:isometry_bfs_space} leads to a more space-efficient exploration of the Cayley graph $G_n$ of $\GL(n,2)$.
As the following theorem states, the isometries in $\GL(n,2)$ are completely characterized in terms of the symmetric group and the cyclic group.

\begin{restatable}{theorem}{thmisometriescharacterizationGL}\label{thm:isometries_characterization_GL}
For any $n\geq 3$, we have that $\Isometries(\GL(n,2))=\SymmetricGroup_n\times \CyclicGroup_2$.
\end{restatable}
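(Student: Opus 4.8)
The plan is to combine the inclusion $\SymmetricGroup_n\times\CyclicGroup_2\subseteq\Isometries(\GL(n,2))$, which is already established in the discussion preceding the theorem, with the reverse containment. So fix an arbitrary $\varphi\in\Isometries(\GL(n,2))$; I want to show $\varphi\in\SymmetricGroup_n\times\CyclicGroup_2$. I would start from the classical description of $\Automorphisms(\GL(n,2))$ (this is where $n\geq 3$ enters): the centre of $\GL(n,2)$ is trivial and its outer automorphism group is $\CyclicGroup_2$, generated by the transpose–inverse map $\tau\colon M\mapsto(M^\top)^{-1}$ (the action of $-1\in\CyclicGroup_2$); hence every automorphism of $\GL(n,2)$ is either an inner automorphism $c_g\colon M\mapsto gMg^{-1}$ for some $g\in\GL(n,2)$, or a composite $c_g\circ\tau$. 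Since $\tau(\TransvectionMatrix_{i,j})=\TransvectionMatrix_{j,i}$ we have $\tau(\TransvectionGenerators_n)=\TransvectionGenerators_n$, so $\tau$ is an isometry by \cref{lem:isometries_shuffle_generators}; as $\Isometries(\GL(n,2))$ is a group, $\varphi\circ\tau$ is again an isometry and equals $c_g$ in the second case. Thus it suffices to treat an \emph{inner} isometry $\varphi=c_g$.

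The crux is then to show that any such $g$ is a permutation matrix. By \cref{lem:isometries_shuffle_generators} the map $c_g$ carries $\TransvectionGenerators_n$ bijectively onto itself, and for all $i\ne j$
\[
c_g(\TransvectionMatrix_{i,j})=g\bigl(\IdentityMatrix_n+e_i e_j^{\top}\bigr)g^{-1}=\IdentityMatrix_n+(ge_i)(e_j^{\top}g^{-1}).
\]
For the right-hand side to be a transvection $\IdentityMatrix_n+\Delta_{k,l}$, the rank-one matrix $(ge_i)(e_j^{\top}g^{-1})$ must have a single nonzero entry; since $ge_i$ (column $i$ of $g$) and $e_j^{\top}g^{-1}$ (row $j$ of $g^{-1}$) are both nonzero, this forces $ge_i=e_k$ and $e_j^{\top}g^{-1}=e_l^{\top}$. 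Letting $i$ range over $[n]$, every column of $g$ is a standard basis vector, and these columns are distinct because $g$ is invertible; hence $g=\PermutationMatrix_\sigma$ for a unique $\sigma\in\SymmetricGroup_n$, and $\varphi=c_{\PermutationMatrix_\sigma}$ is exactly the action of $\sigma$ in the $\SymmetricGroup_n$-action on $\GL(n,2)$.

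Unwinding the reduction, $\varphi$ is either this $\SymmetricGroup_n$-action or its composite with $\tau$, so in all cases $\varphi$ lies in the subgroup of $\Automorphisms(\GL(n,2))$ generated by the $\SymmetricGroup_n$-action and $\CyclicGroup_2$, which for $n\geq 3$ equals $\SymmetricGroup_n\times\CyclicGroup_2$ as noted before the theorem; together with the easy inclusion this gives the claimed equality. The step I expect to carry the real weight (besides the short matrix computation above) is the appeal to the classification of $\Automorphisms(\GL(n,2))$. If one prefers a self-contained proof, one can instead observe that $\varphi$, which permutes $\TransvectionGenerators_n$ by \cref{lem:isometries_shuffle_generators}, necessarily preserves the commutation relation among transvections governed by \cref{lem:transvection_identities}, and hence induces an automorphism of the resulting ``non-commuting graph'' on the $n(n-1)$ transvections; one then identifies the automorphism group of that graph with $\SymmetricGroup_n\times\CyclicGroup_2$ by recovering the $n$ underlying indices from its combinatorial structure (e.g.\ its maximal complete-bipartite blocks, each consisting of all transvections with a fixed target together with all transvections with the same fixed control), and uses that $\varphi$ is determined by its restriction to the generating set $\TransvectionGenerators_n$ — but this route is combinatorially more delicate.
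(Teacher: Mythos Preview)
Your argument is correct, but it follows a genuinely different route from the paper's. The paper gives a fully self-contained proof: starting from \cref{lem:isometries_shuffle_generators} and the identity $(T_{i,j}T_{j,k})^2=T_{i,k}$ (and its companions in \cref{lem:transvection_identities}), it performs a direct case analysis on how an isometry $\psi$ acts on transvections sharing an index, first establishing \cref{lem:control_target_swap_stability} ($\psi(T_{i,j})=T_{a,b}\Rightarrow\psi(T_{j,i})=T_{b,a}$) and then showing that $\psi$ either consistently sends targets to targets (yielding a permutation $\sigma$) or consistently swaps target and control (yielding $\sigma$ composed with transpose--inverse). In contrast, you import the classical description of $\Automorphisms(\GL(n,2))$ as $\mathrm{Inn}\rtimes\langle\tau\rangle$ to reduce at once to inner isometries, and then your rank-one computation $c_g(T_{i,j})=I+(ge_i)(e_j^{\top}g^{-1})$ neatly forces every column of $g$ to be a standard basis vector. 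Your approach is shorter and linear-algebraically clean, but it rests on a substantial structural fact about $\Automorphisms(\GL(n,2))$ that the paper deliberately avoids; the paper's approach is longer but elementary and requires no external input beyond the transvection identities. Your closing remark about recovering the result via the ``non-commuting graph'' on $\TransvectionGenerators_n$ is in the same spirit as the paper's proof, which effectively carries out that combinatorial analysis through the identities of \cref{lem:transvection_identities} rather than through an explicit graph-automorphism argument.
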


\section{Lower Bounds on the Diameter of $\GL(n,2)$}\label{sec:diameter_lowerbounds}

In this section we turn our attention to computing lower bounds on the diameter $\Diameter$ of the Cayley graph $G$ of a group $\Group=\langle S \rangle$.
In the context of $\GL(n,2)$, these provide a lower bound on the size of the largest \cnot\ circuit on $n$ qubits, which has gathered interest in the literature~\cite{Patel2008,Bataille2022} (see \cref{subsec:circuit_optimization}).
Although Isometry BFS reduces the memory requirements for traversing $G$, its large size can prevent the algorithm from traversing the whole graph.

\subsection{A General Inequality based on Sphere Sizes}\label{subsec:inequality}
Here we obtain an inequality that will allow us to lower-bound $\Diameter$ in terms of the sizes of the spheres $\Sphere(1),\dots,\Sphere(k)$,
where $k$ is the largest level that Isometry BFS has processed to completion.
The main idea is to bound the size of spheres at large distance (that the algorithm does not manage to compute) by the size of spheres at smaller distance, as the following lemma captures.

\begin{restatable}{lemma}{lemspheresizedecomposition}\label{lem:sphere_size_decomposition}
Let $G$ be the Cayley graph of a finite group $\Group=\langle S\rangle$ generated by a symmetric subset $S$, and let $\Diameter$ denote the diameter of the graph.
Let $d\in \NatsPlus$ with $d\leq\Diameter$, and $d_1,\dots, d_{\ell}$ be a partition of $d$.
Then $\lvert\Sphere(d)\rvert\leq \prod_{i=1}^{\ell}\lvert\Sphere(d_i)\rvert$.
\end{restatable}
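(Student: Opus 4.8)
The plan is to exhibit, for each element $g$ at distance $d$, a factorization witnessing membership in a product of spheres at the smaller distances $d_1,\dots,d_\ell$, and then to bound the size of $\Sphere(d)$ by the number of such tuples of witnesses. Concretely, fix $g\in\Sphere(d)$. Since $\Distance(e,g)=d$, there is a shortest word $s_1,\dots,s_d\in S^*$ with $g = s_d\cdots s_1$. Group the generators into $\ell$ consecutive blocks of lengths $d_1,\dots,d_\ell$ (reading from the right, say), and let $g_i\in\Group$ be the product of the generators in the $i$-th block, so that $g = g_\ell g_{\ell-1}\cdots g_1$. The first key observation is that each $g_i$ satisfies $\Distance(g_i)\le d_i$, because it is spelled by a word of length $d_i$; in fact, since the concatenation of the blocks is a geodesic for $g$, no block can be expressed more cheaply than $d_i$ without contradicting $\Distance(g)=d$, so actually $\Distance(g_i)=d_i$ and $g_i\in\Sphere(d_i)$. (For the inequality we only need $g_i\in\Sphere(d_i)$, which follows once we note that a strictly shorter spelling of some block would yield a spelling of $g$ of length $<d$.)

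The second step is to package this as a map. Define $\Psi:\Sphere(d)\to\Sphere(d_1)\times\cdots\times\Sphere(d_\ell)$ by choosing, for each $g$, one geodesic spelling, forming the blocks as above, and setting $\Psi(g)=(g_1,\dots,g_\ell)$. This map is well defined once we fix a choice of geodesic per element. The crucial claim is that $\Psi$ is injective: if $\Psi(g)=\Psi(h)=(g_1,\dots,g_\ell)$, then $g = g_\ell\cdots g_1 = h$, so $g=h$. Injectivity immediately gives $\lvert\Sphere(d)\rvert\le\prod_{i=1}^\ell\lvert\Sphere(d_i)\rvert$, which is the claim.

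I expect the only real subtlety — and the one point to state carefully rather than wave at — is the argument that each block product $g_i$ genuinely lands in $\Sphere(d_i)$ and not merely at distance $\le d_i$; this is where the geodesic property of the chosen spelling of $g$ is used, via the observation that $\Distance$ is subadditive ($\Distance(ab)\le\Distance(a)+\Distance(b)$, since one may concatenate geodesics) so that $d=\Distance(g)\le\sum_i\Distance(g_i)\le\sum_i d_i=d$ forces equality termwise. Everything else is bookkeeping: symmetry of $S$ is not even needed for this direction, and the partition hypothesis $\sum_i d_i=d$ together with $d\le\Diameter$ (which guarantees $\Sphere(d)$ is nonempty and the construction is non-vacuous) is all the structure we use. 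A one-line alternative, avoiding the explicit map, is to note $\Sphere(d)\subseteq \Sphere(d_\ell)\cdots\Sphere(d_1)$ as a set product and that $\lvert AB\rvert\le\lvert A\rvert\lvert B\rvert$ for any finite subsets $A,B$ of a group, then induct on $\ell$; I would likely present this shorter version, with the geodesic-decomposition paragraph as the justification of the inclusion.
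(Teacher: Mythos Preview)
Your proposal is correct and follows essentially the same approach as the paper: decompose a geodesic for $g\in\Sphere(d)$ into consecutive blocks of lengths $d_1,\dots,d_\ell$, argue that each block product lies in the corresponding sphere (the paper phrases this as ``no shorter products of generators would work''; your subadditivity argument is a cleaner way to say the same thing), and conclude via an injection/surjection count. The only cosmetic difference is direction: the paper defines the multiplication map $f\colon\prod_i\Sphere(d_i)\to\Group$ and shows $\Sphere(d)\subseteq\mathrm{im}(f)$, whereas you build the inverse map $\Psi$ and show injectivity---these are equivalent, and your ``one-line alternative'' is exactly the paper's formulation.
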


Since $\lvert\Group\rvert = \sum_{d=0}^{\Diameter}\lvert\Sphere(d)\rvert$,  \cref{lem:sphere_size_decomposition} yields the following bound.

\begin{restatable}{theorem}{thmgrouporderupperbound}\label{thm:group_order_upperbound}
Let $\Group=\langle S\rangle$ be a finite group generated by a symmetric subset $S$, let $\Diameter$ denote the diameter of the corresponding Cayley graph, and let $k\in[\Diameter]$.
We have
\[
\lvert\Group\rvert \leq \sum_{d=0}^{\Diameter} \lvert\Sphere(k)\rvert^{\Quotient_k(d)}\lvert\Sphere(\Remainder_k(d))\rvert
\]
where $\Quotient_k(d)$ and $\Remainder_k(d)$ are respectively the quotient and remainder of doing integer division of $d$ by $k$.
\end{restatable}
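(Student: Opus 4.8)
The plan is to start from the exact decomposition $\lvert\Group\rvert=\sum_{d=0}^{\Diameter}\lvert\Sphere(d)\rvert$, which holds because the spheres $\Sphere(0),\Sphere(1),\dots,\Sphere(\Diameter)$ partition the vertex set $\Group$ of the Cayley graph. The entire content of the theorem is then to bound each summand $\lvert\Sphere(d)\rvert$ from above using only $\lvert\Sphere(k)\rvert$ and the smaller spheres $\lvert\Sphere(0)\rvert,\dots,\lvert\Sphere(k-1)\rvert$, so that the right-hand side is something Isometry BFS can actually evaluate after processing only the first $k$ levels.

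The key step is to choose, for each $d\in\{0,1,\dots,\Diameter\}$, a convenient partition of $d$ and feed it into \cref{lem:sphere_size_decomposition}. Write $d=\Quotient_k(d)\cdot k+\Remainder_k(d)$ by integer division, with $0\le\Remainder_k(d)<k$. I would partition $d$ into $\Quotient_k(d)$ parts each equal to $k$, together with one final part equal to $\Remainder_k(d)$ (this last part is simply omitted when $\Remainder_k(d)=0$, or equivalently one uses the convention $\lvert\Sphere(0)\rvert=1$ so that including a zero part is harmless). This is a genuine partition of $d$ in the sense of the paper (a sequence of positive naturals, or trivially adjusted for the zero remainder), and each part is at most $d\le\Diameter$, so the hypotheses of \cref{lem:sphere_size_decomposition} are met. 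Applying the lemma gives
\[
\lvert\Sphere(d)\rvert\le\lvert\Sphere(k)\rvert^{\Quotient_k(d)}\,\lvert\Sphere(\Remainder_k(d))\rvert.
\]
Summing this inequality over $d=0,\dots,\Diameter$ yields exactly the claimed bound on $\lvert\Group\rvert$.

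The only subtlety — and the one point that needs a careful word rather than a routine calculation — is the boundary behaviour of the partition when $\Remainder_k(d)=0$ (in particular for $d=0$ itself, where $\Quotient_k(0)=\Remainder_k(0)=0$ and the ``partition'' is empty). This is handled by adopting the standard conventions that an empty product of sphere sizes equals $1$ and $\lvert\Sphere(0)\rvert=\lvert\{e\}\rvert=1$, under which \cref{lem:sphere_size_decomposition} continues to hold vacuously and the displayed term evaluates to $\lvert\Sphere(k)\rvert^{0}\cdot 1=1=\lvert\Sphere(0)\rvert$. One should also note that $\lvert\Sphere(k)\rvert\ge 1$ for $k\in[\Diameter]$ (the sphere is nonempty), so no term on the right-hand side is degenerate. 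With these conventions in place, there is no real obstacle: the proof is a one-line application of \cref{lem:sphere_size_decomposition} to a carefully chosen partition, followed by summation. If one prefers to avoid the empty-partition convention entirely, the summation can be split as $\sum_{d=0}^{k-1}\lvert\Sphere(d)\rvert+\sum_{d=k}^{\Diameter}\lvert\Sphere(d)\rvert$, bounding the first block trivially by $\sum_{d=0}^{k-1}\lvert\Sphere(k)\rvert^{0}\lvert\Sphere(d)\rvert$ and the second block via the partition above, which matches the stated formula term by term.
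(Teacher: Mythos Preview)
Your proposal is correct and follows essentially the same approach as the paper: write $\lvert\Group\rvert=\sum_{d=0}^{\Diameter}\lvert\Sphere(d)\rvert$ and apply \cref{lem:sphere_size_decomposition} to each summand using the partition $d=\Quotient_k(d)\cdot k+\Remainder_k(d)$. The paper's proof is a terse two-line version of exactly this argument, without your explicit discussion of the $\Remainder_k(d)=0$ boundary case.
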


We can now focus  on $\GL(n,2)=\langle \TransvectionGenerators_n \rangle$, and its diameter $\Diameter_n$.
\cref{thm:group_order_upperbound} generalizes an argument made in~\cite{Patel2008} for the lower bound stated in \cref{eq:diameter_quadratic_lower_bound} (using \cref{eq:size_of_GL} for $\lvert\GL(n,2)\rvert$), from $k=1$ to arbitrary $k\in[\Diameter_n]$.
This leads to tighter bounds for $\Diameter_n$ and $\BreakingPoint$, the smallest $n$ such that $\Diameter_n > 3(n-1)$ (see \cref{subsec:circuit_optimization}).
In particular, \cref{thm:group_order_upperbound} and \cref{eq:size_of_GL} yield the following corollary.

\begin{corollary}\label{cor:diameter_lowerbound_and_breakingpoint}
For any $n\in \NatsPlus$ and $k \in [\Diameter_n]$, we have $\Diameter_n \geq \ell_n(k)$, where
\[
\ell_n(k) := \min \Big\{ \ell \in \NatsPlus \; \Big| \;
\sum_{d=0}^{\ell} \lvert\Sphere_n(k)\rvert^{\Quotient_k(d)}\lvert\Sphere_n(\Remainder_k(d))\rvert \geq \prod_{i=0}^{n-1}(2^n-2^i)\Big\}\enspace.
\]
Consequently, if $\ell_n(k) > 3(n-1)$, then $n_0\leq n$.
\end{corollary}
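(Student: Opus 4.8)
The plan is to read off both claims directly from \cref{thm:group_order_upperbound}, instantiated at $\Group=\GL(n,2)=\langle\TransvectionGenerators_n\rangle$, together with the order formula \cref{eq:size_of_GL}; no further inequality is needed and the work is purely definitional bookkeeping.

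For the first claim I would fix $n\in\NatsPlus$ and $k\in[\Diameter_n]$, invoke \cref{thm:group_order_upperbound} with this $k$ (legitimate since $k\le\Diameter_n$), and substitute $\lvert\GL(n,2)\rvert=\prod_{i=0}^{n-1}(2^n-2^i)$ from \cref{eq:size_of_GL} to obtain
\[
\prod_{i=0}^{n-1}(2^n-2^i)\;\le\;\sum_{d=0}^{\Diameter_n}\lvert\Sphere_n(k)\rvert^{\Quotient_k(d)}\lvert\Sphere_n(\Remainder_k(d))\rvert .
\]
This is precisely the statement that $\ell=\Diameter_n$ lies in the set $\{\ell\in\NatsPlus \mid \sum_{d=0}^{\ell}\lvert\Sphere_n(k)\rvert^{\Quotient_k(d)}\lvert\Sphere_n(\Remainder_k(d))\rvert \ge \prod_{i=0}^{n-1}(2^n-2^i)\}$ over which $\ell_n(k)$ is defined as a minimum (it is a valid element of $\NatsPlus$ since $\Diameter_n\ge k\ge 1$, and the nonemptiness of this set is exactly what makes $\ell_n(k)$ well defined). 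As $\ell_n(k)$ is the minimum of that set, $\ell_n(k)\le\Diameter_n$, i.e.\ $\Diameter_n\ge\ell_n(k)$.

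For the second claim I would use that $\BreakingPoint$ is, by definition, the least $n\in\NatsPlus$ with $\Diameter_n>3(n-1)$ (such an $n$ exists, e.g.\ $n=30$ works by \cref{eq:diameter_quadratic_lower_bound}). Given $k\in\NatsPlus$, take any $m$ in the set $\{n\in\NatsPlus \mid k\in[\Diameter_n],\ \ell_n(k)>3(n-1)\}$; the first claim applied at $n=m$ (valid since $k\in[\Diameter_m]$) gives $\Diameter_m\ge\ell_m(k)>3(m-1)$, so $m$ satisfies the defining condition of $\BreakingPoint$ and therefore $\BreakingPoint\le m$. Since $m$ ranges over the whole set, $\BreakingPoint\le\min\{n\in\NatsPlus \mid k\in[\Diameter_n],\ \ell_n(k)>3(n-1)\}$ (the bound being vacuously true if this set is empty, interpreting $\min\emptyset=+\infty$).

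There is no substantial obstacle: the whole argument is a one-line deduction from \cref{thm:group_order_upperbound} plus \cref{eq:size_of_GL}. The only subtlety worth flagging is the well-definedness of $\ell_n(k)$, which is guaranteed because the displayed inequality exhibits $\Diameter_n$ as a feasible candidate for the minimum; in particular, no monotonicity of the partial sums in $\ell$ is needed — we only use that $\Diameter_n$ itself lies in the feasible set, not that the minimum is attained strictly below it.
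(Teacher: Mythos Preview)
Your proof is correct and matches the paper's approach exactly: the paper states the corollary as an immediate consequence of \cref{thm:group_order_upperbound} together with \cref{eq:size_of_GL}, without giving any further argument, and your write-up simply spells out that deduction (including the well-definedness of $\ell_n(k)$ and the definition of $\BreakingPoint$).
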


\subsection{The Polynomial Size of Spheres in $\GL(n,2)$}\label{subsec:polynomial}

In this section, we pay attention to the rank of matrices, which we also sometimes carry explicitly in the notation.
In particular, we write a transvection of rank $n$ as $T_{i,j}^n$.
Our goal is to show that, for a fixed distance $d$, the size of the sphere $|\Sphere_n(d)|$ can be described as a polynomial in $n$, for $n$ sufficiently large (in particular, for $n\geq 2d$).
To this end we will study the orbits of our group action in more detail.
For ease of presentation, we focus primarily on orbits of the symmetric group $\SymmetricGroup_n$ acting alone, 
and consider the full isometry group $\Isometries(\GL(n,2))$ at the end.

\Paragraph{General linear subgroups of $\GL(n,2)$.}
Our first key observation is, that if $m\leq n$ then $\GL(m,2)$ is a subgroup of $\GL(n,2)$. 
This can be seen directly by considering the map $\Embedding_{m,n}\colon \GL(m,2)\to \GL(n,2)$ defined on the generators as $\Embedding_{m,n}(T_{i,j}^m)=T_{i,j}^n$,
which extends to a group homomorphism. The map can be visualized as embedding a matrix $M\in \GL(m,2)$ into the upper left corner of a larger matrix, i.e., 
\[\Embedding_{m,n}(M)=
\begin{bmatrix}
M & 0 \\
0 & \IdentityMatrix_{n-m}
\end{bmatrix}.
\]
This observation makes it clear that $\Embedding_{m,n}$ is injective.

\Paragraph{Essential indices.}
Given a matrix $M\in \GL(n,2)$ and some $i\in[n]$, we say that $i$ is an \emph{essential index} of $M$ if there exists some $j\in[n]\setminus\{i\}$ such that $M[i,j]=1$ or $M[j,i]=1$.
We let $\Essential(M)$ denote the essential indices of $M$.
Note that, for a transvection $T_{i,j}^n$, we have $\Essential(T_{i,j}^n)=\{i,j\}$.
Given a circuit $C\in \TransvectionGenerators_n^*$, we say that $C$ \emph{uses} or \emph{contains} an index $i\in[n]$, if $C$ contains a transvection in which $i$ is essential.
Next, we establish some key properties of essential indices.

The first lemma states that the essential indices of a matrix $M\in \GL(m,2)$ are preserved under the embedding $\Embedding_{m,n}$, for $m\leq n$, while a  permutation acting on $M$ permutes its essential indices.
The latter implies that all elements in the orbit $\SymmetricGroup_m \cdot M$ have the same number of essential indices.

\begin{restatable}{lemma}{lemessentialindicesinvariant}\label{lem:essential_indices_invariant}
Let $m\leq n$ and $M\in \GL(m,2)$.
The following assertions hold
\begin{compactenum}
\item\label{item:essential_invariant1} $\Essential(M)=\Essential(\Embedding_{m,n}(M))$.
\item\label{item:essential_invariant2} For each $\sigma\in \SymmetricGroup_m$, we have $\Essential(\sigma\cdot M)=\sigma(\varepsilon(M))$.
\end{compactenum}
\end{restatable}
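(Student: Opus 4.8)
The plan is to verify both assertions directly from the definitions of essential index, the embedding $\Embedding_{m,n}$, and the conjugation action of $\SymmetricGroup_m$, using the entrywise identity~\eqref{eq:permutation_conjugation} established in the preliminaries. For part~\eqref{item:essential_invariant1}, I would observe that $\Embedding_{m,n}(M)$ agrees with $M$ on all entries $[i,j]$ with $i,j\in[m]$, and that the remaining block is $\IdentityMatrix_{n-m}$ together with zero blocks. Hence for $i\in[m]$, $i$ is essential in $\Embedding_{m,n}(M)$ iff there is $j\in[n]\setminus\{i\}$ with $\Embedding_{m,n}(M)[i,j]=1$ or $\Embedding_{m,n}(M)[j,i]=1$; since row $i$ and column $i$ of $\Embedding_{m,n}(M)$ are zero outside the first $m$ coordinates, this $j$ must lie in $[m]$, and there it coincides with the corresponding entry of $M$, so $i\in\Essential(M)$. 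Conversely any essential index of $M$ is clearly essential in $\Embedding_{m,n}(M)$. Finally, no index $i\in[n]\setminus[m]$ can be essential in $\Embedding_{m,n}(M)$, since the $i$-th row and column are those of the identity. This gives $\Essential(M)=\Essential(\Embedding_{m,n}(M))$.

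For part~\eqref{item:essential_invariant2}, the key input is~\eqref{eq:permutation_conjugation}: $(\sigma\cdot M)[i,j] = (\PermutationMatrix_\sigma M \PermutationMatrix_\sigma^{-1})[i,j] = M[\sigma^{-1}(i),\sigma^{-1}(j)]$. I would show the two inclusions. If $i\in\Essential(\sigma\cdot M)$, pick $j\neq i$ with $(\sigma\cdot M)[i,j]=1$ or $(\sigma\cdot M)[j,i]=1$; by the identity this says $M[\sigma^{-1}(i),\sigma^{-1}(j)]=1$ or $M[\sigma^{-1}(j),\sigma^{-1}(i)]=1$, and since $\sigma^{-1}$ is a bijection, $\sigma^{-1}(j)\neq\sigma^{-1}(i)$, so $\sigma^{-1}(i)\in\Essential(M)$, i.e.\ $i\in\sigma(\Essential(M))$. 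The reverse inclusion is symmetric (apply the same argument to $\sigma^{-1}\cdot(\sigma\cdot M)=M$, or simply run the equivalences backward). Thus $\Essential(\sigma\cdot M)=\sigma(\Essential(M))$, and in particular $\lvert\Essential(\sigma\cdot M)\rvert=\lvert\Essential(M)\rvert$ for all $\sigma\in\SymmetricGroup_m$, so the number of essential indices is constant on each orbit $\SymmetricGroup_m\cdot M$.

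This lemma is essentially bookkeeping, so I do not anticipate a genuine obstacle; the only point requiring a little care is the claim in part~\eqref{item:essential_invariant1} that indices outside $[m]$ are never essential in $\Embedding_{m,n}(M)$ — one must note that although $\Embedding_{m,n}(M)[i,i]=1$ for $i>m$, the definition of essential index requires an off-diagonal $1$ in row or column $i$, and the block structure of $\Embedding_{m,n}(M)$ forbids this. With that observation in place, both parts follow immediately from the displayed entrywise formulas.
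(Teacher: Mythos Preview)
Your proposal is correct and follows essentially the same approach as the paper: part~\eqref{item:essential_invariant1} is verified directly from the block form of $\Embedding_{m,n}(M)$, and part~\eqref{item:essential_invariant2} is derived from the entrywise identity~\eqref{eq:permutation_conjugation}. The paper's proof is simply a more terse version of what you wrote; in particular, it does not spell out the observation that indices $i>m$ have no off-diagonal $1$'s in their row or column, but your explicit treatment of this point is fine.
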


The next lemma relates matrices of different ranks that have the same number of essential indices:~we can permute the essential indices of the higher-rank matrix to bring them to the upper left corner, making it look like the $\Embedding_{m,n}$-embedding of the lower-rank matrix.
This implies that any orbit in $\GL(n,2)/\SymmetricGroup_n$ with $m\leq n$ essential indices contains an element from the image of $\phi_{m,n}$.

\begin{restatable}{lemma}{lemcanonicalrepresentative}\label{lem:canonical_representative}
Let $N\in \GL(n,2)$ be a matrix with $\lvert\Essential(N)\rvert=m\leq n$.
Then there exists a matrix $M\in \GL(m,2)$ and a permutation $\sigma\in S_n$ such that $\Embedding_{m,n}(M)=\sigma \cdot N$.
\end{restatable}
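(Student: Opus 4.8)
The plan is to produce $\sigma$ explicitly as \emph{any} permutation sending the essential indices of $N$ to the prefix $\{1,\dots,m\}$, and then to check that $\sigma\cdot N$ is automatically of the block shape appearing in the definition of $\Embedding_{m,n}$. The engine of the argument is the conjugation identity \eqref{eq:permutation_conjugation} together with invertibility of $N$.

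First I would record the structural fact that non-essential indices behave like the identity. If $k\in[n]\setminus\Essential(N)$, then by definition of essential index $N[k,j]=N[j,k]=0$ for every $j\neq k$; since $N$ is invertible its $k$-th row (and column) is nonzero, so necessarily $N[k,k]=1$. Thus for every non-essential $k$, the $k$-th row and the $k$-th column of $N$ equal $e_k^\top$ and $e_k$ respectively. Next, put $E=\Essential(N)$, $|E|=m$, and fix $\sigma\in\SymmetricGroup_n$ with $\sigma(E)=\{1,\dots,m\}$ (hence $\sigma^{-1}(\{m+1,\dots,n\})=[n]\setminus E$). By \eqref{eq:permutation_conjugation}, $(\sigma\cdot N)[i,j]=(\PermutationMatrix_\sigma N\PermutationMatrix_\sigma^{-1})[i,j]=N[\sigma^{-1}(i),\sigma^{-1}(j)]$. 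If $i>m$ then $\sigma^{-1}(i)$ is non-essential, so by the first step $(\sigma\cdot N)[i,j]=0$ whenever $j\neq i$ and $(\sigma\cdot N)[i,i]=1$; symmetrically when $j>m$. Hence $\sigma\cdot N$ has the form
\[
\sigma\cdot N=\begin{bmatrix} M & 0 \\ 0 & \IdentityMatrix_{n-m}\end{bmatrix},
\]
where $M$ denotes the $m\times m$ upper-left submatrix of $\sigma\cdot N$.

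Finally I would conclude that $M$ lies in $\GL(m,2)$: the matrix $\sigma\cdot N=\PermutationMatrix_\sigma N\PermutationMatrix_\sigma^{-1}$ is conjugate to $N\in\GL(n,2)$, hence invertible, and since it is block-diagonal its determinant equals $\det M$, forcing $\det M\neq 0$. By the very definition of $\Embedding_{m,n}$ we then have $\Embedding_{m,n}(M)=\sigma\cdot N$, which is what we wanted. There is no deep obstacle in this argument; the one point that genuinely needs care is the reduction step showing that non-essential rows and columns are forced to be identity-like --- this uses invertibility of $N$ in an essential way (it is false for singular matrices) --- and, as a routine bookkeeping matter, keeping straight the direction of the index substitution $\sigma^{-1}$ coming from \eqref{eq:permutation_conjugation}. (As a consistency check, one may note via \cref{lem:essential_indices_invariant} that $\Essential(M)=\Essential(\sigma\cdot N)=\sigma(E)=[m]$, so $M$ indeed uses all of its indices.)
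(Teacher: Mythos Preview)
Your proof is correct and follows essentially the same approach as the paper: choose a permutation sending the essential indices to $[m]$ and observe that the conjugated matrix is block-diagonal with an identity block on the non-essential part. Your write-up is in fact more careful than the paper's, which simply asserts the block form ``by construction'' without spelling out that non-essential rows and columns equal $e_k^\top$ and $e_k$ (the step where you explicitly use invertibility of $N$).
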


Our third lemma is based on \cref{lem:canonical_representative} and states that essential indices of a matrix are necessary and sufficient:~every circuit evaluating to the matrix must use all its essential indices, and need not use any non-essential indices.

\begin{restatable}{lemma}{lemessentialindicesareessential}\label{lem:essential_indices_are_essential}
For any matrix $N\in \GL(n,2)$, the following assertions hold.
\begin{compactenum}
\item\label{item:essential_indeces_are_necessary} 
Any circuit $C\in \TransvectionGenerators_n^*$  that evaluates to $N$ uses all essential indices of $N$.
\item \label{item:essential_indeces_are_sufficient} 
There exists a circuit $C\in \TransvectionGenerators_n^*$ that evaluates to $N$ and uses only the essential indices of $N$.
\end{compactenum}
\end{restatable}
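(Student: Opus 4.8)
The plan is to prove the two items separately, using the earlier structural lemmas about essential indices as the main levers.

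For item~\ref{item:essential_indeces_are_necessary} (necessity), I would argue by contradiction, or equivalently by a direct contrapositive. Suppose a circuit $C = T_{i_d,j_d}^n \cdots T_{i_1,j_1}^n$ evaluates to $N$ but fails to use some essential index $k$ of $N$, i.e., no transvection appearing in $C$ has $k$ among its two indices. Then every transvection in $C$ fixes the $k$-th row and the $k$-th column in the following sense: multiplying by $T_{i,j}^n$ with $k \notin \{i,j\}$ neither reads nor writes row $k$, and (looking at the action on columns, or equivalently at $T_{i,j}^n = \IdentityMatrix_n + \Delta_{i,j}$) leaves column $k$ untouched as well. Hence the product $N$ agrees with $\IdentityMatrix_n$ in row $k$ and column $k$, which forces $N[k,j]=0$ and $N[j,k]=0$ for all $j\neq k$; that contradicts $k\in\Essential(N)$. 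The only slightly delicate point is to state cleanly the invariant ``row $k$ and column $k$ of the running product equal those of $\IdentityMatrix_n$'' and check it is preserved by left-multiplication with a transvection not involving $k$; this is a one-line computation with $\Delta_{i,j}$, and I would push it through by induction on the length of $C$.

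For item~\ref{item:essential_indeces_are_sufficient} (sufficiency), the plan is to invoke \cref{lem:canonical_representative}: if $m = \lvert\Essential(N)\rvert$, there is $M\in\GL(m,2)$ and $\sigma\in\SymmetricGroup_n$ with $\Embedding_{m,n}(M) = \sigma\cdot N = \PermutationMatrix_\sigma N \PermutationMatrix_\sigma^{-1}$. Since $\TransvectionGenerators_m$ generates $\GL(m,2)$, write $M$ as a product of transvections $T^m_{a_t,b_t}$; applying $\Embedding_{m,n}$ (a homomorphism sending $T^m_{a,b}\mapsto T^n_{a,b}$) expresses $\Embedding_{m,n}(M)$ as a product of transvections $T^n_{a_t,b_t}$ all of whose indices lie in $[m]$. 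Now conjugating back: $N = \PermutationMatrix_\sigma^{-1}\,\Embedding_{m,n}(M)\,\PermutationMatrix_\sigma$, and conjugation by a permutation matrix turns each $T^n_{a,b}$ into $T^n_{\sigma^{-1}(a),\sigma^{-1}(b)}$ (using the row/column permutation rules from the preliminaries, or \cref{item:transvection_identities5} together with \eqref{eq:permutation_conjugation}). So $N$ is a product of transvections $T^n_{\sigma^{-1}(a_t),\sigma^{-1}(b_t)}$, whose essential indices all lie in $\sigma^{-1}([m])$. It remains to identify $\sigma^{-1}([m])$ with $\Essential(N)$: by \cref{item:essential_invariant1} of \cref{lem:essential_indices_invariant}, $\Essential(\Embedding_{m,n}(M)) = \Essential(M)\subseteq[m]$, and in fact $[m]$ is exactly the set of indices available to $M$; more to the point, $\Essential(\sigma\cdot N)=\sigma(\Essential(N))$ by \cref{item:essential_invariant2}, and since $\Essential(\sigma\cdot N)=\Essential(\Embedding_{m,n}(M))\subseteq[m]$ has size $m=\lvert\Essential(N)\rvert$, we get $\sigma(\Essential(N))=[m]$, hence $\Essential(N)=\sigma^{-1}([m])$. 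Therefore the constructed circuit for $N$ uses only indices in $\Essential(N)$.

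The main obstacle I anticipate is bookkeeping rather than conceptual: in item~\ref{item:essential_indeces_are_sufficient} one must be careful that $\sigma(\Essential(N))$ equals $[m]$ and not merely a subset of $[m]$ — this is where the precise cardinality statement from \cref{lem:canonical_representative} (namely that $\Embedding_{m,n}(M)$ has exactly $m$ essential indices, all in $[m]$) combines with \cref{item:essential_invariant2} to pin things down. In item~\ref{item:essential_indeces_are_necessary}, the only care needed is the correct formulation and inductive verification of the ``row and column $k$ are untouched'' invariant; a transvection $T_{i,j}^n$ with $k\notin\{i,j\}$ adds row $j$ into row $i$, and since $k\neq i$ row $k$ of the product is unchanged, while the $k$-th column of $T_{i,j}^n$ is $e_k$ so left-multiplication leaves column $k$ of the product equal to the $k$-th column of the previous product — iterating from $\IdentityMatrix_n$ gives column $k$ equal to $e_k$ and row $k$ equal to $e_k^\top$.
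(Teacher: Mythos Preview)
Your proposal is correct and follows essentially the same approach as the paper: item~\ref{item:essential_indeces_are_necessary} is proved by the contrapositive (an index not used by the circuit has its row and column equal to those of $\IdentityMatrix_n$), and item~\ref{item:essential_indeces_are_sufficient} is obtained by invoking \cref{lem:canonical_representative}, writing $M$ as a product of transvections in $\GL(m,2)$, embedding via $\Embedding_{m,n}$, and conjugating back by $\sigma^{-1}$. Your bookkeeping showing $\sigma^{-1}([m])=\Essential(N)$ via \cref{lem:essential_indices_invariant} is in fact more explicit than what the paper writes.
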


Since each transvection has two essential indices, \cref{lem:essential_indices_are_essential} implies that the number of essential indices of any matrix are at most twice its distance, as stated in the following lemma.
We use this observation heavily in the rest of this section.

\begin{restatable}{lemma}{lemessentialindicesatdistance}\label{lem:essential_indices_at_distance}
For any matrix $N\in \GL(n,2)$, we have $\lvert\Essential(N)\rvert\leq 2\Distance(N)$.
\end{restatable}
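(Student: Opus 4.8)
The plan is to chain together Lemma~\ref{lem:essential_indices_are_essential} with the trivial observation that each transvection contributes at most two essential indices. Concretely, let $N \in \GL(n,2)$ and set $d = \Distance(N)$. First I would invoke part~\ref{item:essential_indeces_are_sufficient} of Lemma~\ref{lem:essential_indices_are_essential} to obtain a circuit $C \in \TransvectionGenerators_n^*$ evaluating to $N$ that uses only the essential indices of $N$; but more importantly, I may instead take $C$ to be a \emph{shortest} circuit evaluating to $N$, so that $C$ has length exactly $d$. By part~\ref{item:essential_indeces_are_necessary} of the same lemma, this shortest circuit $C$ must use \emph{all} essential indices of $N$, so $\Essential(N)$ is contained in the set of indices used by $C$.

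Next I would bound the number of indices used by $C$. Each transvection $T_{i,j}$ appearing in $C$ has exactly two essential indices, namely $i$ and $j$ (as noted in the text right before the lemma, $\Essential(T_{i,j}^n) = \{i,j\}$), and $C$ uses an index precisely when that index is essential for some transvection in $C$. Hence the set of indices used by $C$ is the union of $d$ two-element sets, so it has size at most $2d$. Combining with the inclusion from the previous paragraph gives $\lvert \Essential(N) \rvert \leq 2d = 2\Distance(N)$, as desired.

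There is essentially no obstacle here: the lemma is an immediate corollary of Lemma~\ref{lem:essential_indices_are_essential}, and all the real work (the nontrivial direction being part~\ref{item:essential_indeces_are_necessary}, that a circuit cannot ``avoid'' an essential index) has already been done. The only point requiring a moment's care is making sure the circuit $C$ witnessing the bound is simultaneously shortest (length $d$) and subject to the necessity statement — but part~\ref{item:essential_indeces_are_necessary} applies to \emph{any} circuit evaluating to $N$, so in particular to a shortest one, and we do not even need part~\ref{item:essential_indeces_are_sufficient} for this argument. One could alternatively phrase it via the contrapositive (if $\lvert\Essential(N)\rvert > 2d$ then any length-$d$ circuit would use more than $2d$ indices, impossible), but the direct counting argument above is cleaner.
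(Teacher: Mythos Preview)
Your proposal is correct and matches the paper's proof essentially line for line: take a shortest circuit $C$ of length $d=\Distance(N)$, apply part~\ref{item:essential_indeces_are_necessary} of Lemma~\ref{lem:essential_indices_are_essential} to conclude $\Essential(N)$ is contained in the indices used by $C$, and bound the latter by $2d$ since each transvection contributes two indices. You are also right that part~\ref{item:essential_indeces_are_sufficient} is not needed here.
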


\Paragraph{Symmetry orbits modulo essential indices.}
Our goal is to characterize the size of the orbits of $\SymmetricGroup_n$ acting on $\GL(n,2)$.
Since elements of $\SymmetricGroup_n$ are isometries (\cref{thm:isometries_characterization_GL}), given some distance $d$, we can think of $\SymmetricGroup_n$ acting on $\Sphere_n(d)$ only, splitting it into a collection of orbits.
Let $\OrbitsOfSymmetriesOnSphere_n(d) = \Sphere_n(d)/\SymmetricGroup_n$, thus the size of the sphere at radius $d$ is $\lvert\Sphere_n(d)\rvert=\sum_{U\in \OrbitsOfSymmetriesOnSphere_n(d)}\lvert U\rvert$.
Since all elements in an orbit have the same number of essential indices (\cref{lem:essential_indices_invariant}), we proceed in a similar vein to partition the orbits of $\SymmetricGroup_n$ into parts whose elements have the same number of essential indices.
In particular, given some $m\in[n]$, we define the set of orbits $\OrbitsOfSymmetriesWithEssential_n(m) = \{  \SymmetricGroup_n \cdot M \SuchThat M\in \GL(n,2),\, \lvert\Essential(M)\rvert=m  \}$.
Finally, given $m\leq n$ and some distance $d$, let $\OrbitsOfSymmetriesOnSphereWithEssential_n(d,m) = \OrbitsOfSymmetriesOnSphere_n(d)\cap \OrbitsOfSymmetriesWithEssential_n(m)$ be the set of orbits of the sphere at distance $d$ containing matrices with $m$ essential indices.
Since the number of essential indices of a matrix is bounded by twice its distance (\cref{lem:essential_indices_at_distance}), we can write $\OrbitsOfSymmetriesOnSphere_n(d)=\bigsqcup_{m=0}^{2d} \OrbitsOfSymmetriesOnSphereWithEssential_n(d,m)$, and thus express the size of a sphere as 
\[
\lvert\Sphere_n(d)\rvert = \sum_{U\in \OrbitsOfSymmetriesOnSphere_n(d)}\lvert U \rvert = \sum_{m=0}^{2d}\left (\sum_{U\in\OrbitsOfSymmetriesOnSphereWithEssential_n(d,m)} \lvert U\rvert\right ).
\numberthis\label{eq:sphere_size_by_summing_on_essential_orbits}
\]

\Paragraph{The polynomial size of spheres.}
In order to arrive at our polynomial result, it remains to argue that the inner summation in \cref{eq:sphere_size_by_summing_on_essential_orbits} is a polynomial in $n$.
In the following, we describe in high level our strategy towards establishing this fact, while we refer to \cref{sec:app_diameter_lowerbounds} for details (see also \cref{fig:bijections} for an illustration).

\begin{figure}
\centering
\scalebox{0.9}{
\begin{tikzpicture}[node distance=1cm and 1.8cm]
\tikzset{Eorbit/.style={rectangle, rounded corners, minimum size=2.2cm, draw=black, very thick}}
\tikzset{Corbit/.style={rectangle, rounded corners, minimum width=1.5cm, minimum height=0.7cm, draw=black, thick}}

\def\LabelDistance{-0.7cm}
\def\EYLoc{-0.6cm}

\node[Eorbit, label={[label distance=\LabelDistance]90:$\OrbitsOfSymmetriesWithEssential_{a}(b)$}] (En) at (0,0) {};
\node[Eorbit, left=of En, label={[label distance=\LabelDistance]90:$\OrbitsOfSymmetriesWithEssential_{2d}(b)$}] (E2d) {};

\node[Corbit] (Cn) at (En |-,  \EYLoc)  {$\OrbitsOfSymmetriesOnSphereWithEssential_{a}(d,b)$};
\node[Corbit] (C2d) at (E2d |-,  \EYLoc)  {$\OrbitsOfSymmetriesOnSphereWithEssential_{2d}(d,b)$};

\draw[<->, very thick] (E2d) to node[above]{$\Bijection_{2d,a}^b$} (En);

\draw[<->, very thick] (Cn) to node[above]{} (C2d);

\end{tikzpicture}
}
\medskip
\caption{\label{fig:bijections}
$\Bijection$ is a bijection between the corresponding sets.
}
\end{figure}

First, for any $a\geq b\geq c$, we establish a bijection $\Bijection_{b,a}^c$ from $\OrbitsOfSymmetriesWithEssential_{b}(c)$ to $\OrbitsOfSymmetriesWithEssential_{a}(c)$, and in particular, for any $M\in \GL(b,2)$ with $\Essential(M)=c$, we will find that $\Bijection_{b,a}^c(\SymmetricGroup_b\cdot M) = \SymmetricGroup_a \cdot \Embedding_{b,a}(M)$.
This implies that $\lvert\OrbitsOfSymmetriesWithEssential_{b}(c)\rvert=\lvert\OrbitsOfSymmetriesWithEssential_{a}(c)\rvert$, i.e., as long as we focus on orbits of matrices with $c$ essential indices, their number does not increase when we have matrices of larger order.
On the other hand, the size of each orbit $U\in \OrbitsOfSymmetriesWithEssential_{b}(c)$ increases under its image $\Bijection_{b,a}^c(U)$.
In particular, we show that
\[
\lvert\Bijection_{b,a}^c(U)\rvert=\lvert U\rvert\cdot\binom{b}{c}^{-1}\cdot \binom{a}{c}.
\numberthis\label{eq:orbit_increase}
\]
This tempts us to substitute the inner sum in \cref{eq:sphere_size_by_summing_on_essential_orbits} by
$
\sum_{U\in\OrbitsOfSymmetriesOnSphereWithEssential_m(d,m)} \lvert U\rvert\binom{n}{m}
$,
using \cref{eq:orbit_increase} for $a=n$ and $b=c=m$.
However, for this substitution to be correct, we would have to show that $\Bijection_{b,a}^b$ is also a bijection between $\OrbitsOfSymmetriesOnSphereWithEssential_{b}(d,b)$ and $\OrbitsOfSymmetriesOnSphereWithEssential_{a}(d,b)$, i.e., the distance of an orbit in $U\in \OrbitsOfSymmetriesWithEssential_{b}(b)$ does not decrease under its image $\Bijection_{b,a}^b(U)\in \OrbitsOfSymmetriesWithEssential_{a}(b)$.
We conjecture that this is indeed the case.
\begin{conjecture}\label{conj:no_shortcuts}
For any $b\leq a$ and $M\in \GL(b,2)$, we have $\Distance(M)=\Distance(\Embedding_{b,a}(M))$.
\end{conjecture}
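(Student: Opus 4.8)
The plan is to prove \cref{conj:no_shortcuts} — that embedding a matrix $M\in\GL(b,2)$ into $\GL(a,2)$ via $\Embedding_{b,a}$ preserves its distance. One inequality is immediate: since $\Embedding_{b,a}$ is a group homomorphism sending transvections to transvections, any circuit of length $d$ evaluating to $M$ maps to a circuit of length $d$ evaluating to $\Embedding_{b,a}(M)$, so $\Distance(\Embedding_{b,a}(M))\leq\Distance(M)$. The content of the statement is the reverse inequality $\Distance(M)\leq\Distance(\Embedding_{b,a}(M))$, i.e.\ that passing to a larger ambient dimension cannot create a shortcut.

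For the hard direction I would argue via essential indices. Write $N=\Embedding_{b,a}(M)$ and let $d=\Distance(N)$. By \cref{item:essential_invariant1} of \cref{lem:essential_indices_invariant}, $\Essential(N)=\Essential(M)\subseteq[b]$, so all essential indices of $N$ lie in $[b]$. By \cref{item:essential_indeces_are_sufficient} of \cref{lem:essential_indices_are_essential}, there is a shortest circuit $C\in\TransvectionGenerators_a^*$ evaluating to $N$ that uses \emph{only} the essential indices of $N$ — hence every transvection appearing in $C$ is of the form $T_{i,j}^a$ with $i,j\in\Essential(N)\subseteq[b]$. The key structural claim is then that such a circuit ``descends'': replacing each $T_{i,j}^a$ by $T_{i,j}^b$ yields a circuit $C'\in\TransvectionGenerators_b^*$ of the same length $d$, and $C'$ evaluates to some $M'\in\GL(b,2)$ with $\Embedding_{b,a}(M')=\Embedding_{b,a}(M)$; since $\Embedding_{b,a}$ is injective, $M'=M$, so $\Distance(M)\leq d=\Distance(N)$, finishing the proof. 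Combining the two inequalities gives equality, and the general case $b\leq a$ follows by composing embeddings (or directly, since the argument never used $b<a$ strictly).

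Concretely, the descent claim rests on the observation that the subgroup of $\GL(a,2)$ generated by $\{T_{i,j}^a\mid i,j\in[b],\ i\neq j\}$ is exactly the image $\Embedding_{b,a}(\GL(b,2))$, and that $\Embedding_{b,a}$ restricted to the generators is a bijection onto this generating set; so a word over $\{T_{i,j}^a\mid i,j\in[b]\}$ and the corresponding word over $\TransvectionGenerators_b$ evaluate to matrices that correspond under $\Embedding_{b,a}$. This is the $m\leq n$ case of the ``general linear subgroups'' observation already recorded in the excerpt, combined with the explicit block form of $\Embedding_{b,a}(M)$: multiplying $T_{i,j}^a$ with $i,j\leq b$ only ever touches rows/columns within the top-left $b\times b$ block and leaves the bottom-right $\IdentityMatrix_{a-b}$ block untouched, so the top-left block of the product is exactly the product of the $T_{i,j}^b$'s.

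The main obstacle is the appeal to \cref{item:essential_indeces_are_sufficient} of \cref{lem:essential_indices_are_essential}: I must be sure that a \emph{shortest} circuit for $N$ can be taken to use only essential indices, not merely \emph{some} circuit. Reading the statement of \cref{lem:essential_indices_are_essential}\eqref{item:essential_indeces_are_sufficient}, it only guarantees existence of \emph{a} circuit using only essential indices; but item \eqref{item:essential_indeces_are_necessary} says every circuit (in particular a shortest one) uses all essential indices, and the construction behind \eqref{item:essential_indeces_are_sufficient} — which goes through \cref{lem:canonical_representative}, permuting the essential indices to the top-left corner — does not lengthen the circuit, so one can arrange that the circuit witnessing \eqref{item:essential_indeces_are_sufficient} has length $\Distance(N)$. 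If instead one wants to avoid relying on the internal proof of that lemma, an alternative is to prove directly, by an exchange/commutation argument using the transvection identities of \cref{lem:transvection_identities}, that any transvection $T_{i,j}^a$ with $i\notin\Essential(N)$ appearing in a minimal circuit can be commuted and cancelled — but this is strictly more work, so I would first try to lift the needed strengthening out of the existing lemma's proof.
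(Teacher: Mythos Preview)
The statement you are trying to prove is listed in the paper as an \emph{open conjecture} (\cref{conj:no_shortcuts}); the paper does not prove it, and explicitly settles for the weaker \cref{lem:distance_fixed_after_two_d} instead. So there is no paper proof to compare against, and the relevant question is whether your argument actually closes the gap. It does not: the argument is circular at the point you yourself flag as the ``main obstacle.''

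Concretely, you need a \emph{shortest} circuit for $N=\Embedding_{b,a}(M)$ that uses only indices in $\Essential(N)\subseteq[b]$. You claim this follows because ``the construction behind \cref{item:essential_indeces_are_sufficient} \ldots\ does not lengthen the circuit.'' But look at that construction: it starts from some $M'\in\GL(m,2)$ with $m=|\Essential(N)|$ and $\Embedding_{m,a}(M')=\sigma\cdot N$, takes a circuit for $M'$ \emph{in $\GL(m,2)$}, and embeds it. The resulting circuit for $N$ has length $\Distance_m(M')$, not $\Distance_a(N)$. You already know $\Distance_a(N)\leq\Distance_m(M')$ (easy direction), so to get a length-$\Distance_a(N)$ circuit out of this construction you need $\Distance_m(M')\leq\Distance_a(N)=\Distance_a(\Embedding_{m,a}(M'))$ --- which is precisely the hard direction of the conjecture for the pair $(m,a)$. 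The fallback ``exchange/commutation argument'' is not a proof either: the whole content of the conjecture is exactly whether a minimal circuit can be forced to avoid non-essential indices, and the transvection identities in \cref{lem:transvection_identities} do not obviously let you cancel an auxiliary index without increasing length. This is why the paper only proves the restricted case $b\geq 2d$ in \cref{lem:distance_fixed_after_two_d}, where a shortest circuit uses at most $2d\leq b$ indices and can therefore be permuted into $[b]$ for free.
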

Here we settle for a weaker statement, namely that $\Bijection_{b,a}^c$
is indeed a bijection from $\OrbitsOfSymmetriesOnSphereWithEssential_{b}(d,c)$ to $\OrbitsOfSymmetriesOnSphereWithEssential_{a}(d,c)$, provided that $b\geq 2d$.
Then, using \cref{eq:orbit_increase} in \cref{eq:sphere_size_by_summing_on_essential_orbits} for $a=n$, $b=2d$, and $c=m$, we arrive at the following theorem.

\begin{restatable}{theorem}{thmspherepolynomialsizesymmetryisometry}\label{thm:sphere_polynomial_size_symmetry_isometry}
For any fixed $d\in \Nats$, for any $n\geq 2d$,  the cardinality of $\Sphere_n(d)$ is a numerical polynomial in $n$, specifically,
\[
\lvert\Sphere_n(d)\rvert=\sum_{m=0}^{2d}\left( \sum_{U\in \OrbitsOfSymmetriesOnSphereWithEssential_{2d}(d,m)}\lvert U\rvert\cdot \binom{2d}{m}^{-1}\cdot \binom{n}{m}\right).
\]
\end{restatable}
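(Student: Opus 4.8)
The plan is to start from the decomposition \eqref{eq:sphere_size_by_summing_on_essential_orbits}, which already writes $\lvert\Sphere_n(d)\rvert=\sum_{m=0}^{2d}\sum_{V\in\OrbitsOfSymmetriesOnSphereWithEssential_n(d,m)}\lvert V\rvert$, and then to put each inner sum into closed form in $n$. The tool for this is the bijection $\Bijection_{2d,n}^m\colon\OrbitsOfSymmetriesWithEssential_{2d}(m)\to\OrbitsOfSymmetriesWithEssential_n(m)$ (available whenever $n\ge 2d\ge m$), which sends $\SymmetricGroup_{2d}\cdot M$ to $\SymmetricGroup_n\cdot\Embedding_{2d,n}(M)$, together with the orbit-growth identity \eqref{eq:orbit_increase}. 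What remains is to show that, for $n\ge 2d$, this map restricts to a bijection between the distance-$d$ subfamilies $\OrbitsOfSymmetriesOnSphereWithEssential_{2d}(d,m)$ and $\OrbitsOfSymmetriesOnSphereWithEssential_n(d,m)$; equivalently, that passing from $M\in\GL(2d,2)$ to $\Embedding_{2d,n}(M)$ neither destroys nor creates a shortcut at distances $\le d$. This is exactly the weak form of \cref{conj:no_shortcuts} announced in the text, and it is the single step that genuinely uses the hypothesis $n\ge 2d$.

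Concretely, I would first prove the following claim: \emph{for $n\ge 2d$ and $M\in\GL(2d,2)$ with $\Distance(\Embedding_{2d,n}(M))\le d$, one has $\Distance(M)=\Distance(\Embedding_{2d,n}(M))$.} The inequality $\Distance(\Embedding_{2d,n}(M))\le\Distance(M)$ is immediate, since $\Embedding_{2d,n}$ is a homomorphism taking transvections to transvections, so a shortest circuit for $M$ maps to a circuit of equal length for $\Embedding_{2d,n}(M)$. For the converse, let $C\in\TransvectionGenerators_n^*$ be a shortest circuit evaluating to $N:=\Embedding_{2d,n}(M)$, so $\lvert C\rvert=\Distance(N)\le d$; then $C$ touches at most $2\lvert C\rvert\le 2d$ qubit indices, and by \cref{lem:essential_indices_are_essential} it must use every index of $\Essential(N)=\Essential(M)\subseteq[2d]$, where the equality is \cref{lem:essential_indices_invariant}. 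Writing $I$ for the index set used by $C$, we have $\Essential(M)\subseteq I$, $\lvert I\rvert\le 2d$, and $\Essential(M)\subseteq[2d]$, so there is enough room to choose $\pi\in\SymmetricGroup_n$ that fixes $\Essential(M)$ pointwise and satisfies $\pi(I)\subseteq[2d]$. Relabelling $C$ by $\pi$ produces a circuit that uses only indices in $[2d]$ and evaluates to $\PermutationMatrix_\pi N\PermutationMatrix_\pi^{-1}$, using $\PermutationMatrix_\pi T_{i,j}\PermutationMatrix_\pi^{-1}=T_{\pi(i),\pi(j)}$, which follows from \eqref{eq:permutation_conjugation}; but $N$ agrees with the identity off the block indexed by $\Essential(M)\times\Essential(M)$ and $\pi$ fixes those indices, so $\PermutationMatrix_\pi N\PermutationMatrix_\pi^{-1}=N$. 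The relabelled circuit has length $\lvert C\rvert$ and, since $\Embedding_{2d,n}$ is injective, descends to a circuit over $\TransvectionGenerators_{2d}$ evaluating to $M$; hence $\Distance(M)\le\lvert C\rvert=\Distance(N)$, proving the claim.

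Granting the claim, the restricted map is a bijection: it is injective because $\Bijection_{2d,n}^m$ already is; it lands in $\OrbitsOfSymmetriesOnSphereWithEssential_n(d,m)$ because for $\SymmetricGroup_{2d}\cdot M$ at distance $d$ the claim forces $\Distance(\Embedding_{2d,n}(M))=d$; and it is surjective because, given $N\in\GL(n,2)$ at distance $d$ with $\lvert\Essential(N)\rvert=m$, \cref{lem:canonical_representative} yields $M'\in\GL(m,2)$ and $\sigma\in\SymmetricGroup_n$ with $\Embedding_{m,n}(M')=\sigma\cdot N$, so that $M:=\Embedding_{m,2d}(M')\in\GL(2d,2)$ satisfies $\Embedding_{2d,n}(M)=\Embedding_{m,n}(M')=\sigma\cdot N$ (composing embeddings); then $\Distance(\Embedding_{2d,n}(M))=\Distance(N)=d$, the claim gives $\Distance(M)=d$, and $\lvert\Essential(M)\rvert=\lvert\Essential(\sigma\cdot N)\rvert=m$ by \cref{lem:essential_indices_invariant}, so $\SymmetricGroup_{2d}\cdot M\in\OrbitsOfSymmetriesOnSphereWithEssential_{2d}(d,m)$ maps onto $\SymmetricGroup_n\cdot N$.

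Finally I would assemble the pieces. Applying \eqref{eq:orbit_increase} with $a=n$, $b=2d$, $c=m$ to each $U\in\OrbitsOfSymmetriesOnSphereWithEssential_{2d}(d,m)$ and using the restricted bijection gives $\sum_{V\in\OrbitsOfSymmetriesOnSphereWithEssential_n(d,m)}\lvert V\rvert=\binom{2d}{m}^{-1}\binom{n}{m}\sum_{U\in\OrbitsOfSymmetriesOnSphereWithEssential_{2d}(d,m)}\lvert U\rvert$, and substituting this into \eqref{eq:sphere_size_by_summing_on_essential_orbits} yields exactly the stated formula. Since $d$ is fixed, the sets $\OrbitsOfSymmetriesOnSphereWithEssential_{2d}(d,m)$ and the cardinalities $\lvert U\rvert$ are independent of $n$, so the right-hand side is a $\mathbb{Q}$-linear combination of the numerical polynomials $\binom{n}{m}$ for $0\le m\le 2d$, hence itself a numerical polynomial in $n$ of degree at most $2d$. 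The main obstacle is the no-shortcuts claim of the second paragraph — making rigorous the ``folding'' of a length-$\le d$ circuit into the first $2d$ coordinates, and in particular verifying that conjugating $N$ by a permutation fixing its essential indices leaves $N$ unchanged; once that is in place, the rest is bookkeeping.
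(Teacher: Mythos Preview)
Your proposal is correct and follows essentially the same approach as the paper: the key no-shortcuts claim you prove in the second paragraph is precisely the content of the paper's \cref{lem:distance_fixed_after_two_d} (stated there for $M\in\GL(k,2)$ with $\Essential(M)=[k]$ rather than for $M\in\GL(2d,2)$, but the argument---relabelling a length-$\le d$ circuit by a permutation that fixes the essential indices so as to land in $[2d]$---is identical), and the assembly via the bijection $\Bijection_{2d,n}^m$ and the orbit-growth formula \eqref{eq:orbit_increase} (the paper's \cref{lem:size_of_orbit_is_polynomial}) matches the paper's proof. Your verification that $\PermutationMatrix_\pi N\PermutationMatrix_\pi^{-1}=N$ when $\pi$ fixes $\Essential(N)$ pointwise is sound, since $N$ coincides with the identity off the $\Essential(N)\times\Essential(N)$ block and $\pi$ permutes the non-essential indices among themselves.
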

The double sum expression in \cref{thm:sphere_polynomial_size_symmetry_isometry} is a polynomial in $n$ of degree at most $2d$. 
We also show that $\OrbitsOfSymmetriesOnSphereWithEssential_{2d}(d,2d)$ is non-empty, hence the degree is exactly $2d$.

\Paragraph{Computational implications.}
\cref{thm:sphere_polynomial_size_symmetry_isometry} directly impacts the computational use of \cref{thm:group_order_upperbound}.
In particular, when working with $\GL(n,2)$ for large $n$, Isometry BFS may fail to compute the size of a sphere $\Sphere_n(d)$, due to limited resources.
However, provided that $2d\leq n$, $|\Sphere_n(d)|$ can be calculated exactly by working in the lower-order group $\GL(2d,2)$ by
(i)~computing the sizes of the orbits $\OrbitsOfSymmetriesOnSphereWithEssential_{2d}(d,m)$ for all essential indices $m\in[2d]$, and
(ii)~using the polynomial expression in \cref{thm:sphere_polynomial_size_symmetry_isometry}.
Coming back to $\GL(n,2)$, this allows one to use larger spheres in \cref{thm:group_order_upperbound}, thereby arriving at a tighter lower bound on the diameter.

\Paragraph{Working with the full isometry group.}
It is possible to lift \cref{thm:sphere_polynomial_size_symmetry_isometry} to the full isometry group of $\GL(n,2)$, which involves the cyclic group (\cref{thm:isometries_characterization_GL}, as opposed to only the symmetric group above).
This may enable further computational approaches, as Isometry BFS (\cref{subsec:isometry_BFS}) with all isometries might scale better, thereby enabling us to obtain the sizes of spheres at larger distances.
Observe that the transpose-inverse map leaves the set of essential indices invariant for any given matrix: the actions of transposing and inverting both take non-essential indices to non-essential indices, and since the map has order two, no new non-essential indices are introduced.
In a similar fashion, we let $\OrbitsOfSymmetriesWithEssential'_n(m) = \{  (\SymmetricGroup_n\times \CyclicGroup_2) \cdot M \SuchThat M\in \GL(n,2),\, \lvert\Essential(M)\rvert=m \}$, and let $\OrbitsOfSymmetriesOnSphereWithEssential'_n(d,m) = \OrbitsOfSymmetriesOnSphere_n(d)\cap \OrbitsOfSymmetriesWithEssential'_n(m)$ be the new set of orbits of the sphere at distance $d$ containing matrices of $m$ essential indices.
We establish the following theorem.

\begin{restatable}{theorem}{thmspherepolynomialsizefullisometry}\label{thm:sphere_polynomial_size_full_isometry}
For any fixed $d\in \Nats$, for any $n\geq 2d$,  the cardinality of $\Sphere_n(d)$ is a numerical polynomial in $n$, specifically,
\[
\lvert\Sphere_n(d)\rvert=\sum_{m=0}^{2d}\left( \sum_{U\in \OrbitsOfSymmetriesOnSphereWithEssential'_{2d}(d,m)} \lvert U\rvert\cdot \binom{2d}{m}^{-1}\cdot \binom{n}{m}\right).
\]
\end{restatable}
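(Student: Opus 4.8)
The plan is to \emph{lift} \cref{thm:sphere_polynomial_size_symmetry_isometry} to the full isometry group, exploiting that (for $n\geq 3$) $\CyclicGroup_2$ commutes with $\SymmetricGroup_n$ as a group of automorphisms of $\GL(n,2)$ and, crucially, that the transpose-inverse map commutes with the embedding $\Embedding_{m,n}$. Concretely, since $\Embedding_{m,n}(M)$ is block-diagonal with blocks $M$ and $\IdentityMatrix_{n-m}$, we have $(\Embedding_{m,n}(M)^{\top})^{-1}=\Embedding_{m,n}((M^{\top})^{-1})$, so $\Embedding_{m,n}$ intertwines the $\CyclicGroup_2$-actions on $\GL(m,2)$ and $\GL(n,2)$; together with the analogous (and already used) fact for the symmetric group, where $\SymmetricGroup_m$ embeds into $\SymmetricGroup_n$ by fixing $\{m+1,\dots,n\}$, this makes $\Embedding_{m,n}$ equivariant for the whole $\SymmetricGroup\times\CyclicGroup_2$-action. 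Thus a $(\SymmetricGroup_n\times\CyclicGroup_2)$-orbit is precisely a $\CyclicGroup_2$-orbit of $\SymmetricGroup_n$-orbits, and the bijection $\Bijection_{b,a}^{c}$ of \cref{thm:sphere_polynomial_size_symmetry_isometry} is $\CyclicGroup_2$-equivariant, hence descends to a bijection $\Bijection'_{b,a}$ from $\OrbitsOfSymmetriesWithEssential'_{b}(c)$ to $\OrbitsOfSymmetriesWithEssential'_{a}(c)$ sending $(\SymmetricGroup_b\times\CyclicGroup_2)\cdot M$ to $(\SymmetricGroup_a\times\CyclicGroup_2)\cdot\Embedding_{b,a}(M)$.

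Next I would track how orbit sizes change under $\Bijection'_{b,a}$. Write a full-isometry orbit $U'\in\OrbitsOfSymmetriesWithEssential'_{b}(c)$ as a disjoint union $U_1\sqcup\cdots\sqcup U_t$ of $\SymmetricGroup_b$-orbits forming one $\CyclicGroup_2$-orbit. By $\CyclicGroup_2$-equivariance of $\Bijection_{b,a}^{c}$, the image $\Bijection'_{b,a}(U')$ decomposes into exactly the $t$ pieces $\Bijection_{b,a}^{c}(U_1),\dots,\Bijection_{b,a}^{c}(U_t)$, again forming a single $\CyclicGroup_2$-orbit; applying \cref{eq:orbit_increase} to each $U_i$ and summing gives $\lvert\Bijection'_{b,a}(U')\rvert=\lvert U'\rvert\cdot\binom{b}{c}^{-1}\binom{a}{c}$, the same scaling law as in the symmetric case.

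Third comes the distance bookkeeping. For $b=2d$ and $c=m\leq 2d$, the argument underlying \cref{thm:sphere_polynomial_size_symmetry_isometry} already gives that $\Bijection_{2d,n}^{m}$ restricts to a bijection from $\OrbitsOfSymmetriesOnSphereWithEssential_{2d}(d,m)$ to $\OrbitsOfSymmetriesOnSphereWithEssential_{n}(d,m)$, i.e., it preserves the distance of orbits of matrices with at most $2d$ essential indices. Since every element of a $(\SymmetricGroup_n\times\CyclicGroup_2)$-orbit has the same distance ($\CyclicGroup_2$ acts by isometries) and the same number of essential indices (the transpose-inverse map preserves $\Essential(\cdot)$, as noted in the text), it follows that $\Bijection'_{2d,n}$ restricts to a bijection from $\OrbitsOfSymmetriesOnSphereWithEssential'_{2d}(d,m)$ to $\OrbitsOfSymmetriesOnSphereWithEssential'_{n}(d,m)$ for every $n\geq 2d$. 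Combining this with the orbit-size scaling and the decomposition $\lvert\Sphere_n(d)\rvert=\sum_{m=0}^{2d}\sum_{U'\in\OrbitsOfSymmetriesOnSphereWithEssential'_{n}(d,m)}\lvert U'\rvert$ — the analogue of \cref{eq:sphere_size_by_summing_on_essential_orbits}, valid since the number of essential indices is an orbit invariant bounded by $2\Distance$ (\cref{lem:essential_indices_at_distance}) — yields the stated formula; as each $\binom{n}{m}$ has degree $m\leq 2d$ and the coefficients $\lvert U'\rvert\binom{2d}{m}^{-1}$ are independent of $n$, this is a numerical polynomial in $n$ (of degree exactly $2d$, since $\OrbitsOfSymmetriesOnSphereWithEssential'_{2d}(d,2d)$ is nonempty, being a quotient of the nonempty $\OrbitsOfSymmetriesOnSphereWithEssential_{2d}(d,2d)$).

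The only genuinely delicate point is structural rather than metric: one must verify that passing from $\GL(2d,2)$ to $\GL(n,2)$ neither merges nor splits the $\SymmetricGroup$-suborbits inside a $\CyclicGroup_2$-orbit, so that the integer $t$ above is stable — equivalently, that $M$ is fixed by the transpose-inverse map up to $\SymmetricGroup_{2d}$-conjugacy iff $\Embedding_{2d,n}(M)$ is fixed up to $\SymmetricGroup_{n}$-conjugacy. This is exactly what $\CyclicGroup_2$-equivariance of $\Bijection_{b,a}^{c}$ delivers, so the hard "no shortcuts" content is inherited wholesale from \cref{thm:sphere_polynomial_size_symmetry_isometry} and no new argument is required. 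A minor caveat is the degenerate range $n<3$, where $\Isometries(\GL(n,2))$ is strictly smaller than $\SymmetricGroup_n\times\CyclicGroup_2$; but there $d\le 1$ and the formula is checked by hand.
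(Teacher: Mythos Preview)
Your proposal is correct and follows essentially the same approach as the paper: the paper introduces a function $\InverseTransposeOrbitCount(U)\in\{1,2\}$ recording whether the transpose-inverse of a representative of a $\SymmetricGroup$-orbit $U$ lies in $U$, which is exactly your $t$, and then proves $\InverseTransposeOrbitCount(\SymmetricGroup_m\cdot M)=\InverseTransposeOrbitCount(\SymmetricGroup_n\cdot\Embedding_{m,n}(M))$ from the commutation of $-1$ with $\Embedding_{m,n}$ (your $\CyclicGroup_2$-equivariance of $\Bijection$) together with \cref{lem:orbits_of_essential_indices_are_unique}. The orbit-size scaling is then derived exactly as you do, by multiplying the $\SymmetricGroup$-orbit scaling of \cref{lem:size_of_orbit_is_polynomial} (your \cref{eq:orbit_increase}) by the stable factor $\InverseTransposeOrbitCount=t$.
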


\section{The Distance of Permutations}\label{sec:permutations}
In this section we focus on the distance of permutation matrices $P_{\sigma}$ in $G_n$, for $\sigma\in \SymmetricGroup_n$.
Recall that $P_{\sigma}$ can always be written as a product of $3(n-\NumPermutationCycles(\sigma))$ transvections (\cref{lem:permutation_distance_upper_bound}), while \cref{conj:permutations} states that this bound is tight.

We show that \cref{conj:permutations} collapses to the case of $p=1$, i.e., it holds for all permutations iff it holds for all long cycles. 
The key idea is that two disjoint cycles can be joined to one longer cycle by using one transposition (a \swap\ gate), which is the product of $3$ transvections (\cref{item:transvection_identities5} of \cref{lem:transvection_identities}).
If there is a permutation $\tau$ such that $\Distance(\PermutationMatrix_{\tau})< 3(n-\NumPermutationCycles(\tau))$, we can merge all $\NumPermutationCycles(\tau)$ cycles using $\NumPermutationCycles(\tau)-1$ transpositions, thereby constructing one long cycle of distance $<3(n-1)$.

\begin{restatable}{theorem}{thmpermutationdistance}\label{thm:permutation_distance}
\cref{conj:permutations} is true iff it holds for the special case of $p=1$.
\end{restatable}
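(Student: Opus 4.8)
The plan is to prove the non-trivial implication; the forward direction is immediate, since $p=1$ is a special case of \cref{conj:permutations}. So assume the conjecture holds for all long cycles, i.e., $\Distance(\PermutationMatrix_{\rho})=3(n-1)$ for every $n\geq 2$ and every $n$-cycle $\rho\in\SymmetricGroup_n$, and let $\tau\in\SymmetricGroup_n$ be an arbitrary permutation with $\NumPermutationCycles(\tau)=p$. Since the upper bound $\Distance(\PermutationMatrix_{\tau})\leq 3(n-p)$ always holds (\cref{lem:permutation_distance_upper_bound}), it suffices to establish the matching lower bound $\Distance(\PermutationMatrix_{\tau})\geq 3(n-p)$. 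The case $p=n$ is trivial, as then $\tau$ is the identity and $\Distance(\PermutationMatrix_{\tau})=0=3(n-p)$; so assume $p<n$, and note $n\geq 2$.

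The key step is to fuse the $p$ disjoint cycles of $\tau$ into a single $n$-cycle by multiplying with $p-1$ transpositions. Concretely, I would pick one element $a_i$ from each cycle of $\tau$ and set $\rho=(a_{p-1}\,a_p)\cdots(a_2\,a_3)(a_1\,a_2)\,\tau$. Using the standard fact that left-multiplying a permutation by a transposition whose two points lie in distinct cycles merges those two cycles (decreasing the cycle count by exactly one), an induction shows that after all $p-1$ multiplications the cycle count drops from $p$ to $1$; since the support of the result is all of $[n]$, it is an $n$-cycle. I would also record that $\sigma\mapsto\PermutationMatrix_{\sigma}$ is a group homomorphism, so $\PermutationMatrix_{\rho}=\PermutationMatrix_{(a_{p-1}\,a_p)}\cdots\PermutationMatrix_{(a_1\,a_2)}\,\PermutationMatrix_{\tau}$.

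Next I would translate this into a circuit-length bound. By \cref{item:transvection_identities5} of \cref{lem:transvection_identities}, each transposition matrix $\PermutationMatrix_{(a_i\,a_{i+1})}$ is a product of three transvections from $\TransvectionGenerators_n$. Concatenating a shortest transvection word for $\PermutationMatrix_{\tau}$ (of length $\Distance(\PermutationMatrix_{\tau})$) with the length-$3$ words for the $p-1$ transposition matrices yields a transvection word evaluating to $\PermutationMatrix_{\rho}$ of length $\Distance(\PermutationMatrix_{\tau})+3(p-1)$, hence $\Distance(\PermutationMatrix_{\rho})\leq\Distance(\PermutationMatrix_{\tau})+3(p-1)$. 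Invoking the hypothesis for the long cycle $\rho$ gives $3(n-1)=\Distance(\PermutationMatrix_{\rho})\leq\Distance(\PermutationMatrix_{\tau})+3(p-1)$, i.e., $\Distance(\PermutationMatrix_{\tau})\geq 3(n-1)-3(p-1)=3(n-p)$. Together with the upper bound, $\Distance(\PermutationMatrix_{\tau})=3(n-p)$, as required.

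I do not expect a genuine obstacle: this is precisely the ``merge cycles with \swap\ gates'' idea sketched before the statement. The only points demanding care are the combinatorial bookkeeping---verifying that the $p-1$ transpositions, with supports chosen across distinct cycles, really collapse $\tau$ to a single $n$-cycle on all of $[n]$ (so that the $p=1$ hypothesis applies)---and keeping the composition/homomorphism conventions straight so that the word-concatenation bound is valid.
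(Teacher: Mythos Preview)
Your proposal is correct and takes essentially the same approach as the paper: both fuse the $p$ cycles of a given permutation into a single long cycle via $p-1$ transpositions, invoke \cref{item:transvection_identities5} of \cref{lem:transvection_identities} to write each transposition as three transvections, and obtain the inequality $\Distance(\PermutationMatrix_{\rho})\leq\Distance(\PermutationMatrix_{\tau})+3(p-1)$. The only cosmetic difference is that the paper phrases the argument contrapositively (assume some $\sigma$ violates the bound, produce a long cycle that also violates it), whereas you argue directly; the content is identical.
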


Since $\SymmetricGroup_n$ is an isometry, \cref{thm:permutation_distance} implies that \cref{conj:permutations} collapses further to any specific long cycle permutation (e.g., $\sigma=(1\cdots n)$).

\section{Experimental Results}\label{sec:experiments}

Here we report experimental results based on the theoretical development above.

\Paragraph{Implementation.}
We have implemented Isometry BFS (\cref{algo:isometry_bfs}) for $\GL(n,2)=\langle\TransvectionGenerators_n\rangle$ using the symmetric group $\SymmetricGroup_n$ as the isometry $\J$.
Interpreting a matrix $h\in \GL(n,2)$ as a graph, its orbit $\J \cdot h$ corresponds to isomorphic graphs.
We utilize the \texttt{nauty}-software~\cite{McKay2014} on graph isomorphism to compute the representative $\Representative(\J \cdot h)$ and the size of the orbit $|\J \cdot h|$ during the exploration.
We do not use the cyclic group $\CyclicGroup_2$ in $\J$ as it requires inverting a matrix, which is a time-consuming operation in general, while its best-case effect would be to halve the memory requirements.
To achieve parallel speedup, we store all representatives of a level in a lock-free concurrent hash table, following the design in \cite{DBLP:conf/fmcad/LaarmanPW10} and using an implementation from \cite{DBLP:conf/nfm/Berg21}. The elements of the level are enumerated and processed in parallel (i.e., each worker takes some batches from the current BFS level), relying on OpenMP.

\Paragraph{Setup.}
We run our experiments on a large 40-core machine (Intel Xeon Gold 6230) with 1.5TB of internal memory and a 2.1GHz clock.
Although we do not report on precise timing measurements, we note that the largest experiment mentioned here was completed in 5.5 hours (wall clock time).

\Paragraph{Cayley graphs for $n=6, 7$.}
Using the above setup, we have performed a full exploration of $G_1,\dots, G_7$.
\cref{tab:symmetryReduction} gives an indication of the memory savings obtained by symmetry reduction.
We find that $\Diameter_6=15$ and $\Diameter_7=18$, confirming that the $\Diameter_n=3(n-1)$ for $n=6,7$.
This implies that $\BreakingPoint\geq 8$, tightening the previous bound of $\BreakingPoint\geq 6$.

We refer to \cref{app:tables} for more details on the computation.
Here we just mention that the largest sphere at $n=7$, $d=14$, i.e. $\Sphere_7(14)$, contains
68,493,002,803,224 elements (\cref{tab:levelsizes}), 
represented by $\lvert\Sphere_7(14)/{\SymmetricGroup_7}\rvert = 13,616,116,190$ orbits (\cref{tab:orbits}).
This largest BFS level was stored in a concurrent hash-table of size $2^{35}$ nodes.
\begin{table}[!b]
\caption{Sizes of $\GL(n, 2)$ and their
symmetry-reduced versions.\label{tab:symmetryReduction}}
\adjustbox{max width=\textwidth}{
\begin{tabular}{|c||r|r|r|r|r|r|r|}
\hline
$n$ & 
\multicolumn{1}{c|}{\;1\;} & 
\multicolumn{1}{c|}{\;2\;} & 
\multicolumn{1}{c|}{3} & 
\multicolumn{1}{c|}{4} & 
\multicolumn{1}{c|}{5} & 
\multicolumn{1}{c|}{6} & 
\multicolumn{1}{c|}{7}
\\
\hline
$\lvert\GL(n, 2)\rvert$ &
1 & 6 & 168 & 20,160 & 9,999,360 & 20,158,709,760 & 163,849,992,929,280\\
$\lvert\GL(n, 2)/{\SymmetricGroup_n}\rvert$ & 
1 & 4 & 33 & 908 & 85,411 & 28,227,922 & 32,597,166,327\\
\hline
\end{tabular}
}
\end{table}

\Paragraph{The distance of permutations for $n=6,7,8$.}
We have also verified that \cref{conj:permutations} holds for all $n\leq 8$, i.e., for every permutation $\sigma\in \SymmetricGroup_n$, we have $\Distance(\PermutationMatrix_{\sigma})=3(n-\NumPermutationCycles(\sigma))$. 
Although this was straightforward for $n=6,7$, since we could compute the whole Cayley graph, the case of $n=8$ was more challenging.
Here, Isometry BFS only succeeded in $12$ levels due to memory limitations.
To circumvent this, we performed a bi-directional search~\cite{DBLP:journals/tcad/AmyMMR13,DBLP:conf/mfcs/AlonGJL24}.
We computed 12 forward levels $F_i$ for $i\in[12]$ from $\IdentityMatrix_8$, and 9 backward levels $B_i$ for $i\in[9]$ from the permutation matrix $\PermutationMatrix_{\sigma}$, where $\sigma=(1\cdots 8)$.
$F_{12}$ and $B_9$ contain 33,719,514,377 and 65,936,050,032 orbits,
resp., requiring a concurrent hash-table of $2^{36}$ nodes.

We confirmed that the two searches did not discover a common element
before $F_{12}\cap B_{9}$, which means that $\Distance(\PermutationMatrix_{\sigma})\geq 21 = 3(n-\NumPermutationCycles(\sigma))$.
This, together with \cref{lem:permutation_distance_upper_bound} and \cref{thm:permutation_distance}, concludes that \cref{conj:permutations} holds for all $n\leq 8$, increasing the previous bound of $n\leq 5$.

\begin{table}[t!]
\caption{Coefficients $a_{d,m}$ for $d\in[10]$, such that $f_d(n) = \sum_{m=0}^{2d}a_{d,m} \binom{n}{m}$.\medskip
\label{tab:polynomials}}
\small
\adjustbox{max width=\textwidth}{
\begin{tabular}{|r|cccccccccc|}
\hline
  & \multicolumn{10}{|c|}{$d$}\\
$m$ & 1 & 2 & 3 & 4 & 5 & 6 & 7 & 8 & 9 & 10\\
\hline
0 & 0 & 0 & 0 & 0 & 0 & 0 & 0 & 0 & 0 & 0 \\
1 & 0 & 0 & 0 & 0 & 0 & 0 & 0 & 0 & 0 & 0 \\
2 & 2 & 2 & 1 & 0 & 0 & 0 & 0 & 0 & 0 & 0 \\
3 & - & 18 & 48 & 60 & 24 & 2 & 0 & 0 & 0 & 0 \\
4 & - & 12 & 344 & 1818 & 5220 & 7522 & 4058 & 541 & 6 & 0 \\
5 & - & - & 360 & 9990 & 91520 & 502840 & 1749420 & 3568470 & 3225280 & 736540 \\
6 & - & - & 120 & 13200 & 398952 & 5617980 & 51420950 & 333774990 & 1541881070 & 4749327810 \\
7 & - & - & - & 7560 & 601020 & 20575002 & 405060894 & 5567304106 & 57957418260 & 470186283084 \\
8 & - & - & - & 1680 & 456960 & 33322352 & 1307932768 & 33637869692 & 641405868096 & 9693333049694 \\
9 & - & - & - & - & 181440 & 30285360 & 2201160528 & 98951246910 & 3171772301544 & 79064742058728 \\
10 & - & - & - & - & 30240 & 16380000 & 2257118640 & 169797210840 & 8680734360440 & 335405245663920 \\
11 & - & - & - & - & - & 4989600 & 1491890400 & 189509942160 & 15030104274900 & 866095057466166 \\
12 & - & - & - & - & - & 665280 & 629354880 & 144347464800 & 17833379314080 & 1504147346394528 \\
13 & - & - & - & - & - & - & 155675520 & 75463708320 & 15090657341760 & 1867284211941720 \\
14 & - & - & - & - & - & - & 17297280 & 26153487360 & 9209014214400 & 1712722052801760 \\
15 & - & - & - & - & - & - & - & 5448643200 & 3994763572800 & 1175390846229600 \\
16 & - & - & - & - & - & - & - & 518918400 & 1176215040000 & 600544748643840 \\
17 & - & - & - & - & - & - & - & - & 211718707200 & 222992728358400 \\
18 & - & - & - & - & - & - & - & - & 17643225600 & 57111121267200 \\
19 & - & - & - & - & - & - & - & - & - & 9050974732800 \\
20 & - & - & - & - & - & - & - & - & - & 670442572800 \\
\hline
\end{tabular}
}
\end{table}

\Paragraph{New lower bounds on $\Diameter_n$ and $\BreakingPoint$.}
By instrumenting our implementation, we computed
the coefficients of the numeric polynomials $f_1(n),\ldots,f_{10}(n)$,
such that $f_d(n)=\lvert\Sphere_n(d)\rvert$ (for $n\geq 2d$),
following \cref{thm:sphere_polynomial_size_symmetry_isometry}.
To compute the coefficients of the polynomial $f_{10}(n)$ of degree 20, we need to compute BFS levels up to $\Sphere_{20}(10)$,
i.e., all \cnot\ circuits on 20 qubits of size 10. 
The last level contains $1.7\times 10^{19}$ elements, represented by ``only'' $7.4\times 10^8$ orbits.
We kept counts of the elements with $0,\ldots,20$ essential indices.

\cref{tab:polynomials} reports the coefficients $a_{d,m}$ of these ten polynomials, where $f_d(n) = \sum_{m=0}^{2d}a_{d,m} \binom{n}{m}$. 
For instance, the first three polynomials read as follows:
\begin{eqnarray*}
\nonumber
\scriptstyle f_1(n)\, =
    & \scriptstyle \underline{2} \binom{n}{2} 
    & \scriptstyle =\, 1(n^2 - n), \\
\nonumber
\scriptstyle f_2(n)\, = 
    & \scriptstyle \underline{2} \binom{n}{2} + \underline{18}\binom{n}{3} + \underline{12} \binom{n}{4} 
    & \scriptstyle =\, \frac{1}{2}(n^4 - 5n^2 + 4n),\\
\nonumber
\scriptstyle f_3(n)\, =
  & \scriptstyle \underline{1} \binom{n}{2} + \underline{48}\binom{n}{3} + \underline{344} \binom{n}{4} + \underline{360} \binom{n}{5} + \underline{120} \binom{n}{6}
  & \scriptstyle =\, \frac{1}{6}(n^6 + 3 n^5 -9 n^4 -63 n^3 + 179 n^2 -111 n).
\nonumber
\end{eqnarray*}
We only proved $f_d(n)=|\Sphere_n(d)|$ for $n\geq 2d$, but one can readily check that this equation holds
for all $1\leq d\leq 10$ and $n\leq 8$,
as predicted by \cref{conj:no_shortcuts}.
We can now use \cref{cor:diameter_lowerbound_and_breakingpoint} for $k=10$ to compute the lower bound $\ell_n(10)$ of $\Diameter_n$.
\begin{table}[!b]
\caption{Computed lower bounds on the diameter, $\ell_n(10)\leq \Diameter_n$\label{tab:lowerbounds}}
\centering
\begin{tabular}{|c||c|c|c|c|c|c|c|c|c|c|c|c|c|}
\hline
$n$      & 20 & 21 & 22 & 23 & 24 & 25 & 26 & 27 & 28 & 29 & 30 & \ldots & 40 \\ 
$\ell_n(10)$ & 58 & 63 & 68 & 73 & 78 & 83 & 89 & 95 & 101 & 107 & 113 & \ldots & 183\\
\hline
\end{tabular}
\end{table}
In \cref{tab:lowerbounds}, we compute $\ell_{20}(10),\ldots,\ell_{30}(10)$ and $\ell_{40}(10)$. 
We find that $\ell_{20}(10)=58 > 57 = 3(20-1)$, so $\BreakingPoint\leq 20$, i.e., there is an optimal circuit on $n=20$ qubits with length $>3(n-1)$.
We also computed $\ell_{40}(10)$ as a witness that for $n=40$, an optimal circuit of length beyond $4n$ exists.

\section{Conclusion}\label{sec:conclusion}
In this paper, we have developed group-theoretic techniques to address questions in optimal synthesis of \cnot\ circuits, concerning 
(i)~the exact sizes of optimal circuits that perform a given function,
(ii)~the size of the largest optimal circuit for a given number of qubits $n$, and
(iii)~the sizes of permutation circuits.

An interesting direction for future research includes extending our approach to larger gate sets (e.g., $\cnot + \mathtt{T}$~\cite{DBLP:conf/rc/MeuliSM18}, or the Clifford fragment~\cite{Bravy-6qubit}).
Other directions are to incorporate hardware layout restrictions,
where not all \cnot-operations are allowed, or to relax the problem by allowing any permutation of the output qubits, or to minimize for circuit depth rather than circuit size.
Optimal synthesis for these cases by means of SAT solving is proposed in~\cite{DBLP:conf/ecai/ShaikP24}. After this paper was submitted,
\cite{Webster2025} shows how a BFS based on canonical graphs can also be used for various of these extensions. 
But computing the size (or depth) of the longest optimal circuit for a given number of qubits is still open for all these cases.

\subsection*{Acknowledgement}
The numerical results were obtained at the Grendel cluster of the Centre for Scientific Computing, Aarhus \url{https://phys.au.dk/forskning/faciliteter/cscaa/}. The research is partially funded by the Innovation Fund Denmark through the project ``Automated Planning for Quantum Circuit Optimization'', and by the European Innovation Council through Accelerator grant no. 190124924.
\bibliographystyle{plain}
\bibliography{bibliography, bibliography-tmp}

\newpage
\appendix
\section{Proofs of \cref{sec:isometries}}\label{sec:app_isometries}

\subsection{Proofs of \cref{subsec:isometries}}

\lemisometriesshufflegenerators*
\begin{proof}
First, assume that $\varphi\in \Isometries(\Group)$. 
An element of $\Group$ has distance 1 if and only if it is a generator. 
Hence, for any generator $s\in S$ both $\varphi(s)$ and $\varphi^{-1}(s)$ must be generators. 
Note that $\varphi^{-1}(s)$ exists since $\varphi$ is bijective. 
We thus have that $\varphi(S)=S$.

Conversely, suppose that $\varphi(S)=S$ and let $g\in G$. 
Note that
\[
 \text{for all $s_1,\dots,s_d\in S$ we have } g=\prod_{i\in[d]} s_i \implies \varphi(g)= \prod_{i\in[d]} \varphi(s_i),
\]
which proves that $\Distance(g) \geq \Distance(\varphi(g))$,
since any word evaluating to $g$ gives us a word evaluating to $\varphi(g)$ of the same length, using the fact $\varphi(s_i)\in S$. 
But we also have that
\[
\text{for all $t_1,\dots,t_d\in S$ we have } \varphi(g)=\prod_{i\in[d]} t_i \implies g= \prod_{i\in[d]} \varphi^{-1}(t_i),
\]
proving that $\Distance(g) \leq \Distance(\varphi(g))$ since $\varphi^{-1}(t_i)\in S$ as well. 
Thus, $\Distance(g)=\Distance(\varphi(g))$, so $\varphi$ is an isometry.
\qed
\end{proof}

\subsection{Proofs of \cref{subsec:isometry_BFS}}

\lemisometrybfscorrectness*
\begin{proof}
Because of symmetry, it suffices to show just one inclusion. 
Take a generator $s\in S$ and consider the orbit $\J\cdot(sg_1)$. 
It suffices to show there exists a $t\in S$ such that $sg_1\in \J(tg_2)$. 
Since $g_1$ and $g_2$ are in the same orbit, there exists an isometry $\varphi\in \J$ such that $\varphi(g_2)=g_1$.
Thus choosing $t=\varphi^{-1}(s)$ (which is an element of $S$ thanks to \cref{lem:isometries_shuffle_generators}) will work because $\varphi(tg_2)=\varphi(t)\varphi(g_2)=sg_1$.
The desired result follows.
\qed
\end{proof}

\subsection{Proofs of \cref{subsec:isometries_in_GL}}

\begin{restatable}{lemma}{lemsymmetricgroupisometries}\label{lem:symmetric_group_isometries}
For all $n\geq 2$, we have $\SymmetricGroup_n\subseteq \Isometries(\GL(n,2))$.
\end{restatable}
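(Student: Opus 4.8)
The plan is to invoke \cref{lem:isometries_shuffle_generators}: since conjugation by a fixed group element is always an automorphism, it suffices to show that for each $\sigma \in \SymmetricGroup_n$, the induced automorphism $M \mapsto \PermutationMatrix_\sigma M \PermutationMatrix_\sigma^{-1}$ maps the generating set $\TransvectionGenerators_n$ onto itself, i.e.\ $\sigma(\TransvectionGenerators_n) = \TransvectionGenerators_n$.

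First I would compute $\PermutationMatrix_\sigma \TransvectionMatrix_{i,j} \PermutationMatrix_\sigma^{-1}$ explicitly. Writing $\TransvectionMatrix_{i,j} = \IdentityMatrix_n + \Delta_{i,j}$ and using linearity of conjugation together with $\PermutationMatrix_\sigma \IdentityMatrix_n \PermutationMatrix_\sigma^{-1} = \IdentityMatrix_n$, the task reduces to identifying $\PermutationMatrix_\sigma \Delta_{i,j} \PermutationMatrix_\sigma^{-1}$. Using \cref{eq:permutation_conjugation}, the $(k,l)$ entry of this matrix is $\Delta_{i,j}[\sigma^{-1}(k), \sigma^{-1}(l)]$, which is $1$ exactly when $\sigma^{-1}(k) = i$ and $\sigma^{-1}(l) = j$, i.e.\ when $k = \sigma(i)$ and $l = \sigma(j)$. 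Hence $\PermutationMatrix_\sigma \Delta_{i,j} \PermutationMatrix_\sigma^{-1} = \Delta_{\sigma(i), \sigma(j)}$, and therefore
\[
\PermutationMatrix_\sigma \TransvectionMatrix_{i,j} \PermutationMatrix_\sigma^{-1} = \IdentityMatrix_n + \Delta_{\sigma(i), \sigma(j)} = \TransvectionMatrix_{\sigma(i), \sigma(j)}.
\]
Since $\sigma$ is a bijection of $[n]$ and $i \neq j$ iff $\sigma(i) \neq \sigma(j)$, the map $(i,j) \mapsto (\sigma(i), \sigma(j))$ is a bijection of the index set $\{(i,j) \mid i,j \in [n],\, i \neq j\}$, so $\sigma(\TransvectionGenerators_n) = \TransvectionGenerators_n$.

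With $\sigma(\TransvectionGenerators_n) = \TransvectionGenerators_n$ established, \cref{lem:isometries_shuffle_generators} immediately gives that the conjugation automorphism lies in $\Isometries(\GL(n,2))$ for every $\sigma$, hence $\SymmetricGroup_n \subseteq \Isometries(\GL(n,2))$. There is no serious obstacle here; the only point requiring care is getting the direction of the permutation right in the entrywise computation (i.e.\ confirming it is $\TransvectionMatrix_{\sigma(i),\sigma(j)}$ rather than $\TransvectionMatrix_{\sigma^{-1}(i),\sigma^{-1}(j)}$), which is settled cleanly by \cref{eq:permutation_conjugation}.
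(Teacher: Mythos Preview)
Your proof is correct and follows essentially the same approach as the paper: both compute $\sigma\cdot T_{i,j}=T_{\sigma(i),\sigma(j)}$ via \cref{eq:permutation_conjugation} and then invoke \cref{lem:isometries_shuffle_generators}. Your version is a bit more explicit in decomposing $T_{i,j}=I_n+\Delta_{i,j}$ and verifying the bijection on index pairs, but the argument is the same.
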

\begin{proof}
Recall our definition of permutation matrices in \cref{subsec:group_structure_of_cnot}, where we noted that the $(i,j)$-th entry of $M$ equals the $(\sigma(i),\sigma(j))$-th entry of $\PermutationMatrix_\sigma M\PermutationMatrix_\sigma^{-1}$. 
In particular, for a transvection $T_{i,j}$, we have $\sigma\cdot T_{i,j}=T_{\sigma(i),\sigma(j)}$.
Intuitively, this group action simply relabels the bits of our register according to $\sigma$.
Since transvections form a generating set of $\GL(n,2)$, we can use \cref{lem:isometries_shuffle_generators} to get statement of the lemma.
\qed
\end{proof}

\begin{restatable}{lemma}{lemcyclicgroupisometries}\label{lem:cyclic_group_isometries}
For all $n\geq 2$, we have $\CyclicGroup_2\subseteq \Isometries(\GL(n,2))$.
\end{restatable}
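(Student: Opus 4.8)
The plan is to apply \cref{lem:isometries_shuffle_generators}: since $\TransvectionGenerators_n$ generates $\GL(n,2)$ and is symmetric, it suffices to show that the transpose-inverse map $\varphi := (-1)\cdot(-)$ sends the generating set $\TransvectionGenerators_n$ onto itself. So the whole proof reduces to computing $\varphi(T_{i,j})$ for an arbitrary transvection.

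First I would recall that $T_{i,j} = \IdentityMatrix_n + \Delta_{i,j}$ is its own inverse by \cref{item:transvection_identities1} of \cref{lem:transvection_identities}. Hence $(T_{i,j}^\top)^{-1} = (T_{i,j}^{-1})^\top = T_{i,j}^\top$. Now $T_{i,j}^\top = (\IdentityMatrix_n + \Delta_{i,j})^\top = \IdentityMatrix_n + \Delta_{i,j}^\top = \IdentityMatrix_n + \Delta_{j,i} = T_{j,i}$. Therefore $\varphi(T_{i,j}) = T_{j,i}$, which is again an element of $\TransvectionGenerators_n$ (note $i \neq j$, so $T_{j,i}$ is a genuine transvection). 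Running the same computation with the roles of $i$ and $j$ swapped shows $\varphi(T_{j,i}) = T_{i,j}$, so $\varphi$ restricted to $\TransvectionGenerators_n$ is the involution swapping $T_{i,j}$ and $T_{j,i}$; in particular $\varphi(\TransvectionGenerators_n) = \TransvectionGenerators_n$.

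It remains to note that $\varphi$ is an automorphism of $\GL(n,2)$, which is exactly the computation already carried out in the paragraph preceding \cref{thm:isometries_characterization_GL} (transpose and inverse are both anti-automorphisms, so their composition is an automorphism), together with the fact that $\varphi$ is a bijection (it is its own inverse, since transpose and inverse commute and each has order two). With $\varphi \in \Automorphisms(\GL(n,2))$ and $\varphi(\TransvectionGenerators_n) = \TransvectionGenerators_n$ established, \cref{lem:isometries_shuffle_generators} immediately yields $\varphi \in \Isometries(\GL(n,2))$, i.e., $\CyclicGroup_2 \subseteq \Isometries(\GL(n,2))$ (the identity element $1$ of $\CyclicGroup_2$ acts trivially and is vacuously an isometry).

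There is no real obstacle here: the only thing to be slightly careful about is the transpose identity $\Delta_{i,j}^\top = \Delta_{j,i}$ and the bookkeeping that $T_{i,j}^{-1} = T_{i,j}$ so that inversion does not complicate the image. Everything else is a direct invocation of \cref{lem:isometries_shuffle_generators} and the already-noted automorphism property.
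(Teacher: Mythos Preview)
Your proposal is correct and follows essentially the same approach as the paper: compute that the transpose-inverse map sends $T_{i,j}$ to $T_{j,i}$ (using $T_{i,j}^{-1}=T_{i,j}$ from \cref{lem:transvection_identities} and $T_{i,j}^\top=T_{j,i}$), conclude $\varphi(\TransvectionGenerators_n)=\TransvectionGenerators_n$, and invoke \cref{lem:isometries_shuffle_generators}. The paper's version is just terser.
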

\begin{proof}
First, observe that, for a transvection $T_{i,j}$,  we have $T_{i,j}^{\top}=T_{j,i}$.
By \cref{item:transvection_identities1} of \cref{lem:transvection_identities}, we have $T_{i,j}^{-1}=T_{i,j}$, and thus $-1\cdot T_{i,j}=T_{j,i}$.
Since transvections form a generating set of $\GL(n,2)$, \cref{lem:isometries_shuffle_generators} yields the statement of the lemma.
\qed
\end{proof}
\begin{restatable}{lemma}   {lemcommutativitygroupaction}\label{lem:commutativegroupaction}
The actions of $\SymmetricGroup_n$ and $\CyclicGroup_2$ on $\GL(n,2)$ commute, i.e. for every $M\in \GL(n,2)$, $\sigma\in \SymmetricGroup_n$ and $\xi\in \CyclicGroup_2$ it holds that $\xi\cdot (\sigma\cdot M)=\sigma \cdot (\xi\cdot M) $.
\end{restatable}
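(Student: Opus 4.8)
The plan is to verify the commutation directly on matrices, since both actions are given by explicit formulas. Recall that $\sigma\cdot M=\PermutationMatrix_\sigma M \PermutationMatrix_\sigma^{-1}$ and that $\CyclicGroup_2$ acts trivially for $\xi=1$ and by $(-1)\cdot M=(M^\top)^{-1}$. The case $\xi=1$ is immediate, so the only content is the case $\xi=-1$, where we must show $((\PermutationMatrix_\sigma M \PermutationMatrix_\sigma^{-1})^\top)^{-1}=\PermutationMatrix_\sigma (M^\top)^{-1}\PermutationMatrix_\sigma^{-1}$.

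First I would compute the left-hand side. Taking the transpose, $(\PermutationMatrix_\sigma M \PermutationMatrix_\sigma^{-1})^\top=(\PermutationMatrix_\sigma^{-1})^\top M^\top \PermutationMatrix_\sigma^\top$. Here I use the key fact, recorded in \cref{subsec:group_structure_of_cnot}, that permutation matrices are orthogonal: $\PermutationMatrix_\sigma^\top=\PermutationMatrix_\sigma^{-1}$, so $(\PermutationMatrix_\sigma^{-1})^\top=\PermutationMatrix_\sigma$. Thus $(\PermutationMatrix_\sigma M \PermutationMatrix_\sigma^{-1})^\top=\PermutationMatrix_\sigma M^\top \PermutationMatrix_\sigma^{-1}$. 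Now inverting, $(\PermutationMatrix_\sigma M^\top \PermutationMatrix_\sigma^{-1})^{-1}=\PermutationMatrix_\sigma (M^\top)^{-1}\PermutationMatrix_\sigma^{-1}$, which is exactly $\sigma\cdot((-1)\cdot M)$. Hence $(-1)\cdot(\sigma\cdot M)=\sigma\cdot((-1)\cdot M)$, completing the argument.

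There is essentially no obstacle here; the only thing to be careful about is using $\PermutationMatrix_\sigma^\top=\PermutationMatrix_\sigma^{-1}$ (and that transposition and inversion are anti-homomorphisms on matrix products), all of which are already established in the preliminaries. An alternative, slightly slicker, phrasing would be to note that transposition and inversion are each anti-automorphisms of $\GL(n,2)$ whose composition $M\mapsto (M^\top)^{-1}$ is therefore an automorphism, and that conjugation by $\PermutationMatrix_\sigma$ commutes with this automorphism because the transpose-inverse map fixes every permutation matrix ($\PermutationMatrix_\sigma\mapsto ((\PermutationMatrix_\sigma)^\top)^{-1}=(\PermutationMatrix_\sigma^{-1})^{-1}=\PermutationMatrix_\sigma$); then for any inner automorphism $c_{g}$ with $g$ fixed by an automorphism $\psi$, one has $\psi\circ c_g=c_{\psi(g)}\circ\psi=c_g\circ\psi$. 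I would likely present the direct matrix computation as the main proof, as it is self-contained and transparent.
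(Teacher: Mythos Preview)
Your proof is correct, but it follows a different route from the paper's. The paper argues that since both actions are group automorphisms of $\GL(n,2)$, it suffices to verify commutation on the generating set $\TransvectionGenerators_n$; there one simply checks that $\sigma\cdot T_{i,j}=T_{\sigma(i),\sigma(j)}$ and $(-1)\cdot T_{i,j}=T_{j,i}$ compose in either order to $T_{\sigma(j),\sigma(i)}$. Your approach instead computes directly on an arbitrary matrix using the orthogonality relation $\PermutationMatrix_\sigma^\top=\PermutationMatrix_\sigma^{-1}$, which sidesteps the reduction to generators entirely. Your version is arguably more self-contained (it does not appeal to the automorphism property), while the paper's version fits the circuit-level viewpoint used elsewhere and makes the effect on transvections explicit. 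Your alternative phrasing---that the transpose-inverse automorphism fixes every $\PermutationMatrix_\sigma$, hence commutes with conjugation by $\PermutationMatrix_\sigma$---is a clean conceptual summary of the same computation.
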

\begin{proof}
    The result follows immediately from the fact that the actions commute on transvections. The interesting case is when $\xi=-1$ where we see
    \[
    \begin{tikzcd}
        &T_{i,j} \arrow[r,"\sigma", maps to] \arrow[d,"-1", maps to] &T_{\sigma(i),\sigma(j)} \arrow[d,"-1", maps to] \\
        &T_{j,i} \arrow[r,"\sigma", maps to] \arrow[r,"\sigma", maps to] &T_{\sigma(j),\sigma(i)}
    \end{tikzcd}
    \]
    using the observations from the proofs of \cref{lem:symmetric_group_isometries} and \cref{lem:cyclic_group_isometries}.
\qed
\end{proof}

The next lemma is a stepping stone towards \cref{thm:isometries_characterization_GL} afterwards.

\begin{restatable}{lemma}{lemcontroltargetswapstability}\label{lem:control_target_swap_stability}
Fix some $n\geq 3$ and $\psi\in \Isometries(\GL(n,2))$.
Consider any two distinct indices $i,j\in [n]$, and let $\psi(T_{i,j})=T_{a,b}$, for some $a,b\in[n]$.
Then $\psi(T_{j,i})=T_{b,a}$.
\end{restatable}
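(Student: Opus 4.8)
\textbf{Proof plan for \cref{lem:control_target_swap_stability}.}
The plan is to exploit the algebraic relations of \cref{lem:transvection_identities} together with \cref{lem:isometries_shuffle_generators}, which tells us that $\psi$ permutes the generating set $\TransvectionGenerators_n$. Write $\psi(T_{j,i})=T_{c,d}$ for some distinct $c,d\in[n]$; our goal is to show $\{c,d\}=\{a,b\}$ and, more precisely, $(c,d)=(b,a)$. The key structural fact distinguishing the ordered pair $(i,j)$ from $(j,i)$ among transvections is that $T_{i,j}$ and $T_{j,i}$ do \emph{not} commute (their product has order $3$, being a relabelling of the $3$-cycle-free relation in \cref{item:transvection_identities3,item:transvection_identities5}), whereas $T_{i,j}$ commutes with every $T_{k,l}$ having $\{k,l\}\cap\{i,j\}=\emptyset$ or sharing an index in the ``parallel'' way of \cref{item:transvection_identities4}. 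Since $\psi$ is an automorphism, it preserves commuting/non-commuting pairs, so $T_{a,b}=\psi(T_{i,j})$ and $T_{c,d}=\psi(T_{j,i})$ must fail to commute; by \cref{lem:transvection_identities} this forces $\{a,b\}=\{c,d\}$ and moreover $(c,d)\neq(a,b)$, hence $(c,d)=(b,a)$, which is exactly the claim.

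First I would record the elementary commutation dichotomy for transvections: for distinct ordered pairs $(i,j)$ and $(k,l)$, the transvections $T_{i,j}$ and $T_{k,l}$ commute if and only if $j\neq k$ and $i\neq l$ (equivalently, $\{(i,j),(k,l)\}$ is not of the form $\{(i,j),(j,\cdot)\}$ or $\{(i,j),(\cdot,i)\}$); this is immediate from $\Delta_{i,j}\Delta_{k,l}=[j=k]\,\Delta_{i,l}$ over $\F_2$, and it is consistent with items~\ref{item:transvection_identities3} and~\ref{item:transvection_identities4} of \cref{lem:transvection_identities}. In particular $T_{i,j}$ and $T_{j,i}$ never commute. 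Since $\psi$ is an automorphism, $\psi(T_{i,j})\psi(T_{j,i})=\psi(T_{i,j}T_{j,i})$ and likewise in the other order, so $T_{a,b}$ and $T_{c,d}$ do not commute. By the dichotomy this means $b=c$ or $a=d$.

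Next I would rule out the ``same ordered pair'' possibility: $(c,d)\neq(a,b)$ because $\psi$ is injective and $T_{i,j}\neq T_{j,i}$. It remains to eliminate the case where $T_{a,b}$ and $T_{c,d}$ are non-commuting but $\{a,b\}\neq\{c,d\}$ — i.e. they share exactly one index ``in series''. Here I would use a third generator to get a contradiction, invoking $n\geq 3$. Pick an index $k\in[n]\setminus\{i,j\}$ and consider $T_{k,i}$ and $T_{k,j}$: both commute with one of $T_{i,j},T_{j,i}$ but not the other in a controlled pattern (e.g. $T_{k,i}$ commutes with $T_{j,i}$ but not $T_{i,j}$), and these relations, transported through $\psi$, pin down which indices of $\psi$-images must coincide. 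Chasing two or three such relations forces $\psi(T_{j,i})$ to have essential indices exactly $\{a,b\}$, hence $(c,d)=(b,a)$. The main obstacle is precisely this last elimination step: one must argue carefully, using the availability of a third coordinate $k$, that the ``series-overlap'' configuration for $T_{a,b},T_{c,d}$ is incompatible with $\psi$ being a \emph{global} automorphism — a local two-generator argument does not suffice, and I expect to need the commutation pattern of $\psi$ on the whole ``star'' of transvections through index $i$ (or through $j$) to close the case.
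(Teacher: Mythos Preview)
Your approach differs from the paper's and takes a longer road. The paper exploits the isometry hypothesis directly rather than through commutation patterns: the element $M:=T_{j,i}T_{i,j}$ has $M[j,j]=0$, so $M$ is neither $\IdentityMatrix_n$ nor a generator, hence $\Distance(M)=2$. Writing $\psi(T_{j,i})=T_{c,d}$ and using \cref{item:transvection_identities3}, one gets $\psi(M)=(T_{a,b}T_{c,d})^2$, which must also have distance $2$. But for every choice $(c,d)\neq(b,a)$, \cref{item:transvection_identities2,item:transvection_identities4} force $(T_{a,b}T_{c,d})^2$ to be $\IdentityMatrix_n$ or a single transvection, i.e.\ distance at most $1$. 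Done --- no third index, no star.

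You actually already hold the key to an equally short argument: you note parenthetically that $T_{i,j}T_{j,i}$ has order $3$, but then abandon this in favour of a commutation chase. Automorphisms preserve element orders, and in the ``series-overlap'' cases you worry about ($b=c,\ d\neq a$ or $a=d,\ c\neq b$) the product $T_{a,b}T_{c,d}$ has order $4$, since its square is a single transvection (apply \cref{item:transvection_identities2} to the conjugate product $T_{c,d}T_{a,b}$). That excludes those cases immediately. Your first-paragraph claim that non-commutation alone forces $\{a,b\}=\{c,d\}$ is false (witness $T_{1,2},T_{2,3}$); you catch this yourself in the second paragraph, but the proposed repair --- chasing commutation relations on the star of transvections through a third index $k$ --- is plausible yet needlessly intricate compared to either one-line route above.
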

\begin{proof}
To simplify the notation, let $\TVShort{i,j}$ denote the transvection $T_{i,j}$. 
We also write $\psi\TVShort{i,j}$ to mean $\psi(\TVShort{i,j})$ i.e. the image of $\TVShort{i,j}$ under $\psi$, to avoid the cluttering extra parentheses.
Since $\psi\TVShort{i,j}=\TVShort{a,b}$, using \cref{item:transvection_identities3} of \cref{lem:transvection_identities}, we have
\[
\psi(\TVShort{j,i}\TVShort{i,j})=\psi((\TVShort{i,j}\TVShort{j,i})^2)=(\psi\TVShort{i,j}\psi\TVShort{j,i})^2=(\TVShort{a,b}\psi\TVShort{j,i})^2.
\numberthis
\label{eq:two_hop_distances}
\]
Let $M=\TVShort{j,i}\TVShort{i,j}$, and observe that the $(j,j)$-th entry of $M$ is $0$, which means $M$ is neither the identity matrix nor a generator, and thus $\Distance(M)=2$.
Since $\psi$ is an isometry, we obtain that $(\TVShort{a,b}\psi\TVShort{j,i})^2$ must also have distance 2, due to \cref{eq:two_hop_distances}.
We now have $\psi\TVShort{j,i}=\TVShort{b,a}$ since for any other indices, \cref{item:transvection_identities2}, \cref{item:transvection_identities2.5} and \cref{item:transvection_identities4} of \cref{lem:transvection_identities} would imply that the product $(\TVShort{a,b}\psi\TVShort{j,i})^2$ has distance $0$ or $1$.
The desired result follows.
\qed
\end{proof}

\thmisometriescharacterizationGL*
\begin{proof}
In this proof, we use the simplified notation from the proof of \cref{lem:control_target_swap_stability}.
We will argue that  for any $\psi\in\Isometries(\GL(n,2))$, there exists $(\sigma,\xi)\in \SymmetricGroup_n\times \CyclicGroup_2$ such that $\psi(M)=(\sigma,\xi)\cdot M$ for all $M\in \GL(n,2)$.

Consider three distinct numbers $i,j,k\in [n]$. 
By \cref{lem:isometries_shuffle_generators}, $\psi$ must map generators to other generators. We analyse what happens to transvections with $i$ as target.
Let $\psi\TVShort{i,j}=\TVShort{a,b}$ and $\psi\TVShort{j,k}=\TVShort{c,d}$, for $a,b,c,d\in [n]$.
Since $\psi$ is an automorphism, we have 
\[
\psi\TVShort{i,k}=\psi((\TVShort{i,j}\TVShort{j,k})^2)=(\TVShort{a,b}\TVShort{c,d})^2.
\numberthis\label{eq:control_target_swap_stability_eq1}
\]
Due to \cref{item:transvection_identities2} and \cref{item:transvection_identities4} of \cref{lem:transvection_identities}, \cref{eq:control_target_swap_stability_eq1} implies that either
(1)~$a\neq d$ and $b=c$, or
(2)~$a=d$ and $b\neq c$.
We examine each case.

\begin{compactenum}
\item Assume that $a\neq d$ and $b=c$.
Continuing on \cref{eq:control_target_swap_stability_eq1}, we have
\[
\psi\TVShort{i,k}=(\TVShort{a,b}\TVShort{b,d})^2=\TVShort{a,d}.
\]
We see for both $\psi\TVShort{i,k}$ and $\psi\TVShort{i,j}$, with $i$ as the target index in the input we have $a$ as the target index in the output. 

We will argue that this is always the case, i.e., $\psi\TVShort{i,\ell}=\psi\TVShort{a,q}$ for all $\ell\in[n]$ and some $q\in [n]$.
Towards this, assume that $\ell$ is different from $i,j,k$, and let $\psi\TVShort{j,\ell}=\TVShort{p,q}$.
Then, we may once again calculate
\[
\psi\TVShort{i,\ell} = \psi((\TVShort{i,j}\TVShort{j,\ell})^2) = (\TVShort{a,b}\TVShort{p,q})^2
\]
which, in turn, implies that either
(i)~$a\neq q$ and $b=p$, or
(ii)~$a=q$ and $b\neq p$.
Observe that case (i) proves our claim.
Assume for contradiction that case (ii) holds.
We then have
\[
\IdentityMatrix_n=\psi(\IdentityMatrix_n)=\psi((\TVShort{i,k}\TVShort{j,\ell})^2)=(\TVShort{a,b}\TVShort{p,q})^2=(\TVShort{a,b}\TVShort{p,a})^2=\TVShort{p,b}
\]
which is clearly false.
Thus case (i) holds, concluding our claim that whenever $i$ is the target index of a transvection $T$, $a$ is the target index of $\psi(T)$. 
By \cref{lem:control_target_swap_stability}, we then also have that whenever whenever $i$ is the control index of a transvection $T$, $a$ is the control index of $\psi(T)$. 
For arbitrary $j,k\in [n]$ if we suppose $\psi\TVShort{i,j}=\TVShort{a,b}$ and $\psi\TVShort{i,k}=\TVShort{a,c}$ we then get
\[
\psi\TVShort{j,k}=\psi((\TVShort{j,i}\TVShort{i,k})^2)=(\TVShort{b,a}\TVShort{a,c})^2=\TVShort{b,c}.
\]
Thus, for any $j\in [n]$ there exists a unique element $\sigma(j) \in [n]$, such that when $j$ is the target of a transvection $T$, $\sigma(j)$ is the target of $\psi (T)$. 
\\Using \cref{lem:control_target_swap_stability} once again, this defines a map $\sigma\colon [n] \to [n]$ with the property that $\psi(T_{i,j})=T_{\sigma(i),\sigma(j)}$ for every $i,j\in [n]$. 
This $\sigma$ must be injective since $\sigma(i)=\sigma(j)$ for $i\neq j$ would imply that $\psi\TVShort{i,j}=[\sigma(i),\sigma(j)]$ would not be a well-defined transvection.
We conclude that $\sigma$ is a permutation of $[n]$ and indeed $\psi\TVShort{i,j}=(\sigma,1)\cdot \TVShort{i,j}$.

\item Assume that $a=d$ and $b\neq c$.
The proof for this case proceeds similarly as case (1), with the conclusion that $\psi(T_{i,j})=(\sigma,-1)\cdot T_{i,j}$ for some permutation $\sigma\in \SymmetricGroup_n$.
\end{compactenum}
\qed
\end{proof}
\section{Proofs of \cref{sec:diameter_lowerbounds}}\label{sec:app_diameter_lowerbounds}

\subsection{Proofs of \cref{subsec:inequality}}

\lemspheresizedecomposition*
\begin{proof}
Define a map
\[
    f\colon \bigtimes_{i=1}^{\ell} \Sphere(d_i) \to \Group.
\]
by $f(g_1, \dots, g_\ell) = g_1 \cdots g_\ell$. We claim that $R(d)$ is a subset of the image of $f$, from which it follows that
\[
    \lvert R(d) \rvert \leq \lvert \mathrm{im}(f)\rvert \leq \left\lvert \bigtimes_{i=1}^{\ell} \Sphere(d_i) \right\rvert = \prod_{i=1}^\ell \lvert R(d_i) \rvert.
\]
To see this, let $g \in R(d)$ and write $g = s_1 s_2 \cdots s_d$ for $s_i \in S$. Now, for $j = 1, \dots, \ell$, define $g_j = s_{a_j} s_{a_j+1} \cdots s_{b_j}$, where  $a_j = 1 + \sum_{k=1}^{j-1} d_j$ and $b_j = \sum_{k=1}^j d_j$. Then $g = g_1 \cdots g_j$. Moreover, $g_j \in \Sphere(d_j)$ since the definition of $g_j$ uses at exactly $d_j$ generators, and no shorter products of generators would work, since otherwise we would be able to write $g$ as a product of strictly less than $d$ generators.
\qed
\end{proof}

\thmgrouporderupperbound*
\begin{proof}
First, write the order of the group as 
$
|\Group|=\sum_{d=0}^{\Diameter}|\Sphere(d)|
$
and then apply \cref{lem:sphere_size_decomposition} on each summand $\lvert\Sphere(d)\rvert$ using the partition of $d$ given by $d=\Quotient_k(d)\cdot k + \Remainder_k(d)$.
\qed
\end{proof}

\subsection{Proofs of \cref{subsec:polynomial}}

\lemessentialindicesinvariant*
\begin{proof}
\begin{compactenum}
\item This is immediate from the definition of the embedding $\Embedding_{m,n}$.
\item The result follows from \cref{eq:permutation_conjugation}, which says $M[i,j]=(\sigma \cdot M)[\sigma(i),\sigma(j)]$. Thus the left hand side is $1$ for $j\in [n]\setminus\{i\}$ if and only if the right hand side is. 
\end{compactenum}
\qed
\end{proof}

\lemcanonicalrepresentative*
\begin{proof}
Since $\Essential(N)=m\leq n$, we can establish an injective map $\tau\colon[m]\to [n]$ such that the image of $\tau$ is $\Essential(N)$.
Extend $\tau$ to a permutation $\sigma\colon [n]\to [n]$ by having the indices $i\in\{m+1,\dots,n\}$ map to the non-essential indices of $N$, of which we have $|[n]-\Essential(N)|=n-m$. 
By construction $\sigma\cdot N$ is now of the form
\[
\begin{bmatrix}
M & 0 \\
0 & \IdentityMatrix_{n-m}
\end{bmatrix}
\]
for some matrix $M$.
Note that $M$ must be invertible, thus $M\in \GL(m,2)$, yielding that $\Embedding_{m,n}=\sigma\cdot N$, as desired.
\qed
\end{proof}

\lemessentialindicesareessential*
\begin{proof}
\begin{compactenum}
\item 
Let $N$ be the matrix that $C$ evaluates to, and consider any index $i\in [n]$ that $C$ does not use. We will argue that $i\not \in \Essential(N)$.
Indeed, the $i$-th row of $N$ is $e_i^\top$, since evaluating $C$ will never involve adding another row to the $i$-th one. 
Similarly, the $i$-th column of $N$ is equal to $e_i$ since evaluating $C$ will never involve adding the $i$-th row to another row.
Hence $i\not \in \Essential(N)$.
\item
Let $m=|\Essential(N)|$.
By \cref{lem:canonical_representative}, there exists a permutation $\sigma\in \SymmetricGroup_n$ and a matrix $M\in \GL(m,2)$ such that $\Embedding_{m,n}(M)=\sigma \cdot N$.
Let $C\in \TransvectionGenerators_m^*$ be a circuit generating $M$.
We obtain a circuit $C'\in \TransvectionGenerators_n^*$ by simply applying $\Embedding_{m,n}$ on each transvection of $C$.
By definition, $\Embedding_{m,n}$ maps transvections to transvections without changing the indices used, thus $C'$ only uses indices from $[m]$.
Finally, we obtain a circuit $C''$ that evaluates to $N$ by acting with $\sigma^{-1}$ on each transvection of $C'$.
Observe that $C''$ uses only the essential indices of $N$, as desired.
\end{compactenum}
\qed
\end{proof}

\lemessentialindicesatdistance*
\begin{proof}
Consider any circuit $C$ of length $d$ that evaluates to $N$.
Due to \cref{item:essential_indeces_are_necessary} of \cref{lem:essential_indices_are_essential}, $C$ uses all essential indices of $N$.
Moreover, since a transvection uses $2$ indices, we have that $C$ uses at most $2d$ indices.
Thus $\lvert\Essential(N)\rvert\leq 2\Distance(N)$.
\qed
\end{proof}

\Paragraph{The polynomial size of spheres.}
Here we make the arguments behind \cref{thm:sphere_polynomial_size_symmetry_isometry} formal.
We begin with a lemma that the embedding $\Embedding_{m, n}$ preserves symmetry orbits.

\begin{restatable}{lemma}{lemorbitsofessentialindicesareunique}\label{lem:orbits_of_essential_indices_are_unique}
Let $0\leq m\leq n$ and $M_1,M_2\in \GL(m,2)$. 
Then $M_1\in \SymmetricGroup_m\cdot M_2$  if and only if  $\Embedding_{m,n}(M_1)\in \SymmetricGroup_n\cdot \Embedding_{m,n}(M_2)$.
\end{restatable}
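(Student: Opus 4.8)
The plan is to prove both directions by exploiting the concrete block-diagonal shape of the embedding $\Embedding_{m,n}$ together with the entry-wise description of how a permutation acts by conjugation (\cref{eq:permutation_conjugation}). Throughout, write $M_i' = \Embedding_{m,n}(M_i)$, so that $M_i'$ agrees with $M_i$ on the upper-left $m\times m$ block, is $\IdentityMatrix_{n-m}$ on the lower-right block, and is $0$ elsewhere. Note in particular (by \cref{lem:essential_indices_invariant}\,\ref{item:essential_invariant1}) that $\Essential(M_i') = \Essential(M_i) \subseteq [m]$, i.e.\ the indices $m+1,\dots,n$ are non-essential for $M_i'$.

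For the forward direction, suppose $M_1 = \sigma\cdot M_2$ for some $\sigma\in\SymmetricGroup_m$. Extend $\sigma$ to a permutation $\widehat\sigma\in\SymmetricGroup_n$ by letting it fix every index in $\{m+1,\dots,n\}$. A direct check using the block structure (or \cref{eq:permutation_conjugation}) shows $\widehat\sigma\cdot M_1' = \Embedding_{m,n}(\sigma\cdot M_1)$; since this is the embedding of $M_2 = \sigma^{-1}\cdot M_1$ up to relabeling, one gets $\widehat\sigma\cdot M_2' = M_1'$, hence $M_1'\in\SymmetricGroup_n\cdot M_2'$. (Concretely: conjugation by a block-diagonal permutation matrix $\begin{bmatrix}P_\sigma & 0\\ 0 & \IdentityMatrix_{n-m}\end{bmatrix}$ acts as conjugation by $P_\sigma$ on the top block and trivially on the bottom block, so it commutes with the embedding.)

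The converse is the step I expect to carry the real content. Suppose $M_1' = \pi\cdot M_2'$ for some $\pi\in\SymmetricGroup_n$. The key observation is that $\pi$ must map non-essential indices of $M_2'$ to non-essential indices of $M_1'$: by \cref{lem:essential_indices_invariant}\,\ref{item:essential_invariant2}, $\Essential(M_1') = \Essential(\pi\cdot M_2') = \pi(\Essential(M_2'))$, and both essential sets lie inside $[m]$ (indeed they are the \emph{same} set $S_0 := \Essential(M_1)=\Essential(M_2)$, since permutation is an isometry-type invariant — actually one only needs $|\Essential(M_1)|=|\Essential(M_2)|$, which holds because $M_1\in\SymmetricGroup_m\cdot M_2$ is what we are proving; so instead argue directly that $\pi$ restricts to a bijection on $[m]$). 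Here is the cleaner route: $\pi(S_0)=S_0'$ where $S_0=\Essential(M_2')\subseteq[m]$ and $S_0'=\Essential(M_1')\subseteq[m]$. On the complement, for every index $i\notin\pi(\{1,\dots,m\})$ the $i$-th row and column of $M_1'$ are $e_i^\top$ and $e_i$ respectively (this uses that outside $S_0'$ the matrix $M_1'$ is forced to be diagonal, together with invertibility); combined with the analogous fact for $M_2'$ and the entrywise identity $M_1'[i,j] = M_2'[\pi^{-1}(i),\pi^{-1}(j)]$, one deduces that we may modify $\pi$ on the non-essential indices without changing the equation $M_1'=\pi\cdot M_2'$, arranging that $\pi$ maps $\{m+1,\dots,n\}$ to itself and hence $[m]$ to itself. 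Writing $\sigma := \pi|_{[m]}\in\SymmetricGroup_m$, the block structure then gives $M_1 = \sigma\cdot M_2$, completing the proof.

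The main obstacle is the bookkeeping in the converse: a priori $\pi$ need not preserve the block $[m]$ versus $\{m+1,\dots,n\}$, because non-essential indices of $M_2'$ that happen to lie in $[m]$ could be swapped with non-essential indices $>m$. The fix is exactly the "modify $\pi$ on non-essential indices" move — one shows that any permutation of the non-essential indices of $M_2'$ stabilizes $M_2'$, and likewise for $M_1'$, so we are free to post- and pre-compose $\pi$ with such permutations to line the blocks up; I would isolate this as a small sublemma ("$\tau\cdot M = M$ whenever $\tau$ fixes $\Essential(M)$ pointwise", immediate from \cref{eq:permutation_conjugation}) to keep the argument clean.
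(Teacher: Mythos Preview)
Your proposal is correct and follows essentially the same approach as the paper: the forward direction extends $\sigma$ trivially on $\{m+1,\dots,n\}$, and the converse restricts/modifies the given permutation using the fact (via \cref{lem:essential_indices_invariant}) that it must carry $\Essential(M_2')\subseteq[m]$ onto $\Essential(M_1')\subseteq[m]$, after which any extension to a bijection of $[m]$ works. Your explicit sublemma (``$\tau\cdot M=M$ whenever $\tau$ fixes $\Essential(M)$ pointwise'') is exactly what the paper invokes implicitly in its final ``Observe that $M_1=\sigma'\cdot M_2$'', so isolating it is a good call; you should, however, tighten the exposition of the forward direction (the detour through $\widehat\sigma\cdot M_1'$ is unnecessary---just apply $\widehat\sigma$ to $M_2'$ directly) and drop the parenthetical false start in the converse.
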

\begin{proof}
First, note that $\SymmetricGroup_m$ is a subgroup of $\SymmetricGroup_n$ in the natural way that any permutation $\sigma\in\SymmetricGroup_m$ can be extended to a permutation $\sigma'\in \SymmetricGroup_n$ by mapping each index $i$ with $m+1\leq i\leq n$ to itself.

Now, assume that $M_1\in \SymmetricGroup_m \cdot M_2$, thus there exists some $\sigma \in \SymmetricGroup_m$ such that $M_1=\sigma \cdot M_2$.
Then, defining $\sigma'$ as the natural extension of $\sigma$ to $n$ indices, we have $\Embedding_{m,n}(M_1)=\sigma' \cdot \Embedding_{m,n}(M_2)$.

For the opposite direction, assume that $\Embedding_{m,n}(M_1)\in \SymmetricGroup_n\cdot \Embedding_{m,n}(M_2)$, thus there exists some $\sigma\in \SymmetricGroup_n$ such that $\Embedding_{m,n}(M_1)=\sigma\cdot\Embedding_{m,n}(M_2)$.
By \cref{item:essential_invariant1} of \cref{lem:essential_indices_invariant}, we have $\Essential(M_i)=\Essential(\Embedding_{m,n}(M_i))$, for each $i\in[2]$.
By \cref{item:essential_invariant2} of \cref{lem:essential_indices_invariant}, we have
\[
\Essential(M_1)=\Essential(\Embedding_{m,n}(M_1))=\Essential(\sigma\cdot \Embedding_{m,n}(M_2))=\sigma(\Essential(M_2)),
\]
thus $\sigma$ restricted to $\Essential(M_2)$ has image $\Essential(M_1)$.
This restricted $\sigma$ can then be extended to a mapping $\sigma'\colon[m]\to [m]$ that maps the non-essential indices of $M_2$ to non-essential indices of $M_1$.
Observe that $M_1 = \sigma' \cdot M_2$, and thus  $M_1\in \SymmetricGroup_m\cdot M_2$, as desired.
\qed
\end{proof}

\cref{lem:canonical_representative} and \cref{lem:orbits_of_essential_indices_are_unique} imply that for $m\leq n$ and a given number of essential indices $k\in[m]$, there is a well-defined bijection $\Bijection_{m,n}^k\colon \OrbitsOfSymmetriesWithEssential_m(k)\to \OrbitsOfSymmetriesWithEssential_n(k)$ defined as
$$\SymmetricGroup_m\cdot M\mapsto \SymmetricGroup_n\cdot \Embedding_{m,n}(M) \text{ for } M\in\GL(m,2) \text{ with } \lvert \varepsilon(m) \rvert = k$$
The next lemma states that $ \Bijection_{m,n}^k|_{\OrbitsOfSymmetriesOnSphereWithEssential_m(d, k)}$
defines a bijection $\OrbitsOfSymmetriesOnSphereWithEssential_m(d, k)\to \OrbitsOfSymmetriesOnSphereWithEssential_n(d, k)$,
as long as $m\geq 2d$.

\begin{restatable}{lemma}{lemdistancefixedaftertwod}\label{lem:distance_fixed_after_two_d}
Fix some $d\in \Nats$, and let $k\leq 2d$.
Consider any matrix $M\in \GL(k,2)$ with $\Essential(M)=[k]$.
For any $n\geq 2d$, we have that $\Distance(\Embedding_{k,2d}(M)) = d$ if and only if $\Distance(\Embedding_{k,n}(M))=d$.
\end{restatable}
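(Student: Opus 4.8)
The plan is to reduce everything to a single inequality, combining monotonicity of distance under embeddings with a ``compression'' argument driven by the bound on the number of essential indices. Write $a=\Distance(\Embedding_{k,2d}(M))$ and $b=\Distance(\Embedding_{k,n}(M))$, and record first that $\Embedding_{k,n}=\Embedding_{2d,n}\circ\Embedding_{k,2d}$ (check this on transvections and extend by the homomorphism property). The easy half is the inequality $b\le a$: for any $m\le n$ and any $N'\in\GL(m,2)$, applying $\Embedding_{m,n}$ transvection-by-transvection turns a minimal circuit for $N'$ into a circuit of the same length evaluating to $\Embedding_{m,n}(N')$, so $\Distance(\Embedding_{m,n}(N'))\le\Distance(N')$; taking $m=2d$ and $N'=\Embedding_{k,2d}(M)$ gives $b\le a$.

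The core of the proof is the converse inequality $a\le b$, under the hypothesis $b\le d$. Take a minimal circuit $C\in\TransvectionGenerators_n^*$ for $N:=\Embedding_{k,n}(M)$, of length $b\le d$. Each transvection uses two indices, so $C$ uses at most $2b\le 2d$ indices; moreover, by \cref{item:essential_indeces_are_necessary} of \cref{lem:essential_indices_are_essential}, $C$ uses every essential index of $N$, and $\Essential(N)=\Essential(M)=[k]$ by \cref{item:essential_invariant1} of \cref{lem:essential_indices_invariant} together with the hypothesis $\Essential(M)=[k]$. Hence the set $\mathcal I$ of indices used by $C$ satisfies $[k]\subseteq\mathcal I$ and $|\mathcal I|\le 2d$, so $|\mathcal I\setminus[k]|\le 2d-k=|[2d]\setminus[k]|$ and we may choose a permutation $\pi\in\SymmetricGroup_n$ that fixes $[k]$ pointwise and satisfies $\pi(\mathcal I)\subseteq[2d]$. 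Since $\pi$ fixes $[k]$ and therefore permutes $[n]\setminus[k]$, a short case analysis on the four blocks of $N$ using $(\pi\cdot N)[i,j]=N[\pi^{-1}(i),\pi^{-1}(j)]$ (\cref{eq:permutation_conjugation}) shows $\pi\cdot N=N$. Thus $\pi\cdot C$ — obtained by replacing each $T_{i,j}$ in $C$ by $T_{\pi(i),\pi(j)}$ — is again a length-$b$ circuit evaluating to $\pi\cdot N=N$, now using only indices in $[2d]$. Replacing each $T^{n}_{i,j}$ by $T^{2d}_{i,j}$ yields a word $C'\in\TransvectionGenerators_{2d}^*$ with $\Embedding_{2d,n}(\mathrm{eval}(C'))=\mathrm{eval}(\pi\cdot C)=N=\Embedding_{2d,n}(\Embedding_{k,2d}(M))$, and injectivity of $\Embedding_{2d,n}$ forces $\mathrm{eval}(C')=\Embedding_{k,2d}(M)$, so $a\le\mathrm{len}(C')=b$.

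Putting the two inequalities together, whenever $b\le d$ we get $a=b$. The lemma then follows by a one-line case split: if $b=d$ then $a=d$; and if $a=d$ then $b\le a=d$, so again $a=b=d$. I expect the only genuinely delicate step to be the verification that $\pi\cdot N=N$, i.e.\ that a permutation fixing the essential indices pointwise stabilizes the embedded matrix; the rest is bookkeeping with circuits and the already-established properties of essential indices and embeddings.
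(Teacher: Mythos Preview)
Your proposal is correct and follows essentially the same approach as the paper: monotonicity of distance under the embedding gives $b\le a$, and a permutation fixing $[k]$ pointwise that moves the at most $2d$ indices used by a minimal circuit into $[2d]$ gives the compression step $a\le b$ when $b\le d$, after which the biconditional follows by the same case split. Your write-up is in fact slightly more explicit than the paper's in justifying $\pi\cdot N=N$ and in invoking injectivity of $\Embedding_{2d,n}$.
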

\begin{proof}
First, observe that the embedding $\Embedding_{k,n}$ can only decrease distances, i.e., for all $M\in \GL(k,2)$,
\[
\Distance(\Embedding_{k,n}(M))\leq \Distance(M).
\numberthis\label{eq:embedding_decreases_distance}
\]
This is because any circuit $\prod_{p=1}^d T^k_{i_p,j_p}$ evaluating to $M$ induces a circuit
\[
\Embedding_{k,n}(M)=\Embedding_{k,n}\left(\prod_{p=1}^dT_{i_p,j_p}\right)=\prod_{p=1}^d\Embedding_{k,n}(T_{i_p,j_p})
\] 
evaluating to $\Embedding_{k,n}(M)$.

Let now $N=\Embedding_{k,n}(M)$,
and assume that $\delta(N) \leq d$ so that there exists an optimal circuit $C$ evaluating to $N$, of length $c\leq d$.
By \cref{item:essential_invariant1} of \cref{lem:essential_indices_invariant}, $\Essential(N) = \Essential(M) = [k]$, and by \cref{item:essential_indeces_are_necessary} of \cref{lem:essential_indices_are_essential}, $[k]$ is a subset of the indices used by $C$. Since $C$ uses at most $2d$ distinct indices, this means that we can define a permutation $\sigma \in \SymmetricGroup_n$ with the property that $\sigma|_{[k]}$ is the identity and such that the image under $\sigma$ of the indices used by $C$ is a subset of $[2d]$.
Then, $\sigma \cdot N = N$, and $\sigma$ transforms $C$ into a circuit $C'$ that uses only indices from $[2d]$. In other words, every transvection in $C'$ lies in the image of the embedding $\Embedding_{2d,n}$. Applying the well-defined inverse $\Embedding_{2d,n}^{-1} : \mathrm{im}(\Embedding_{2d, n}) \to \GL(2d, 2)$ to each transvection in $C'$ gives a circuit that evaluates to $\Embedding_{k,2d}(M)$ and which has the same length as $C$.

We thus conclude that if $\Distance(\Embedding_{k, n}(M))\leq d$, then $\Distance(\Embedding_{k, 2d}(M))\leq \Distance(\Embedding_{k, n}(M))$.
This particularly holds when $\Distance(\Embedding_{k, n}(M))=d$,
and due to \cref{eq:embedding_decreases_distance}, we have $\Distance(\Embedding_{k, 2d}(M))=d$.

On the other hand, if $\Distance(\Embedding_{k, 2d}(M))=d$, \cref{eq:embedding_decreases_distance} implies that $\Distance(\Embedding_{k, n}(M))\leq d$.
Then the previous paragraph again concludes that $\Distance(\Embedding_{k, 2d}(M))\leq \Distance(\Embedding_{k, n}(M))$, hence $\Distance(\Embedding_{k, n}(M))= d$.
\qed
\end{proof}
Our final lemma relates the size of the orbits $U\in \OrbitsOfSymmetriesWithEssential_{m}(k)$ and $\Bijection_{m,n}^k(U)\in \OrbitsOfSymmetriesWithEssential_{n}(k)$.

\begin{restatable}{lemma}{lemsizeoforbitispolynomial}\label{lem:size_of_orbit_is_polynomial}
Let $M\in \GL(m,2)$ such that $\lvert\Essential(M)\rvert=k$. Then for any $n\geq m$, we have
\[
\lvert\SymmetricGroup_n\cdot \Embedding_{m,n}(M)\rvert=\lvert\SymmetricGroup_m\cdot M\rvert\cdot \binom{m}{k}^{-1} \binom{n}{k}.
\]
\end{restatable}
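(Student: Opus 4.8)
The plan is to apply the orbit--stabilizer relation (\cref{eq:orbit_stabilizer_size}) at both levels $m$ and $n$, and to compare the two stabilizers. Write $k = \lvert\Essential(M)\rvert$. Since $\lvert\SymmetricGroup_m\cdot M\rvert = m!/\lvert\Stabilizer_m(M)\rvert$ and $\lvert\SymmetricGroup_n\cdot \Embedding_{m,n}(M)\rvert = n!/\lvert\Stabilizer_n(\Embedding_{m,n}(M))\rvert$, the claimed identity is equivalent, after expanding the binomial coefficients, to the stabilizer count formula
\[
\lvert\Stabilizer_n(\Embedding_{m,n}(M))\rvert = \lvert\Stabilizer_m(M)\rvert\cdot (n-k)!\cdot \big((m-k)!\big)^{-1}.
\]
So the real content is to understand which permutations in $\SymmetricGroup_n$ fix $\Embedding_{m,n}(M)$ under conjugation.

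First I would observe that, by \cref{item:essential_invariant1,item:essential_invariant2} of \cref{lem:essential_indices_invariant}, any $\sigma\in\Stabilizer_n(\Embedding_{m,n}(M))$ must satisfy $\sigma(\Essential(\Embedding_{m,n}(M))) = \Essential(\Embedding_{m,n}(M))$, i.e., $\sigma$ permutes the essential index set $\Essential(M)\subseteq[m]$ among itself (recall $\Essential(\Embedding_{m,n}(M)) = \Essential(M)$, with $\lvert\Essential(M)\rvert = k$). On the complementary indices $[n]\setminus\Essential(M)$, the matrix $\Embedding_{m,n}(M)$ looks like the identity: every such row is $e_i^\top$ and every such column is $e_i$. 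Using the entrywise conjugation formula \cref{eq:permutation_conjugation}, $(\PermutationMatrix_\sigma N\PermutationMatrix_\sigma^{-1})[i,j] = N[\sigma^{-1}(i),\sigma^{-1}(j)]$, I would check that the action of $\sigma$ on $\Embedding_{m,n}(M)$ depends only on how $\sigma$ restricts to $\Essential(M)$, and is insensitive to how $\sigma$ shuffles the non-essential indices (including those in $\{m+1,\dots,n\}$). Concretely, this gives a decomposition
\[
\Stabilizer_n(\Embedding_{m,n}(M)) \;\cong\; \{\,\rho\in\SymmetricGroup_{\Essential(M)} \mid \rho \text{ fixes } M\text{'s essential block}\,\}\;\times\;\SymmetricGroup_{[n]\setminus\Essential(M)},
\]
and the analogous statement holds at level $m$ with the second factor being $\SymmetricGroup_{[m]\setminus\Essential(M)}$. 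The first factor is literally the same group in both cases — it is $\Stabilizer_m(M)$ intersected with the permutations supported on $\Essential(M)$, but since any element of $\Stabilizer_m(M)$ must preserve $\Essential(M)$ and can move non-essential indices of $[m]$ freely, a clean way to phrase it is $\lvert\Stabilizer_m(M)\rvert = \lvert(\text{essential part})\rvert\cdot (m-k)!$. Dividing the two expressions for the stabilizer sizes then cancels the common essential part and leaves exactly $(n-k)!/(m-k)!$, which rearranges into the claimed binomial formula.

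The main obstacle I anticipate is making the ``insensitive to non-essential indices'' claim fully rigorous: one must verify, via \cref{eq:permutation_conjugation}, that if $\sigma$ and $\sigma'$ agree on $\Essential(M)$ then $\PermutationMatrix_\sigma\,\Embedding_{m,n}(M)\,\PermutationMatrix_\sigma^{-1} = \PermutationMatrix_{\sigma'}\,\Embedding_{m,n}(M)\,\PermutationMatrix_{\sigma'}^{-1}$, and conversely that a stabilizing $\sigma$ genuinely cannot mix an essential with a non-essential index (this uses invertibility of $M$ together with the fact that essential indices are exactly those with an off-diagonal $1$ in their row or column). Once the stabilizer factorizes as a direct product of ``essential part'' $\times$ ``full symmetric group on the non-essential complement,'' the counting is immediate and the rest is the routine algebra of cancelling factorials, which I would not spell out in detail. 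A minor bookkeeping point to handle is the edge case $k=0$ (then $M = \IdentityMatrix_m$, both orbits are singletons, and the formula reads $1 = 1\cdot\binom{m}{0}^{-1}\binom{n}{0}$), which the argument covers without modification.
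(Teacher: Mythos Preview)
Your proposal is correct and follows essentially the same orbit--stabilizer approach as the paper: both arguments identify the stabilizer of $\Embedding_{m,n}(M)$ as the direct product of an ``essential part'' (permutations of $\Essential(M)$ fixing the essential block) with the full symmetric group on the non-essential complement. The only organizational difference is that the paper first reduces to the case $m=k$ (via \cref{lem:canonical_representative}) and then computes just one stabilizer, whereas you handle general $m\geq k$ directly by factoring both $\Stabilizer_m(M)$ and $\Stabilizer_n(\Embedding_{m,n}(M))$ and cancelling the common essential factor---slightly more direct, but the same computation.
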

\begin{proof}
We first argue that it suffices to prove the statement for $m=k$.
Indeed, assuming it works for this case, by \cref{lem:canonical_representative}, we can find $M'\in \GL(k,2)$ such that $\Embedding_{k,m}(M')$ lies in the same orbit as $M$. 
Then, by applying \cref{lem:orbits_of_essential_indices_are_unique} to $M$ and $\Embedding_{k,m}(M')$, and by using the assumption that \cref{lem:size_of_orbit_is_polynomial} holds for $M'$, we get
\begin{align*}
\lvert\SymmetricGroup_n\cdot \Embedding_{m,n}(M)\rvert&=\lvert\SymmetricGroup_n\cdot \Embedding_{k,n}(M')\rvert \\ &=\lvert\SymmetricGroup_k\cdot M'\rvert\cdot\binom{n}{k} \\
&=\lvert\SymmetricGroup_m\cdot \Embedding_{k,m}(M')\rvert\cdot \binom{m}{k}^{-1}\cdot \binom{n}{k}\\
&=\lvert\SymmetricGroup_m\cdot M\rvert\cdot \binom{m}{k}^{-1}\cdot\binom{n}{k}.
\end{align*}

Suppose therefore $\Essential(M)=[m]$, and we will establish that
\[
\lvert\SymmetricGroup_n\cdot \Embedding_{m,n}(M)\rvert=\lvert\SymmetricGroup_m\cdot M\rvert\cdot \binom{n}{m}.
\]
By the orbit-stabilizer theorem (\cref{eq:orbit_stabilizer_size}), we have that 
\[
\lvert\SymmetricGroup_n\cdot \Embedding_{m,n}(M) \rvert = \frac{\lvert\SymmetricGroup_n\rvert}{\lvert\Stabilizer(\Embedding_{m,n}(M))\rvert} = \frac{n!}{\lvert\Stabilizer(\Embedding_{m,n}(M))\rvert}.
\]
We will now identify the stabilizer subgroup of $\Embedding_{m,n}(M)$, i.e., the set of permutations $\sigma\in\SymmetricGroup_n$ such that $\sigma\cdot \Embedding_{m,n}(M)=\Embedding_{m,n}(M)$.
Recall that $\Essential(M)=\Essential(\Embedding_{m,n}(M))$ (\cref{item:essential_invariant1} of \cref{lem:essential_indices_invariant}).
We consider two cases.

\SubParagraph{Case 1.}
Assume that $\sigma$ satisfies $\sigma(i)\in [m]$ if and only if $i\in [m]$, i.e., it only shuffles the essential indices of $\Embedding_{m,n}(M)$.
By \cref{item:essential_indeces_are_sufficient} of \cref{lem:essential_indices_are_essential}, there exists a circuit  $C=\prod_{k=1}^dT_{i_k,j_k}$ that evaluates to $\Embedding_{m,n}(M)$ and uses only the essential indices of $\Embedding_{m,n}(M)$.
By the definition of the group action, we have
\[
\sigma\cdot \left( \prod_{k=1}^dT_{i_k,j_k} \right) =  \prod_{k=1}^dT_{\sigma(i_k),\sigma(j_k)}.
\numberthis\label{eq:size_of_orbit_is_polynomial_circuit}
\]
By our assumption on $\sigma$, the circuit on the right-hand side of \cref{eq:size_of_orbit_is_polynomial_circuit} only uses the essential indices of $\Embedding_{m,n}(M)$.
Hence, we can take the preimage of $\Embedding_{m,n}$ to obtain a circuit evaluating to some element in $\GL(m,2)$.
It is therefore not hard to see that $\sigma$ is a stabilizer of $\Embedding_{m,n}(M)$ if and only if $\sigma'$ is a stabilizer of $M$,
where $\sigma'$ is obtained by ignoring the indices $m+1, \dots, n$ in $\sigma$.

\SubParagraph{Case 2.}
Assume that there is at least one essential index $i\in [m]$ such that $\sigma(i)\not \in [m]$, i.e., $\sigma$ maps an essential index of $\Embedding_{m,n}(M)$ to a non-essential index of $\Embedding_{m,n}(M)$.
Then $\sigma$ cannot be a stabilizer of $\Embedding_{m,n}(M)$, since we can again consider \cref{eq:size_of_orbit_is_polynomial_circuit}, and observe that at least one essential index of $M$ would be missing, meaning that $C$ cannot evaluate to $M$, due to \cref{item:essential_indeces_are_necessary} of \cref{lem:essential_indices_are_essential}.

Since the two cases are exhaustive, we conclude that  $\Stabilizer(\Embedding_{m,n}(M))$ is isomorphic to $\Stabilizer(M)\times \SymmetricGroup_{n-m}$ (i.e., only case 1 applies), to arrive at
\begin{align*}
\lvert\SymmetricGroup_n\cdot \Embedding_{m,n}(M)\rvert &= \frac{n!}{\lvert\Stabilizer(M)\times \SymmetricGroup_{n-m}\rvert} = \frac{n!}{\lvert \Stabilizer(M)\rvert \cdot (n-m)!} \\
&= \lvert\SymmetricGroup_m \cdot M|\frac{n!}{m!\cdot (n-m)!}
\end{align*}
where the last step is obtained by applying the orbit-stabilizer theorem (\cref{eq:orbit_stabilizer_size}) stating that $m!=\lvert\SymmetricGroup_m\rvert = \lvert\SymmetricGroup_m\cdot M\rvert \cdot \lvert\Stabilizer(M)\rvert$.
\qed
\end{proof}

\begin{restatable}{lemma}{witnessdist2ess2d}\label{lem:witness_dist_2_ess_2d}
    For any distance $d\geq 0$ the matrix
    $$M=T_{1,2}T_{3,4} \cdots T_{2d-1,2d}\in \GL(2d,2)$$
    has $2d$ essential indices and distance $d$.
\end{restatable}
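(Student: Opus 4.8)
The plan is to read the statement off directly from the structural results on essential indices established earlier in this subsection. First I would observe that the transvections $\TransvectionMatrix_{1,2}^{2d},\TransvectionMatrix_{3,4}^{2d},\dots,\TransvectionMatrix_{2d-1,2d}^{2d}$ act on pairwise disjoint pairs of indices, hence they commute and their product $M$ is block diagonal with respect to the pairing $\{1,2\},\{3,4\},\dots,\{2d-1,2d\}$, each $2\times 2$ block being $\bigl(\begin{smallmatrix}1&1\\0&1\end{smallmatrix}\bigr)$. In particular every block is invertible, so $M\in\GL(2d,2)$ as claimed. Moreover, for each $k\in[d]$ we have $M[2k-1,2k]=1$, which witnesses that both $2k-1$ and $2k$ are essential indices of $M$; ranging over all $k$ gives $\Essential(M)=[2d]$, so $\lvert\Essential(M)\rvert=2d$.

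For the distance, the upper bound is immediate: the defining expression $M=\TransvectionMatrix_{1,2}\TransvectionMatrix_{3,4}\cdots\TransvectionMatrix_{2d-1,2d}$ is a circuit in $\TransvectionGenerators_{2d}^{*}$ of length $d$ evaluating to $M$, so $\Distance(M)\le d$. For the matching lower bound I would invoke \cref{lem:essential_indices_at_distance}, which gives $\lvert\Essential(M)\rvert\le 2\Distance(M)$; combined with $\lvert\Essential(M)\rvert=2d$ from the previous paragraph, this yields $d\le \Distance(M)$. Hence $\Distance(M)=d$.

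There is no genuine obstacle here: the statement is essentially a corollary of \cref{lem:essential_indices_are_essential} (via \cref{lem:essential_indices_at_distance}), and the only point requiring mild care is the bookkeeping that every index of $[2d]$ occurs in an off-diagonal position of $M$, which is transparent from the block-diagonal form above. It is worth isolating as a lemma nonetheless, since it exhibits a concrete element certifying that $\OrbitsOfSymmetriesOnSphereWithEssential_{2d}(d,2d)$ is non-empty, which is exactly what is needed to conclude that the polynomial in \cref{thm:sphere_polynomial_size_symmetry_isometry} has degree exactly $2d$.
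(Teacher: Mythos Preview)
Your proof is correct and follows essentially the same approach as the paper: both read off $\Essential(M)=[2d]$ from the entries $M[2k-1,2k]=1$, take the upper bound $\Distance(M)\le d$ from the defining circuit, and obtain the lower bound from the fact that any circuit must use all $2d$ essential indices (you invoke the packaged form in \cref{lem:essential_indices_at_distance}, while the paper cites \cref{item:essential_indeces_are_necessary} of \cref{lem:essential_indices_are_essential} directly). Your additional remarks on the block-diagonal structure and the role of the lemma in pinning down the degree of the polynomial are accurate and helpful but not strictly needed.
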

\begin{proof}
    Evaluating the circuit that defines $M$ reveals $M[2i-1,2i]=1$ for every $i\in [d]$. Hence $\varepsilon(M)=[2d]$. By the definition of $M$, $\delta(M)\leq d$. On the other hand, by Item (1) of \cref{lem:essential_indices_are_essential} any circuit evaluating to $M$ must be of length at least $d$, meaning $\delta(M)\geq d$.
    \qed
\end{proof}

\Paragraph{Working with the full isometry group.}
Here we provide more details behind \cref{thm:sphere_polynomial_size_full_isometry}.
First, observe that the transpose-inverse map doesn't change essential indices. 
This is true, since on the generators this maps sends $T_{i,j}$ to $T_{j,i}$ so this is a consequence of \cref{item:essential_indeces_are_sufficient} of \cref{lem:essential_indices_are_essential}.

The set $\OrbitsOfSymmetriesWithEssential'(m)$ contains orbits of elements with $m$ essential indices, since \cref{lem:canonical_representative} still applies.
Hence, $\OrbitsOfSymmetriesOnSphereWithEssential'_n(d,m) = \OrbitsOfSymmetriesOnSphere_n(d)\cap \OrbitsOfSymmetriesWithEssential'_n(m)$ is a new set of orbits of the sphere at distance $d$ containing matrices with $m$ essential indices.
The following lemma is similar to \cref{lem:orbits_of_essential_indices_are_unique}, this time with respect to the group $\CyclicGroup_2$.

\begin{restatable}{lemma}{lemcyclicorbitsofmessentialindicesareunique}\label{lem:cyclic_orbits_of_m_essential_indices_are_unique}
Let $m\leq n$ and $M_1,M_2\in \GL(m,2)$. 
Then $M_1\in \CyclicGroup_2\cdot M_2$ if and only if $\Embedding_{m,n}(M_1)\in \CyclicGroup_2\cdot \Embedding_{m,n}(M_2)$.
\end{restatable}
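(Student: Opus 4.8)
The plan is to reduce the claim to two elementary facts about the embedding $\Embedding_{m,n}$. The first is that $\Embedding_{m,n}$ is injective, which is already observed in the text. The second is that $\Embedding_{m,n}$ commutes with the transpose-inverse map, i.e., $\Embedding_{m,n}\big((M^\top)^{-1}\big)=\big(\Embedding_{m,n}(M)^\top\big)^{-1}$ for all $M\in\GL(m,2)$. This follows from the block form of $\Embedding_{m,n}$: transposition acts within the upper-left $m\times m$ block and fixes the $\IdentityMatrix_{n-m}$ block, so $\Embedding_{m,n}(M)^\top=\Embedding_{m,n}(M^\top)$, and since $\Embedding_{m,n}$ is a group homomorphism we also have $\Embedding_{m,n}(M)^{-1}=\Embedding_{m,n}(M^{-1})$; composing these two identities gives the claim. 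Equivalently, $\Embedding_{m,n}(\xi\cdot M)=\xi\cdot\Embedding_{m,n}(M)$ for every $\xi\in\CyclicGroup_2$ and $M\in\GL(m,2)$, i.e., $\Embedding_{m,n}$ is a morphism of $\CyclicGroup_2$-sets.

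With these two facts the lemma is immediate. For the forward direction, suppose $M_1\in\CyclicGroup_2\cdot M_2$, so $M_1=\xi\cdot M_2$ for some $\xi\in\CyclicGroup_2$; then $\Embedding_{m,n}(M_1)=\Embedding_{m,n}(\xi\cdot M_2)=\xi\cdot\Embedding_{m,n}(M_2)\in\CyclicGroup_2\cdot\Embedding_{m,n}(M_2)$. For the converse, suppose $\Embedding_{m,n}(M_1)=\xi\cdot\Embedding_{m,n}(M_2)$ for some $\xi\in\CyclicGroup_2$; by the intertwining property the right-hand side equals $\Embedding_{m,n}(\xi\cdot M_2)$, so injectivity of $\Embedding_{m,n}$ forces $M_1=\xi\cdot M_2\in\CyclicGroup_2\cdot M_2$.

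Since $\CyclicGroup_2$ has only two elements and the argument is purely formal, there is no real obstacle; the only point needing a moment of care is that the transpose-inverse maps on $\GL(m,2)$ and on $\GL(n,2)$ — defined by the same formula — are genuinely intertwined by $\Embedding_{m,n}$, which is exactly the block-structure computation above. Note also that, unlike \cref{lem:orbits_of_essential_indices_are_unique}, this lemma makes no reference to essential indices; it is the $\CyclicGroup_2$-analogue of that lemma and will be combined with it (and with the fact that the $\SymmetricGroup_n$- and $\CyclicGroup_2$-actions commute, \cref{lem:commutativegroupaction}) to transfer the polynomial-size argument to the full isometry group $\SymmetricGroup_n\times\CyclicGroup_2$.
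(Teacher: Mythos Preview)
Your proof is correct and follows essentially the same approach as the paper: both reduce the lemma to the observation that the transpose-inverse map commutes with $\Embedding_{m,n}$, and both justify this by noting that transposition trivially commutes (block structure) and that inversion commutes because $\Embedding_{m,n}$ is a group homomorphism. Your write-up is more explicit in spelling out the two directions and invoking injectivity for the converse, but the core idea is identical.
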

\begin{proof}
It suffices to show the transpose-inverse map and $\Embedding_{m,n}$ commute. 
Transposing trivially commutes with $\Embedding_{m,n}$ and since $\Embedding_{m,n}$ is a group homomorphism, so does taking inverses.
\qed
\end{proof}

\thmspherepolynomialsizefullisometry*
\begin{proof}
Due to \cref{lem:cyclic_orbits_of_m_essential_indices_are_unique}, we obtain an induced bijection $$( \SymmetricGroup_m\times\CyclicGroup_2)\cdot M\mapsto ( \SymmetricGroup_n\times\CyclicGroup_2)\cdot \Embedding_{m,n}(M).$$
Consider an orbit $( \SymmetricGroup_n\times\CyclicGroup_2)\cdot M$ of some element $M\in \GL(n,2)$. 
By \cref{lem:commutativegroupaction} our actions commute which tells us that we can think of this as first finding the orbit from acting with $\CyclicGroup_2$ then compute the orbits $\SymmetricGroup_n\cdot M$ and $\SymmetricGroup_n\cdot (M^\top)^{-1}$, and take their union.

Let $M\in U$ be an arbitrary representative of an orbit in $\GL(n,2)/\SymmetricGroup_n$. 
If $-1\cdot M\in U$, then $\SymmetricGroup_n\cdot (-1\cdot M)=\SymmetricGroup_n\cdot M$, so the two orbits are identical, and thus acting by transpose-inverse does not add any new elements.
On the other hand if $-1\cdot M\not\in U$ (for all representatives $M$) then $\SymmetricGroup_n\cdot (-1\cdot M)\cap \SymmetricGroup_n\cdot M=\emptyset$ and $\lvert\SymmetricGroup_n\cdot (-1\cdot M)\rvert=\lvert\SymmetricGroup_n\cdot M\rvert$ since taking the transpose-inverse is an injective operation.
\\For an orbit $U\in \GL(n,2)/\SymmetricGroup_n$, define $\InverseTransposeOrbitCount(U)=1$ if $-1\cdot M\in U$ for some representative $M\in U$ and $\InverseTransposeOrbitCount(U)=2$ otherwise. 
The above observations show that
\begin{equation}\label{eq:kappa_eq1}\lvert( \SymmetricGroup_n\times\CyclicGroup_2)\cdot M \rvert = \InverseTransposeOrbitCount(U)\cdot \lvert U \rvert \text{ where } U=\SymmetricGroup_n\cdot M \end{equation}
for all $M \in \GL(n, 2)$.

Note also that for $m \leq n$ and any $M\in\GL(m,2)$, \cref{lem:orbits_of_essential_indices_are_unique} tells us that $-1\cdot M\in \SymmetricGroup_m\cdot M$ if and only if $\Embedding_{m,n}(-1\cdot M)\in \SymmetricGroup_n\cdot \Embedding_{m,n}(M)$, and in the proof of \cref{lem:cyclic_orbits_of_m_essential_indices_are_unique} we saw that $-1$ and $\phi_{m,n}$ commute, so we have
\begin{equation}\label{eq:kappa_eq2}
    \InverseTransposeOrbitCount(\SymmetricGroup_m\cdot M)=\InverseTransposeOrbitCount(\SymmetricGroup_n\cdot \Embedding_{m,n}(M)).
\end{equation}
Suppose that $\lvert\varepsilon(M)\rvert=k$.
Then
\begin{align*}
    \lvert( \SymmetricGroup_n\times\CyclicGroup_2)\cdot \phi_{m,n}(M)\rvert&=\InverseTransposeOrbitCount(\SymmetricGroup_n\cdot \phi_{m,n}(M))\cdot \lvert\SymmetricGroup_n\cdot \Embedding_{m,n}(M)\rvert\\
   &=\InverseTransposeOrbitCount(\SymmetricGroup_m\cdot M)\cdot \lvert\SymmetricGroup_m\cdot M\rvert\cdot \binom{m}{k}^{-1}\binom{n}{k}\\
    &=\lvert( \SymmetricGroup_m\times\CyclicGroup_2)\cdot M\rvert\cdot \binom{m}{k}^{-1}\binom{n}{k},
\end{align*}
where the first and final equalities follow from \cref{eq:kappa_eq1}, and the second equality follows from \cref{eq:kappa_eq2} and \cref{lem:size_of_orbit_is_polynomial}.
\qed
\end{proof}

\newpage
\section{Proofs of \cref{sec:permutations}}\label{sec:app_permutations}

\thmpermutationdistance*
\begin{proof}
First, recall \cref{lem:permutation_distance_upper_bound}, which states that $\Distance_n(\PermutationMatrix_{\sigma})\leq 3(n-\NumPermutationCycles(\sigma))$ for any permutation $\sigma\in\SymmetricGroup_n$.
Thus, \cref{conj:permutations} boils down to the lower bound
\[
\Distance_n(\PermutationMatrix_{\sigma}) \geq 3(n-\NumPermutationCycles(\sigma)).
\numberthis\label{eq:permutation_distance_conjecture_lower_bound}
\]

Take any $\sigma\in \SymmetricGroup_n$, let its cycle type be $(n_1,...,n_p)$, and assume that \cref{eq:permutation_distance_conjecture_lower_bound} is violated, i.e., $\PermutationMatrix_{\sigma}$ can be written as a product of $d$ transvections, with $d<3(n-p)$.
We will show there exists a long cycle whose distance is shorter than $3(n-1)$, thereby also violating \cref{eq:permutation_distance_conjecture_lower_bound}.

Indeed, note that two disjoint cycles $(a_1 a_2 ... a_k)$ and $(b_1 b_2 ... b_l)$ can be joined to form one cycle $(a_1 a_2 ... a_k b_1 b_2 ... b_l)$, by multiplying with the transposition $(a_1 b_1)$ from the left, i.e.,
\[
(a_1 b_1)(a_1 a_2 ... a_k)(b_1 b_2 ... b_l) = (a_1 a_2 ... a_k b_1 b_2 ... b_l).
\]
Hence, we can 'glue' together the $p$ disjoint cycles of $\sigma$ using $p-1$ transpositions, to form a long cycle $\tau$.
Since any transposition matrix $\PermutationMatrix_{(ij)}$ is the product of three transvections (\cref{item:transvection_identities5} of \cref{lem:transvection_identities}), $\PermutationMatrix_{\tau}$ can be written using 
\[
3(p-1)+d<3(p-1)+3(n-p)=3(n-1)
\]
transvections.
Thus $\Distance(\PermutationMatrix_{\tau}) < 3(n-1)$, as desired.
\qed
\end{proof}
\newpage
\section{Experimental Details}
\label{app:tables}

Our experiments were run on two machines, a fast 128-core machine, to speed up computations, and a slower 40-core machine
with 1.5TB for cases where memory usage was the bottleneck.
The specifications of these machines are:
\begin{itemize}
    \item A fast 128-core machine with 768GB of internal memory; each core running at a frequency of 3.1GHz (AMD EPYC 9554).
    \item A large 40-core machine with 1.5TB of internal memory; each core running at a frequency of 2.1GHz (Intel Xeon Gold 6230).
\end{itemize}

As stated in the main paper, we conduct a BFS search in
$G_n/{\SymmetricGroup_n}$, storing one representative per orbit.
For $n=1,\ldots,8$, matrices can be stored in a single 64-bit word.
We enumerate the representatives in each BFS level, generate their successors, compute their representatives, and test those against the previous and the current BFS level. If they are new, we store them in the next BFS level.

We compute unique representatives as canonical isomorphic graphs, using the \texttt{nauty}-software~\cite{McKay2014}.
We also keep a global count of the sizes of all orbits that we encounter (derived from the number of automorphisms reported by \texttt{nauty}).
We modified \texttt{nauty}'s code to count in 64-bit integers rather than double floats, in order to avoid approximation errors, while testing that we didn't overflow.

To achieve parallel speedup, we store all representatives of a level in a lock-free concurrent hash table, following the design in \cite{DBLP:conf/fmcad/LaarmanPW10} and using an implementation from \cite{DBLP:conf/nfm/Berg21}. The elements of the level are enumerated and processed in parallel (i.e., each worker takes some batches from the current BFS level), relying on OpenMP.

\subsection*{Cayley Graph for $n=1,\ldots,7$}

We could enumerate the full quotient graph for $n=1,\ldots,7$.
We report the size of each sphere, $\lvert\Sphere_n(d)\rvert$
(\cref{tab:levelsizes}) and the size of the stored levels,
$\lvert\Sphere_n(d)/{\SymmetricGroup_n}\rvert$ (\cref{tab:orbits}) for all
relevant distances $d$. As a sanity check on the implementation, \cref{tab:levelsizes}
also shows that for $n=1,..,7$, the sum of the sphere sizes corresponds
with the size of the group $\GL(n,2)$.

For $N=6$, the computation took merely 3s on the 128-core machine.
The full computation for $N=7$ took 8483s (less than 2.5 hours) on the 128-core machine.
The largest level contains 
13,616,116,190 orbits ($d=14$, \cref{tab:orbits}), stored in a concurrent hash-table of size $2^{35}$ nodes.

\subsection*{Cayley Graph for $n=8$ (up to $d=12$)}
For $n=8$, we could compute all BFS levels up to $d=12$.
The corresponding sphere
$\Sphere_8(12)$ contains 1,342,012,729,372,308 elements (rightmost column in \cref{tab:levelsizes}), partitioned in 
33,719,514,377 orbits (rightmost column in \cref{tab:orbits}). We needed a machine 
with 1.5TB internal memory to store this large BFS level. This computation took 1.5 hours on 40 cores.

We also conducted a bi-directional search, between the identity matrix $\IdentityMatrix_8$ and the long permutation cycle 
$\lambda=(1\, 2\, 3\, 4\, 5\, 6\, 7\, 8)$. This search terminated at $F_{12}$, after 12 forward steps from $\IdentityMatrix_8$ and at $B_{9}$, 9 backward steps from $\PermutationMatrix_\lambda$. So their distance is indeed 21 steps.

$F_{12}$ contains 33,719,514,377 orbits (as above), and $B_9$ contains 65,936,050,032 orbits, stored in a concurrent hash-table of $2^{36}$ nodes.
This bidirectional search took 5.5 hours on the 1.5TB/40-core machine.

\subsection*{Computations for $n=20$ (up to $d=10$)}
We computed 10 levels for $n=20$, in order to compute the coefficients of the 20-degree polynomial $f_{10}(n)$.
The sphere at $d=10$ contains 743,188,850 orbits. Since $n=20$ has a large symmetry group, the orbits themselves can be very large, representing in total 16,798,138,692,326,241,596 matrices ($\approx 16.8\times 10^{18}$).
This computation took 2286s (less than 40 minutes) on the 40 core machine. The computed coefficients of $f_{10}(n)$ are reported in the rightmost column of \cref{tab:polynomials}.

\begin{table}[ht] 
\caption{Size of spheres, $|\Sphere_n(d)|$ for $n=1,\ldots,7$:
How many different \cnot \ circuits on $n$
qubits require exactly $d$ \cnot\ gates ($d=0$ corresponds to
the empty circuit, for $I_n$).
We also include the partial data for $n=8$ up to $d=12$.
At the bottom, we compare the sum of the sphere sizes to
the size of the group $\GL(n,2)$.
\medskip
\label{tab:levelsizes}}
\centering
\scalebox{0.77}{
\begin{tabular}{|r|ccccccc| c|}
\hline
  & \multicolumn{7}{|c|}{$n$} & \\
$d$ & 1 & 2 & 3 & 4 & 5 & 6 & 7 & 8\\
\hline
0 & 1 & 1 & 1 & 1 & 1 & 1 & 1 & 1 \\
1 & - & 2 & 6 & 12 & 20 & 30 & 42 & 56 \\
2 & - & 2 & 24 & 96 & 260 & 570 & 1092 & 1904 \\
3 & - & 1 & 51 & 542 & 2570 & 8415 & 22141 & 50316 \\
4 & - & - & 60 & 2058 & 19680 & 101610 & 375480 & 1121820 \\
5 & - & - & 24 & 5316 & 117860 & 1026852 & 5499144 & 21927640 \\
6 & - & - & 2 & 7530 & 540470 & 8747890 & 70723842 & 383911500 \\
7 & - & - & - & 4058 & 1769710 & 61978340 & 801887394 & 6086458100 \\
8 & - & - & - & 541 & 3571175 & 355193925 & 7978685841 & 87721874450 \\
9 & - & - & - & 6 & 3225310 & 1561232840 & 68818316840 & 1148418500236 \\
10 & - & - & - & - & 736540 & 4753747050 & 503447045094 & 13587845739286 \\
11 & - & - & - & - & 15740 & 8111988473 & 3008371364033 & 143890218187240 \\
12 & - & - & - & - & 24 & 4866461728 & 13735773412074 & 1342012729372308 \\
13 & - & - & - & - & - & 437272014 & 42362971639322 & ??? \\
14 & - & - & - & - & - & 949902 & 68493002803224 & ??? \\
15 & - & - & - & - & - & 120 & 33871696277888 & ??? \\
16 & - & - & - & - & - & - & 1796520274568 & ??? \\
17 & - & - & - & - & - & - & 534600540 & ??? \\
18 & - & - & - & - & - & - & 720 & ??? \\
\hline
Sum & 1 & 6 & 168 & 20160 & 9999360 & 20158709760 & 163849992929280 & (1500733427144857)\\
$|\GL(n, 2)|$ & 1 & 6 & 168 & 20160 & 9999360 & 20158709760 & 163849992929280 & 5348063769211699200\\
\hline
\end{tabular}
}
\end{table}

\begin{table}[ht]
\caption{$\lvert\Sphere_n(d)/{\SymmetricGroup_n}\rvert$: The number of orbits for $n=1,\ldots,7$ at level $d$. This corresponds to the size of the physically stored BFS levels. We also include the partial data for $n=8$ up to $d=12$. 
The bold entries at $n=2d$ indicate where the number of orbits becomes stable for a particular distance $d$.
\medskip
\label{tab:orbits}}
\centering
\scalebox{1}{
\begin{tabular}{|r|ccccccc|c|}
\hline
  & \multicolumn{7}{|c|}{$n$} & \\
$d$ & 1 & 2 & 3 & 4 & 5 & 6 & 7 & 8\\
\hline
0 & 1 & 1 & 1 & 1 & 1 & 1 & 1 & 1 \\
1 & - & {\bf 1} & 1 & 1 & 1 & 1 & 1 & 1 \\
2 & - & 1 & 5 & {\bf 6} & 6 & 6 & 6 & 6 \\
3 & - & 1 & 9 & 27 & 31 & {\bf 32} & 32 & 32 \\
4 & - & - & 12 & 94 & 200 & 228 & 232 & {\bf 233} \\
5 & - & - & 4 & 238 & 1069 & 1767 & 1941 & 1969 \\
6 & - & - & 1 & 334 & 4740 & 13425 & 18618 & 19855 \\
7 & - & - & - & 181 & 15198 & 90507 & 181632 & 223299 \\
8 & - & - & - & 25 & 30461 & 506752 & 1687466 & 2653755 \\
9 & - & - & - & 1 & 27333 & 2202850 & 14102906 & 31414389 \\
10 & - & - & - & - & 6236 & 6672137 & 101627779 & 353662338 \\
11 & - & - & - & - & 134 & 11342151 & 602662335 & 3657182348 \\
12 & - & - & - & - & 1 & 6786712 & 2741492657 & 33719514377 \\
13 & - & - & - & - & - & 609993 & 8436220042 & ??? \\
14 & - & - & - & - & - & 1359 & 13616116190 & ??? \\
15 & - & - & - & - & - & 1 & 6726326530 & ??? \\
16 & - & - & - & - & - & - & 356621214 & ??? \\
17 & - & - & - & - & - & - & 106744 & ??? \\
18 & - & - & - & - & - & - & 1 & ??? \\
\hline
Sum & 1 & 4 & 33 & 908 & 85411 & 28227922 & 32597166327 & (37764672603)\\
\hline
\end{tabular}
}
\end{table}

\end{document}